\newcommand{\PatternTOne}{T1}
\newcommand{\PatternTTwo}{T2}
\newcommand{\PatternTThree}{T3}
\newcommand{\PatternTFour}{T4}
\newcommand{\PatternTFive}{T5}
\newcommand{\PatternPThree}{Q1}
\newcommand{\PatternQThree}{Q2}
\newcommand{\PatternRFive}{R5}
\newcommand{\PatternRSeven}{R7}
\newcommand{\PatternRSevenMinus}{R7-}
\newcommand{\PatternREight}{R8}
\newcommand{\PatternMHat}{\^M}
\newcommand{\PatternVTwo}{V$_2$}
\newcommand{\CSP}[1]{{\sc CSP(}$\overline{\mbox{#1}}${\sf)}}
\newcommand{\ignore}[1]{}
\def\mvar{X}
\def\mdom{D}
\def\mcons{C}
\def\ixv{I_{xv}}
\def\pxv{(P_{xv})}
\def\spxv{S_{\pxv}}
\def\sipxv{S^I_{\pxv}}
\newcommand{\dxv}[1]{D_{xv}(#1)}
\newtheorem{theorem}{Theorem}
\newtheorem*{theorem*}{Theorem}
\newtheorem{lemma}{Lemma}
\newtheorem{proposition}{Proposition}
\theoremstyle{definition}
\newtheorem{remark}{Remark}
\begin{document}

\title{On Singleton Arc Consistency for CSPs\\ Defined by Monotone
Patterns\footnote{An extended abstract of this work appeared in Proceedings of the
35th International Symposium on Theoretical Aspects of Computer Science
(STACS)~\cite{cccz18:stacs}. The authors were supported by EPSRC grant EP/L021226/1 and ANR-11-LABX-0040-CIMI within the program ANR-11-IDEX-0002-02. Stanislav \v{Z}ivn\'y was supported by a Royal Society University Research Fellowship. This project has received funding from the European Research Council (ERC) under the European Union's Horizon 2020 research and innovation programme (grant agreement No 714532). The paper reflects only the authors' views and not the views of the ERC or the European Commission. The European Union is not liable for any use that may be made of the information contained therein.}}

\author{
Cl\'ement Carbonnel\\
University of Oxford\\
\texttt{clement.carbonnel@cs.ox.ac.uk}
\and
David A. Cohen\\
Royal Holloway, University of London\\
\texttt{d.cohen@rhul.ac.uk}
\and
Martin C. Cooper\\
IRIT, University of Toulouse III\\
\texttt{cooper@irit.fr}
\and
Stanislav \v{Z}ivn\'{y}\\
University of Oxford\\
\texttt{standa.zivny@cs.ox.ac.uk}
}

\date{}
\maketitle

\begin{abstract}

Singleton arc consistency is an important type of local consistency
which has been recently shown to solve all constraint satisfaction
problems (CSPs) over constraint languages of bounded width. We aim to
characterise all classes of CSPs defined by a forbidden pattern that
are solved by singleton arc consistency and closed under removing
constraints. We identify five new patterns whose absence ensures
solvability by singleton arc consistency, four of which are provably maximal
and three of which generalise 2-SAT. Combined with simple
counter-examples for other patterns, we make significant
progress towards a complete classification.

\end{abstract}

\section{Introduction}
\label{sec:Introduction} The constraint satisfaction problem (CSP) is
a declarative paradigm for expressing computational problems. An
instance of the CSP consists of a number of variables to which we
need to assign values from some domain.  Some subsets of the
variables are constrained in that they are not permitted to take all
values in the product of their domains.  The scope of a constraint is the set of variables whose
values are limited by the constraint, and the constraint relation is the set of permitted assignments to the variables of the scope.  A solution to a
CSP instance is an assignment of values to variables in such a
way that every constraint is satisfied, i.e. every scope is assigned to an element of the constraint relation.

The CSP has proved to be a useful technique for modelling in many
important application areas from manufacturing to process
optimisation, for example planning and
scheduling optimisation~\cite{Ozturk201626}, resource
allocation~\cite{lim2014constraint}, job shop
problems~\cite{Cheng1997} and workflow
management~\cite{Senkul2005399}. Hence much work has been done on
describing useful classes of constraints~\cite{globcon:2017} and
implementing efficient algorithms for processing
constraints~\cite{Brailsford1999557}.  Many constraint solvers use a
form of backtracking where successive variables are assigned values
that satisfy all constraints.  In order to mitigate the exponential
complexity of backtracking some form of pre-processing is always
performed.  These pre-processing techniques identify values that cannot be part of any solution in an effective way
and then propagate the effects of removing these values throughout
the problem instance.  Of key importance amongst these pre-processing
algorithms are the relatives of arc consistency propagation including
generalised arc consistency (GAC) and singleton arc consistency
(SAC). Surprisingly
there are large classes~\cite{Cohen17:GAC,cz17:lmcs,chen13:sac,kozik16:lics} of the CSP for which GAC or SAC are decision
procedures: after establishing consistency if every variable still
has a non-empty domain then the instance has a solution.  

More generally, these results fit into the wider area of research aiming to identify sub-problems of the CSP for which certain polynomial-time algorithms are decision procedures. Perhaps the most natural ways to restrict the CSP is to limit the constraint relations that we allow or to limit the structure of (the hypergraph of) interactions of the constraint scopes.  A set of allowed constraint relations is called a constraint language.  A subset of the CSP defined by limiting the scope interactions is called a structural class. 

There has been considerable success in identifying tractable
constraint languages, recently yielding a full classification of the complexity of finite constraint languages~\cite{focs/Bulatov17a,focs/Zhuk17}.  Techniques from universal algebra have been essential in this work as the complexity of a constraint language is characterised by a particular algebraic
structure~\cite{Bulatov:2005}. The two most important algorithms for solving the CSP over tractable constraint languages are local consistency and the few subpowers algorithm~\cite{Bulatov:2006,idziak10:sicomp}, which generalises ideas from group theory. A necessary and sufficient condition for solvability by the few subpowers algorithm was identified in~\cite{idziak10:sicomp,berman2010varieties}. The set of all constraint languages decided by local consistency was later described by Barto and Kozik~\cite{Barto14:jacm} and independently by Bulatov~\cite{bulatov2009bounded}. Surprisingly, all such languages are in fact decided by establishing singleton arc consistency~\cite{kozik16:lics}.

A necessary condition for the tractability of a structural class with
bounded arity is that it has bounded treewidth modulo homomorphic equivalence~\cite{Grohe07:jacm}. In all such cases we decide an instance by establishing $k$-consistency, where $k$ is the treewidth of the core. It was later shown that the converse holds: if a class of structures does not have treewidth $k$ modulo homomorphic equivalence then it is not solved by $k$-consistency~\cite{atserias2007power}, thus fully characterising the strength of consistency algorithms for structural restrictions. Both language-restricted CSPs and CSPs of bounded treewidth are \emph{monotone} in the sense that we can relax (remove constraints from) any CSP instance without affecting its membership in such a class.

Since our understanding of consistency algorithms for language and structural classes is so well advanced there is now much interest in so called hybrid classes, which are neither definable by restricting the language nor by limiting the structure. For the binary CSP, one popular mechanism for defining hybrid classes follows the considerable success of mapping the complexity landscape for graph problems in the absence of certain induced subgraphs or graph minors. Here, hybrid (binary) CSP problems are defined by forbidding a fixed set of substructures (\emph{patterns}) from occurring in the instance~\cite{cccms12:jair}. This framework is particularly useful in algorithm analysis, since it allows us to identify precisely the local properties of a CSP instance that make it impossible to solve via a given polynomial-time algorithm. This approach has recently been used to obtain a pattern-based characterisation of solvability by arc consistency~\cite{cz17:lmcs}, a detailed analysis of variable elimination rules~\cite{ccez15:jcss} and various novel tractable classes of CSP~\cite{Cooper15:dam,cooper16:ai}.

Singleton arc consistency is a prime candidate to study in this framework since it is one of the most prominent incomplete polynomial-time algorithms for CSP and the highest level of consistency (among commonly studied consistencies) that operates only by removing values from domains. This property ensures that enforcing SAC cannot introduce new patterns, which greatly facilitates the analysis. It is therefore natural to ask for which patterns, forbidding their occurrence ensures that SAC is a sound decision procedure. In this paper we make a significant contribution towards this objective by identifying five patterns which define classes of CSPs decidable by SAC. All five classes are monotone, and we show that only a handful of open cases separates us from an essentially full characterisation of monotone CSP classes decidable by SAC and definable by a forbidden pattern. Some of our results rely on a novel proof technique which follows the \emph{trace} of a successful run of the SAC algorithm to dynamically identify redundant substructures in the instance and construct a solution.

The structure of the paper is as follows. In
Section~\ref{sec:Preliminaries} we provide essential definitions and
background theory.  In Section~\ref{sec:Results} we state the main
results. In Section~\ref{sec:trace} we introduce the trace technique, which is then used
in Sections~\ref{sec:PatternPThree} and~\ref{sec:R87-} to establish the tractability
of three patterns from our main result. The tractability of the remaining two
patterns from the main result is shown in Section~\ref{sec:Q3R5}. In
Section~\ref{sec:CounterExamples} we give a necessary condition for the
solvability by SAC. Finally, we conclude the paper in
Section~\ref{sec:Conclusion} with some open problems.

\section{Preliminaries}
\label{sec:Preliminaries}

\subparagraph*{CSP}
A \emph{binary CSP instance} is a triple $I=(X,D,C)$, where $X$ is a
finite set of variables, $D$ is a finite domain, each variable $x\in
X$ has its own domain of possible values $D(x) \subseteq D$, and
$C=\{R(x,y)\mid x, y\in X, x\neq y\}$, where $R(x,y)\subseteq D^2$, is the set
of constraints. We assume, without loss of generality,
that each pair of variables $x,y \in X$ is constrained by a
constraint $R(x,y)$. (Otherwise we set $R(x,y)=D(x) \times D(y)$.) We
also assume that $(a,b) \in R(x,y)$ if and only if $(b,a) \in
R(y,x)$. A constraint is \emph{trivial} if it contains the Cartesian product of the domains of the two variables.  By \emph{deleting} a constraint we mean replacing it with a trivial constraint.   The
\emph{projection} $I[X']$ of a binary CSP instance $I$ on
$X'\subseteq X$ is obtained by removing all variables in $X
\backslash X'$ and all constraints $R(x,y)$ with $\{x,y\}
\not\subseteq X'$. A \emph{partial solution} to a binary CSP instance
on $X'\subseteq X$ is an assignment $s$ of values to variables in
$X'$ such that $s(x) \in D(x)$ for all $x \in X'$ and $(s(x),s(y))
\in R(x,y)$ for all constraints $R(x,y)$ with $x,y \in X'$. A
\emph{solution} to a binary CSP instance is a partial solution on
$X$.

An assignment $(x,a)$ is called a \emph{point}. For simplicity of notation we can assume that variable domains
are disjoint, so that using $a$ as a shorthand for $(x,a)$ is
unambiguous. If $(a,b) \in
R(x,y)$, we say that the assignments $(x,a),(y,b)$ (or more simply
$a,b$) are \emph{compatible} and that $ab$ is a \emph{positive edge},
otherwise $a,b$ are \emph{incompatible} and $ab$ is a \emph{negative
edge}. We say that $a \in D(x)$ has a \emph{support} at
variable $y$ if $\exists b \in D(y)$ such that $ab$ is a positive
edge.

The constraint graph of a CSP instance with variables $X$ is the graph
$G=(X,E)$ such that $(x,y)\in E$ if $R(x,y)$ is non-trivial. The \emph{degree}
of a variable $x$ in a CSP instance is the degree of $x$ in the constraint graph
of the instance.
\subparagraph*{Arc Consistency}
A domain value $a \in D(x)$ is \emph{arc consistent} if it has a support
at every other variable. A CSP instance is \emph{arc consistent} (AC) if every domain value is arc consistent. 
\subparagraph*{Singleton Arc Consistency}
Singleton arc consistency is stronger than arc consistency (but
weaker than strong path consistency~\cite{MOHR1986225}). A domain value $a \in D(x)$ in a CSP instance
$I$ is \emph{singleton arc consistent} if the instance obtained from $I$ by
removing all domain values $b \in D(x)$ with $a\neq b$ can be made arc consistent without emptying any domain. A
CSP instance is \emph{singleton arc consistent} (SAC) if every domain value is
singleton arc consistent.  
\subparagraph*{Establishing Consistency}
Domain values that are not arc consistent or not singleton arc consistent cannot be part of a solution so can
safely be removed. The closure of a CSP instance under the removal of values that are not (singleton) arc consistent is unique, and the process of reducing an instance to its closure is called \emph{establishing (singleton) arc consistency}~\cite{Rossi:handbook}. For a binary CSP instance with domain size $d$,  $n$ variables and $e$ non-trivial constraints there are $O(ed^2)$ algorithms for establishing arc consistency~\cite{Bessiere05:ai} and $O(ned^3)$ algorithms for establishing singleton arc
consistency~\cite{Bessiere08:ai}.

SAC \emph{decides} a CSP instance if, after establishing
singleton arc consistency, non-empty domains for all variables guarantee the
existence of a solution. SAC decides a class of CSP
instances if SAC decides every instance from the class.
\subparagraph*{Neighbourhood Substitutability}
If $a,b \in D(x)$, then $a$ is \emph{neighbourhood substitutable} (or is \emph{dominated}) by
$b$ if there is no $c$ such that $ac$ is a positive edge and $bc$ a
negative edge: such values $a$ can be deleted from $D(x)$ without
changing the satisfiability of the instance since $a$ can be replaced
by $b$ in any solution~\cite{Freuder91:interchangeable}. Similarly, removing neighbourhood substitutable
values cannot destroy (singleton) arc consistency.

\subparagraph{Patterns}
In a binary CSP instance each constraint decides, for each pair of values in $D$, whether it is allowed.  Hence a binary CSP can also be defined as a set of points $\mvar \times D$ together with a compatibility function that maps each edge, $(x,a)(y,b)$ with $x \neq y$, into the set $\{\mbox{negative},\mbox{positive}\}$.  A \emph{pattern} extends the notion of a binary CSP instance by allowing the compatibility function to be partial.
A pattern $P$ \emph{occurs} (as a subpattern) in an instance $I$ if there is mapping from the points of $P$ to the points of $I$
which respects variables (two points are mapped to points of the same variable in $I$ if and only if they belong to the same variable in $P$) and maps negative edges to negative edges, and positive edges to positive edges. A set of patterns occurs in an instance $I$ if at least one pattern in the set occurs in $I$.

We use the notation \CSP{$P$} for the set of binary instances in which $P$ does
not occur as a subpattern. A pattern $P$ is \emph{SAC-solvable} if SAC decides \CSP{$P$}. It is worth observing that \CSP{$P$} is closed under the operation of establishing (singleton) arc
consistency. A pattern $P$ is \emph{tractable} if \CSP{$P$} can be solved in polynomial time.

Points $(x,a)$ and $(x,b)$ in a pattern are \emph{mergeable} if there
is no point $(y,c)$ such that $ac$ is positive and $bc$ is negative
or vice versa. For each set of patterns there exists a set of patterns without mergeable points which occur in the same set of instances.

A point $(x,a)$ in a pattern is called \emph{dangling}
if there is at most one $b$ such that $ab$ is a positive edge and no
$c$ such that $ac$ is a negative edge. Dangling points are redundant
when considering the occurrence of a pattern in an arc consistent CSP instance. 

A pattern is called
\emph{irreducible} if it has no dangling points and no mergeable
points~\cite{Cooper15:dam}. When studying algorithms that are at least as strong as arc consistency, a classification with respect to forbidden \emph{sets} of irreducible patterns is equivalent to a classification with respect to all forbidden sets of patterns. For this reason classifications are often established with respect to irreducible patterns even if only classes definable by forbidding a single pattern are considered~\cite{Cooper15:dam,cz17:lmcs}, as we do in the present paper.

\section{Results} \label{sec:Results}

Call a class $\mathcal{C}$ of CSP instances \emph{monotone} if deleting any
constraint from an instance $I \in \mathcal{C}$ produces another instance in
$\mathcal{C}$. For example, language classes and bounded treewidth classes are
monotone.
An interesting research direction is to study those monotone classes defined by a
forbidden pattern which are solved by singleton arc consistency, both in order
to uncover new tractable classes and to better understand the strength of SAC.

We call a pattern \emph{monotone} if when forbidden it defines a monotone class.
Monotone patterns can easily be seen to correspond to exactly those patterns in
which positive edges only occur in constraints which have at least one negative
edge.
To see this, firstly let $P$ be a pattern in which positive edges only occur in
constraints which have at least one negative edge. Note that deleting a constraint in an instance $I$ cannot introduce $P$, so \CSP{$P$} is monotone. To see the converse, let $Q$ be a pattern in which a positive edge $e$ occurs in a constraint $c$ with no negative edges. Let $Q'$ be equivalent to $Q$ but with $e$ replaced by a negative edge. Let $I'$ be the instance obtained by completing $Q'$ with negative edges (i.e. joining by negative edges all pairs of points at different variables whose compatibility is unspecified in $Q'$). Let $I'[-c]$ be the instance $I'$ in which the constraint (corresponding to) $c$ has been deleted. Now $Q$ occurs in $I'[-c]$ (since the positive edge $e$ has been reintroduced by deleting $c$) but not in $I'$ (which can be seen by simply counting the number of constraints containing positive edges). Thus \CSP{$Q$} is not monotone.

\thicklines \setlength{\unitlength}{1.5pt}

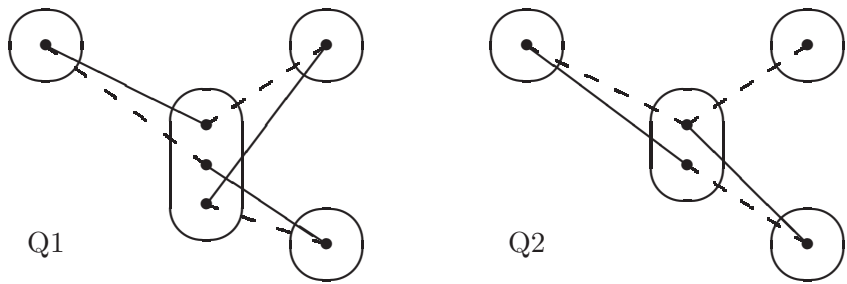
\begin{figure}
\centering
\begin{picture}(220,70)(0,80)

\put(0,80){\begin{picture}(90,70)(0,0)
\put(10,60){\makebox(0,0){$\bullet$}} \dashline{4}(10,60)(50,30) \dashline{4}(50,40)(80,60)
\put(80,60){\makebox(0,0){$\bullet$}} \put(10,60){\line(2,-1){40}} \dashline{4}(50,20)(80,10)
\put(50,20){\makebox(0,0){$\bullet$}} \put(50,20){\line(3,4){30}}  \put(50,30){\line(3,-2){30}}
\put(50,30){\makebox(0,0){$\bullet$}} \put(50,40){\makebox(0,0){$\bullet$}} \put(80,10){\makebox(0,0){$\bullet$}}
\put(10,60){\oval(18,18)} \put(50,30){\oval(18,38)} \put(80,10){\oval(18,18)} \put(80,60){\oval(18,18)}

\put(10,10){\makebox(0,0){\PatternPThree}}
\end{picture}}

\put(120,80){\begin{picture}(90,70)(0,0)
\put(10,60){\makebox(0,0){$\bullet$}} \dashline{4}(10,60)(50,40) \dashline{4}(50,40)(80,60)
\put(80,60){\makebox(0,0){$\bullet$}} \put(10,60){\line(4,-3){40}} \dashline{4}(50,30)(80,10)
\put(50,40){\line(1,-1){30}}
\put(50,30){\makebox(0,0){$\bullet$}} \put(50,40){\makebox(0,0){$\bullet$}} \put(80,10){\makebox(0,0){$\bullet$}}
\put(10,60){\oval(18,18)} \put(50,35){\oval(18,28)} \put(80,10){\oval(18,18)} \put(80,60){\oval(18,18)}

\put(10,10){\makebox(0,0){\PatternQThree}}
\end{picture}}

\end{picture}
\caption{All degree-3 irreducible monotone patterns solved by SAC must
occur in at least one of these patterns.} \label{fig:open3}
\end{figure}

\begin{figure}
\centering
\begin{picture}(240,250)(0,0)

\put(0,200){\begin{picture}(110,40)(0,0)
\put(10,20){\makebox(0,0){$\bullet$}}
\put(40,10){\makebox(0,0){$\bullet$}} \put(40,20){\makebox(0,0){$\bullet$}}
\put(70,10){\makebox(0,0){$\bullet$}} \put(70,20){\makebox(0,0){$\bullet$}}
\put(100,10){\makebox(0,0){$\bullet$}}
\put(10,20){\oval(18,18)} \put(40,15){\oval(18,28)} \put(70,15){\oval(18,28)} \put(100,10){\oval(18,18)}
\dashline{4}(10,20)(40,20) \dashline{4}(40,20)(70,20) \dashline{4}(70,10)(100,10)
\put(10,20){\line(3,-1){30}} \put(40,20){\line(3,-1){30}} \put(70,20){\line(3,-1){30}}
\put(40,10){\line(1,0){30}}

\put(55,0){\makebox(0,0){$R1$}}
\end{picture}}

\put(130,200){\begin{picture}(110,40)(0,0)
\put(10,10){\makebox(0,0){$\bullet$}} \put(10,20){\makebox(0,0){$\bullet$}}
\put(40,10){\makebox(0,0){$\bullet$}} \put(40,20){\makebox(0,0){$\bullet$}}
\put(70,10){\makebox(0,0){$\bullet$}} \put(70,20){\makebox(0,0){$\bullet$}}
\put(100,10){\makebox(0,0){$\bullet$}}
\put(10,15){\oval(18,28)} \put(40,15){\oval(18,28)} \put(70,15){\oval(18,28)} \put(100,10){\oval(18,18)}
\dashline{4}(10,20)(40,20) \dashline{4}(40,20)(70,20) \dashline{4}(70,10)(100,10)
\put(10,10){\line(3,1){30}} \put(40,20){\line(3,-1){30}} \put(70,20){\line(3,-1){30}}
\put(40,10){\line(3,1){30}} \put(10,10){\line(1,0){30}}

\put(55,0){\makebox(0,0){$R2$}}
\end{picture}}

\put(0,150){\begin{picture}(110,40)(0,0)
\put(10,20){\makebox(0,0){$\bullet$}} \put(40,20){\makebox(0,0){$\bullet$}}
\put(70,10){\makebox(0,0){$\bullet$}} \put(70,20){\makebox(0,0){$\bullet$}}
\put(100,10){\makebox(0,0){$\bullet$}} \put(100,20){\makebox(0,0){$\bullet$}}
\put(10,20){\oval(18,18)} \put(40,20){\oval(18,18)} \put(70,15){\oval(18,28)} \put(100,15){\oval(18,28)}
\dashline{4}(10,20)(40,20) \dashline{4}(40,20)(70,20) \dashline{4}(70,10)(100,10)
 \put(40,20){\line(3,-1){30}} \put(70,10){\line(3,1){30}}
 \put(70,20){\line(1,0){30}}

\put(55,0){\makebox(0,0){$R3$}}
\end{picture}}

\put(130,150){\begin{picture}(110,40)(0,0)
\put(10,20){\makebox(0,0){$\bullet$}}
\put(40,10){\makebox(0,0){$\bullet$}} \put(40,20){\makebox(0,0){$\bullet$}}
\put(70,10){\makebox(0,0){$\bullet$}} \put(70,20){\makebox(0,0){$\bullet$}}
\put(100,10){\makebox(0,0){$\bullet$}}
\put(10,20){\oval(18,18)} \put(40,15){\oval(18,28)} \put(70,15){\oval(18,28)} \put(100,10){\oval(18,18)}
\dashline{4}(10,20)(40,20) \dashline{4}(40,20)(70,20) \dashline{4}(70,10)(100,10)
 \put(40,10){\line(1,0){30}} \put(70,20){\line(3,-1){30}}
\put(40,10){\line(3,1){30}}

\put(55,0){\makebox(0,0){$R4$}}
\end{picture}}

\put(0,100){\begin{picture}(110,40)(0,0)
\put(10,20){\makebox(0,0){$\bullet$}}
\put(40,10){\makebox(0,0){$\bullet$}} \put(40,20){\makebox(0,0){$\bullet$}}
\put(70,10){\makebox(0,0){$\bullet$}} \put(70,20){\makebox(0,0){$\bullet$}}
\put(100,20){\makebox(0,0){$\bullet$}}
\put(10,20){\oval(18,18)} \put(40,15){\oval(18,28)} \put(70,15){\oval(18,28)} \put(100,20){\oval(18,18)}
\dashline{4}(10,20)(40,20) \dashline{4}(40,10)(70,10) \dashline{4}(70,20)(100,20)
\put(10,20){\line(3,-1){30}} \put(40,20){\line(3,-1){30}} \put(70,10){\line(3,1){30}}
\put(40,10){\line(3,1){30}}

\put(55,0){\makebox(0,0){$R5$}}
\end{picture}}

\put(130,100){\begin{picture}(110,40)(0,0)
\put(10,10){\makebox(0,0){$\bullet$}} \put(10,20){\makebox(0,0){$\bullet$}}
\put(40,10){\makebox(0,0){$\bullet$}} \put(40,20){\makebox(0,0){$\bullet$}}
\put(70,10){\makebox(0,0){$\bullet$}} \put(70,20){\makebox(0,0){$\bullet$}}
\put(100,20){\makebox(0,0){$\bullet$}}
\put(10,15){\oval(18,28)} \put(40,15){\oval(18,28)} \put(70,15){\oval(18,28)} \put(100,20){\oval(18,18)}
\dashline{4}(10,20)(40,20) \dashline{4}(40,10)(70,10) \dashline{4}(70,20)(100,20)
\put(10,10){\line(3,1){30}} \put(40,20){\line(3,-1){30}} \put(70,10){\line(3,1){30}}
\put(40,10){\line(3,1){30}} \put(10,10){\line(1,0){30}}

\put(55,0){\makebox(0,0){$R6$}}
\end{picture}}

\put(0,50){\begin{picture}(110,40)(0,0)
\put(10,20){\makebox(0,0){$\bullet$}}
\put(40,10){\makebox(0,0){$\bullet$}} \put(40,20){\makebox(0,0){$\bullet$}}  \put(40,30){\makebox(0,0){$\bullet$}}
\put(70,10){\makebox(0,0){$\bullet$}} \put(70,20){\makebox(0,0){$\bullet$}}
\put(100,20){\makebox(0,0){$\bullet$}}
\put(10,20){\oval(18,18)} \put(40,20){\oval(18,38)} \put(70,15){\oval(18,28)} \put(100,20){\oval(18,18)}
\dashline{4}(10,20)(40,20) \dashline{4}(40,10)(70,10) \dashline{4}(70,20)(100,20)
\put(10,20){\line(3,1){30}} \put(40,30){\line(3,-2){30}}
\put(40,20){\line(3,-1){30}} \put(40,10){\line(3,1){30}} \put(70,10){\line(3,1){30}}

\put(55,0){\makebox(0,0){$R7$}}
\end{picture}}

\put(130,50){\begin{picture}(110,40)(0,0)
\put(10,20){\makebox(0,0){$\bullet$}}
\put(40,10){\makebox(0,0){$\bullet$}} \put(40,20){\makebox(0,0){$\bullet$}}  \put(40,30){\makebox(0,0){$\bullet$}}
\put(70,10){\makebox(0,0){$\bullet$}} \put(70,20){\makebox(0,0){$\bullet$}}
\put(100,20){\makebox(0,0){$\bullet$}}
\put(10,20){\oval(18,18)} \put(40,20){\oval(18,38)} \put(70,15){\oval(18,28)} \put(100,20){\oval(18,18)}
\dashline{4}(10,20)(40,20) \dashline{4}(40,10)(70,10) \dashline{4}(70,20)(100,20)
\put(10,20){\line(3,1){30}} \put(10,20){\line(3,-1){30}} \put(40,30){\line(3,-2){30}}
\put(40,10){\line(3,1){30}} \put(70,10){\line(3,1){30}}

\put(55,0){\makebox(0,0){$R8$}}
\end{picture}}

\put(0,0){\begin{picture}(110,40)(0,0)
\put(10,20){\makebox(0,0){$\bullet$}}
\put(40,10){\makebox(0,0){$\bullet$}} \put(40,20){\makebox(0,0){$\bullet$}}  \put(40,30){\makebox(0,0){$\bullet$}}
\put(70,10){\makebox(0,0){$\bullet$}} \put(70,20){\makebox(0,0){$\bullet$}}
\put(100,20){\makebox(0,0){$\bullet$}}
\put(10,20){\oval(18,18)} \put(40,20){\oval(18,38)} \put(70,15){\oval(18,28)} \put(100,20){\oval(18,18)}
\dashline{4}(10,20)(40,20) \dashline{4}(40,10)(70,10) \dashline{4}(70,20)(100,20)
\put(10,20){\line(3,1){30}} \put(10,20){\line(3,-1){30}} \put(40,30){\line(3,-2){30}}
\put(40,20){\line(1,0){30}} \put(70,10){\line(3,1){30}}

\put(55,0){\makebox(0,0){$R9$}}
\end{picture}}

\put(130,0){\begin{picture}(110,40)(0,0)
\put(10,20){\makebox(0,0){$\bullet$}}
\put(40,10){\makebox(0,0){$\bullet$}} \put(40,20){\makebox(0,0){$\bullet$}}
\put(70,10){\makebox(0,0){$\bullet$}} \put(70,20){\makebox(0,0){$\bullet$}}
\put(100,20){\makebox(0,0){$\bullet$}}
\put(10,20){\oval(18,18)} \put(40,15){\oval(18,28)} \put(70,15){\oval(18,28)} \put(100,20){\oval(18,18)}
\dashline{4}(10,20)(40,20) \dashline{4}(40,10)(70,10) \dashline{4}(70,20)(100,20)
\put(10,20){\line(3,-1){30}} \put(40,20){\line(1,0){30}} \put(70,10){\line(3,1){30}}
\put(40,10){\line(3,1){30}}

\put(55,0){\makebox(0,0){$R10$}}
\end{picture}}

\end{picture}
\caption{All degree-2 irreducible monotone patterns solved by SAC must
occur in at least one of these patterns.} \label{fig:open2}
\end{figure}

\thicklines \setlength{\unitlength}{2.5pt}
\begin{figure}
\centering
\begin{picture}(110,40)(0,0)
\put(10,20){\makebox(0,0){$\bullet$}}
\put(40,10){\makebox(0,0){$\bullet$}}
\put(40,20){\makebox(0,0){$\bullet$}}
\put(70,10){\makebox(0,0){$\bullet$}}
\put(70,20){\makebox(0,0){$\bullet$}}
\put(70,30){\makebox(0,0){$\bullet$}}
\put(100,20){\makebox(0,0){$\bullet$}}

\put(10,20){\oval(18,18)}
\put(40,15){\oval(18,28)}
\put(70,20){\oval(18,38)}
\put(100,20){\oval(18,18)}

\dashline{4}(10,20)(40,10)
\put(10,20){\line(1,0){30}}
\dashline{4}(40,20)(70,10)
\put(40,20){\line(1,0){30}}
\put(40,20){\line(3,1){30}}

\dashline{4}(70,20)(100,20)
\put(70,30){\line(3,-1){30}}
\end{picture}
\caption{The pattern \PatternRSevenMinus, a subpattern of \PatternRSeven.} \label{fig:R7}
\end{figure}
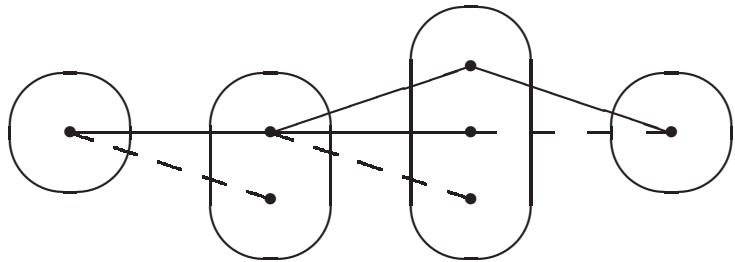

Consider the monotone patterns \PatternPThree\ and \PatternQThree\ shown
in~Figure~\ref{fig:open3}, patterns \PatternRFive, \PatternREight\ shown
in~Figure~\ref{fig:open2}, and pattern \PatternRSevenMinus\ shown
in~Figure~\ref{fig:R7}.

\begin{theorem*}[Main]
\label{thm:mainresult}
The patterns \PatternPThree, \PatternQThree, \PatternRFive, \PatternREight, and \PatternRSevenMinus\ are SAC-solvable.
\end{theorem*}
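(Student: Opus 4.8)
\medskip

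The proof splits according to the structure of the paper: the patterns \PatternPThree, \PatternREight, and \PatternRSevenMinus\ are handled by a common argument built on the \emph{trace} of a successful run of the SAC algorithm, whereas \PatternQThree\ and \PatternRFive\ are treated more directly. In every case the set-up is the same: given an instance $I$ of the relevant class \CSP{$P$}, we establish singleton arc consistency, and if this empties a domain then $I$ is unsatisfiable and there is nothing to prove, so we may assume it does not. Because establishing SAC only deletes domain values and \CSP{$P$} is closed under deleting values (and, since these classes are monotone, also under deleting constraints), every instance arising in the argument still belongs to \CSP{$P$}; thus the only quantity ever at stake is non-emptiness of domains, and the whole task reduces to showing that a non-empty SAC instance of \CSP{$P$} has a solution.

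For \PatternPThree, \PatternREight, and \PatternRSevenMinus\ the plan is to build a solution one variable at a time, without backtracking, maintaining singleton arc consistency throughout. The engine is a one-step lemma: if $I$ is SAC, lies in \CSP{$P$}, and has a non-singleton domain, then there is a point $(x,a)$ such that reducing $D(x)$ to $\{a\}$ and re-establishing SAC leaves every domain non-empty. Iterating the lemma produces a sequence of non-empty SAC instances of \CSP{$P$} with strictly decreasing $\sum_{y}|D(y)|$, which therefore terminates in an all-singleton instance --- that is, a solution. The point $(x,a)$ is not arbitrary: it is read off from the trace of the SAC computation on $I$ together with the arc-consistent witnesses $\mathrm{AC}(I|_{x=b})$ certifying singleton arc consistency of the values $b$ (where $I|_{x=b}$ denotes $I$ with $D(x)$ reduced to $\{b\}$), and along the way one discards substructures that have become redundant, which is legitimate by monotonicity --- hence the name of the technique. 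The lemma is proved by contradiction: if re-establishing SAC after fixing $(x,a)$ emptied a domain, then by following the chain of deletions responsible for the failure and confronting it with the arc-consistent witnesses and with the deletions already recorded in the trace, one produces a variable- and edge-respecting map of the points of $P$ into $I$, i.e. an occurrence of $P$, contradicting that $P$ does not occur in $I$.

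Reconstructing the forbidden pattern from a hypothetical SAC failure is the main obstacle, and it is where the specific geometry of the three patterns is used. Each of \PatternPThree, \PatternREight, and \PatternRSevenMinus\ has a distinguished variable carrying a small fork of positive and negative edges --- the three-valued central variable of \PatternPThree\ and its analogues in \PatternREight\ and \PatternRSevenMinus\ --- and the case analysis must match every way in which a support can be lost during the second SAC computation (a value losing its last support at some variable, and the cascade of further losses this triggers) against one of these forks, using arc consistency of the witnesses to supply the missing compatible values and the trace to supply the negative edges left behind by earlier deletions. Making this case analysis exhaustive, and verifying that in the surviving case the point $(x,a)$ chosen from the trace is genuinely not deleted, is the technical core; the textbook 2-SAT algorithm (fix a variable to a value, unit-propagate, and argue that a contradiction forces the opposite value) is a good guide here, since these classes generalise 2-SAT.

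For the remaining two patterns \PatternQThree\ and \PatternRFive\ I would again induct on $\sum_{y}|D(y)|$, but using an explicit local reduction rather than the trace. In any SAC instance of \CSP{\PatternQThree} or \CSP{\PatternRFive} that is not already all-singleton, one looks either for a neighbourhood substitutable value --- whose removal preserves both satisfiability and singleton arc consistency, by the remarks in Section~\ref{sec:Preliminaries} --- or, failing that, for a variable and a value that can be assigned to it with no loss of solutions; the absence of the pattern is used precisely to show that whenever no substitutable value exists, the configuration of positive and negative edges around some suitable variable is rigid enough to force such a safe assignment. After each reduction one re-establishes SAC (which, by the maintained invariant, cannot empty a domain) and recurses. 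Here the obstacle is lighter: it is to identify the right rigid configuration and check \PatternQThree\ and \PatternRFive\ against it, rather than to carry out a full trace-based reconstruction.
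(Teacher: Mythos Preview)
Your proposal is a plan rather than a proof: the sentences that carry all the weight (``one produces a variable- and edge-respecting map of the points of $P$ into $I$'', ``the configuration of positive and negative edges around some suitable variable is rigid enough to force such a safe assignment'') are precisely the theorems to be proved, and you do not attempt them. So as it stands there is a genuine gap.

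More specifically, your description of the trace technique does not match what the paper actually does, and the mismatch matters. You propose to fix a carefully chosen point $(x,a)$, \emph{re-establish SAC}, and argue that a wipeout would exhibit the forbidden pattern. The paper never re-establishes SAC after an assignment. Instead, for \PatternPThree\ and \PatternREight\ it picks an \emph{arbitrary} point $(x,v)$, enforces only \emph{arc consistency} to obtain $I_{xv}$, and then analyses the trace $(P_{xv})$ of that AC run to prove a structural lemma about $I_{xv}$: for \PatternPThree\ each inner variable has at most two non-trivial constraints (Lemma~\ref{lem:paths}), for \PatternREight\ the affected region forms stars (Lemma~\ref{lem:path}). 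These lemmas are used to carve off a set of variables on which $I_{xv}$ is trivially solvable and whose removal from $I$ (not $I_{xv}$) preserves satisfiability; one then recurses on the smaller, still-SAC projection of $I$. So the induction is on the number of variables, not on $\sum_y|D(y)|$, and there is no second SAC pass to analyse. For \PatternRSevenMinus\ the point $(x,v)$ is \emph{not} arbitrary, but it is not ``read off from the trace'' either: it is chosen as the meet point of an occurrence of the auxiliary pattern \PatternMHat\ if one exists, else of \PatternVTwo\ (Lemmas~\ref{lem:moth}, \ref{lem:weakmoth}); absence of \PatternVTwo\ already implies SAC-solvability. Your picture of following a failed SAC cascade to reconstruct $P$ is thus the wrong mental model; the actual arguments are forward structural analyses of a single AC trace.

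For \PatternQThree\ and \PatternRFive\ the discrepancy is larger. The paper does not induct on $\sum_y|D(y)|$ via neighbourhood substitution or safe assignments. For \PatternQThree\ the key operation is \emph{BTP-merging} of values (Lemma~\ref{lem:BTP} shows it cannot introduce \PatternQThree); after merging to a fixed point one proves that the pattern $V^-$ can only occur at degree-$2$ variables, and Lemma~\ref{lem:V-} finishes. For \PatternRFive\ the key operation is \emph{constraint removal}: Lemma~\ref{lem:R5} shows that in any SAC instance of \CSP{\PatternRFive} one may delete \emph{any} constraint without changing satisfiability, so one simply strips all constraints. Neither of these is a domain-shrinking argument, and neighbourhood substitution plays no role in either proof.
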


In order to prove the SAC-solvability of \PatternPThree, \PatternREight\ and
\PatternRSevenMinus\ we use the same idea of following the trace of arc
consistency and argue that the resulting instance is not too complicated. While
the same idea is behind the proofs of all three patterns, the technical details
differ.

In the remaining two cases we identify an operation that preserves
SAC and satisfiability, does not introduce the pattern and after repeated
application necessarily produces an equivalent instance which is solved by SAC. In the case of \PatternRFive, the operation is simply removing any 
constraint. In the case of \PatternQThree, the operation is
BTP-merging~\cite{cooper16:ai}.

\begin{remark} 
By Proposition~\ref{prop:app2} from Section~\ref{sec:CounterExamples},
any \emph{monotone} and \emph{irreducible} pattern solvable by SAC must occur in at least one of
the patterns shown in~Figures~\ref{fig:open3} and~\ref{fig:open2}. By this analysis, we have managed to reduce the number of remaining cases to a
handful. Our main result shows that some of these are SAC-solvable. In
particular, the patterns \PatternPThree, \PatternQThree, \PatternRFive, and
\PatternREight\ are maximal in the sense that adding anything to them would give
a pattern that is either non-monotone or not solved by SAC. 
\end{remark}

\begin{remark}
We point out that certain interesting forbidden patterns, such as
BTP~\cite{cooper10:ai-btp}, NegTrans~\cite{cz11:ai}, and
EMC~\cite{cz17:lmcs} are not monotone. On the other hand, the patterns
\PatternTOne$,\ldots,$\PatternTFive\ shown in Figure~\ref{figT}
\emph{are} monotone. Patterns \PatternTOne$,\ldots,$\PatternTFive\
were identified in~\cite{Cooper15:dam} as the maximal 
irreducible tractable patterns on two connected constraints. We show
in Section~\ref{sec:CounterExamples} that \PatternTOne\ is not
solved by SAC. Our main result implies (since
\PatternREight\ contains \PatternTFour\ and \PatternTFive) that both
\PatternTFour\ and \PatternTFive\ are solved by SAC. It can easily be shown,
from Lemma~\ref{lem:V-} and~\cite[Lemma~25]{Cooper15:dam}, that \PatternTTwo\ is solved by SAC,
and we provide, in Appendix~\ref{sec:T3}, a simple proof that T3 is
solved by SAC as well. Hence, we have characterised all 2-constraint irreducible patterns solvable by SAC.
\end{remark}

\begin{remark}
Observe that \PatternPThree\ does not occur in any binary CSP instance in which
all degree 3 or more variables are Boolean. This shows that 2-SAT is a strict subset of \CSP{\PatternPThree}. This class
is incomparable with language-based generalisations of 2-SAT, such as the class
ZOA~\cite{Cooper94:characterising} (the language of ``zero-one-all'' relations, that is, of all relations that admit the dual discriminator polymorphism) since in \CSP{\PatternPThree} degree-2
variables can be constrained by arbitrary constraints. Indeed, instances in
\CSP{\PatternPThree} can have an arbitrary constraint on the pair of variables
$x,y$, where $x$ is of arbitrary degree and of arbitrary domain size if for all
variables $z \notin \{x,y\}$, the constraint on the pair of variables $x,z$ is
of the form $(x \in S) \vee (z \in T_z)$ where $S$ is fixed (i.e. independent of
$z$) but $T_z$ is arbitrary. \PatternREight\ and \PatternRSevenMinus\ generalise \PatternTFour\
and \CSP{\PatternTFour} generalises ZOA~\cite{Cooper15:dam}, so
\CSP{\PatternREight} and \CSP{\PatternRSevenMinus} are strict generalisations
of ZOA.
\end{remark}

\thinlines \setlength{\unitlength}{1.1pt}
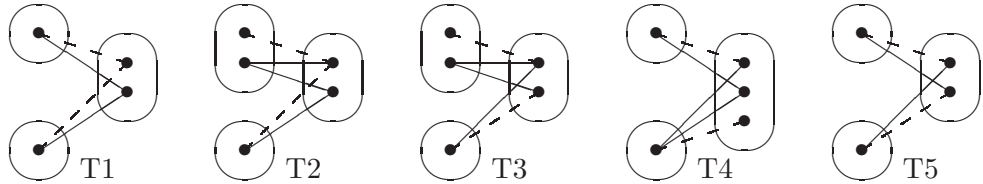
\begin{figure}[h]
\centering
\begin{picture}(332,63)(-2,-3)

\put(0,0){\begin{picture}(50,60)(0,0)
\put(10,10){\makebox(0,0){$\bullet$}} \dashline{4}(10,50)(40,40)
\put(10,50){\makebox(0,0){$\bullet$}} \dashline{4}(10,10)(40,40)
\put(40,30){\makebox(0,0){$\bullet$}} \put(10,10){\line(3,2){30}}
\put(40,40){\makebox(0,0){$\bullet$}} \put(10,50){\line(3,-2){30}}
\put(30,4){\makebox(0,0){\PatternTOne}} \put(10,10){\oval(20,20)}
\put(10,50){\oval(20,20)} \put(40,35){\oval(20,30)}
\end{picture}}

\put(70,0){\begin{picture}(50,60)(0,0)
\put(10,10){\makebox(0,0){$\bullet$}} \dashline{4}(10,50)(40,40)
\put(10,50){\makebox(0,0){$\bullet$}} \dashline{4}(10,10)(40,40)
\put(40,30){\makebox(0,0){$\bullet$}} \put(10,10){\line(3,2){30}}
\put(40,40){\makebox(0,0){$\bullet$}} \put(10,40){\line(3,-1){30}}
\put(10,40){\makebox(0,0){$\bullet$}} \put(10,40){\line(1,0){30}}
\put(30,4){\makebox(0,0){\PatternTTwo}} \put(10,10){\oval(20,20)}
\put(10,45){\oval(20,30)} \put(40,35){\oval(20,30)}
\end{picture}}

\put(140,0){\begin{picture}(50,60)(0,0)
\put(10,10){\makebox(0,0){$\bullet$}} \dashline{4}(10,50)(40,40)
\put(10,50){\makebox(0,0){$\bullet$}} \dashline{4}(10,10)(40,30)
\put(40,30){\makebox(0,0){$\bullet$}} \put(10,10){\line(1,1){30}}
\put(40,40){\makebox(0,0){$\bullet$}} \put(10,40){\line(3,-1){30}}
\put(10,40){\makebox(0,0){$\bullet$}} \put(10,40){\line(1,0){30}}
\put(30,4){\makebox(0,0){\PatternTThree}} \put(10,10){\oval(20,20)}
\put(10,45){\oval(20,30)} \put(40,35){\oval(20,30)}
\end{picture}}

\put(210,0){\begin{picture}(50,60)(0,0)
\put(10,10){\makebox(0,0){$\bullet$}} \dashline{4}(10,50)(40,40)
\put(10,50){\makebox(0,0){$\bullet$}} \dashline{4}(10,10)(40,20)
\put(40,30){\makebox(0,0){$\bullet$}} \put(10,10){\line(3,2){30}}
\put(40,40){\makebox(0,0){$\bullet$}} \put(10,50){\line(3,-2){30}}
\put(40,20){\makebox(0,0){$\bullet$}} \put(10,10){\line(1,1){30}}
\put(30,4){\makebox(0,0){\PatternTFour}} \put(10,10){\oval(20,20)}
\put(10,50){\oval(20,20)} \put(40,30){\oval(20,40)}
\end{picture}}

\put(280,0){\begin{picture}(50,60)(0,0)
\put(10,10){\makebox(0,0){$\bullet$}} \dashline{4}(10,50)(40,40)
\put(10,50){\makebox(0,0){$\bullet$}} \dashline{4}(10,10)(40,30)
\put(40,30){\makebox(0,0){$\bullet$}} \put(10,10){\line(1,1){30}}
\put(40,40){\makebox(0,0){$\bullet$}} \put(10,50){\line(3,-2){30}}
\put(30,4){\makebox(0,0){\PatternTFive}} \put(10,10){\oval(20,20)}
\put(10,50){\oval(20,20)} \put(40,35){\oval(20,30)}
\end{picture}}

\end{picture}
\caption{The set of tractable 2-constraint irreducible patterns.} \label{figT}
\end{figure}

\section{Notation for the Trace Technique}
\label{sec:trace}

Given a singleton arc consistent instance $I$, a variable $x$ and a
value $v \in D(x)$, we denote by $\ixv$ the instance obtained by
assigning $x$ to $v$ (that is, setting $D(x) = \{v\}$) and enforcing
arc consistency. To avoid confusion with the original domains, we
will use $\dxv{y}$ to denote the domain of the variable $y$ in
$\ixv$. For our proofs we will assume that arc consistency has been
enforced using a straightforward algorithm that examines the
constraints one at a time and removes the points that do not have a
support until a fixed point is reached. We will be interested in the
\textit{trace} of this algorithm, given as a chain of propagations:
$$\pxv : (x \rightarrow y_0), (x_1 \rightarrow y_1), (x_2 \rightarrow y_2), \ldots, (x_p \rightarrow y_p)$$
where $x_i \rightarrow y_i$ means that the algorithm has inferred a
change in the domain of $y_i$ when examining the constraint
$R(x_i,y_i)$. We define a map $\rho : \pxv \mapsto 2^D$ that maps
each $(x_i \rightarrow y_i) \in \pxv$ to the set of values that were
removed from $\dxv{y_i}$ at this step. Without loss of generality, we
assume that the steps $(x_i \rightarrow y_i)$ such that the pruning
of $\rho(x_i \rightarrow y_i)$ from $\dxv{y_i}$ does not incur further
propagation are performed last.

We denote by $\spxv$ the set of variables that appear in $\pxv$.
Because $I$ was (singleton) arc consistent before $x$ was assigned,
we have $\spxv = \{x\} \cup \{y_i \mid i \geq 0 \}$. We rename the
elements of $\spxv$ as $\{p_i \mid i \geq 0\}$ where the index $i$
denotes the order of first appearance in $\pxv$. Finally, we use
$\sipxv$ to denote the set of \textit{inner} variables, that is, the
set of all variables $p_j \in \spxv$ for which there exists $p_r \in
\spxv$ such that $(p_j \rightarrow p_r) \in \pxv$.

\section{Tractability of \PatternPThree}
\label{sec:PatternPThree}

Consider the pattern \PatternPThree\ shown in
Figure~\ref{fig:open3}. Let $I \in$ \CSP{\PatternPThree} be a
singleton arc consistent instance, $x$ be any variable and $v$ be any
value in the domain of $x$. Our proof of the SAC-decidability of
\CSP{\PatternPThree} uses the trace of the arc consistency algorithm
to determine a subset of variables in the vicinity of $x$ such that
$(i)$ the projection of $\ixv$ to this particular subset is satisfiable, $(ii)$ those variables do not interact too much with the
rest of the instance and $(iii)$ the projections of $\ixv$ and $I$ on
the other variables are almost the same. We then use these three
properties to show that the satisfiability of $I$ is equivalent to
that of an instance with fewer variables, and we repeat the operation
until the smaller instance is trivially satisfiable.

The following lemma describes the particular structure of $\ixv$
around the variables whose domain has been reduced by arc
consistency. Note that a non-trivial constraint in $I$ can be trivial in $I_{xv}$ because of domain changes; unless otherwise stated the triviality/non-triviality of constraints is always discussed with respect to $I_{xv}$.

\begin{lemma}
\label{lem:paths} Consider the instance $\ixv$. Every variable $p_i
\in \sipxv$ is in the scope of at most two non-trivial constraints,
which must be of the form $R(p_j,p_i)$ and $R(p_i,p_r)$ with $j < i$,
$(p_j \rightarrow p_i) \in \pxv$ and $(p_i \rightarrow p_r) \in
\pxv$.
\end{lemma}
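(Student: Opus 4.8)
\emph{Proof strategy (proposal).} The plan is to argue by contradiction: if some $p_i\in\sipxv$ is in the scope of a non-trivial (in $\ixv$) constraint other than the two allowed ones, I assemble an occurrence of $\PatternPThree$ in $I$. Note that the points used need not survive in $\ixv$: some are values deleted by the propagation, but they are still points of $I$ and all edges I use are edges of $I$, so realising the pattern in $I$ suffices.

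First, a reduction: if $|\dxv{p_i}|=1$ then, as $\ixv$ is arc consistent, the single value of $p_i$ supports every value in every neighbouring domain, so every constraint with $p_i$ in its scope is trivial in $\ixv$ and there is nothing to prove. So assume $|\dxv{p_i}|\ge 2$; then $p_i\neq x$ (since $\dxv{x}=\{v\}$), so $p_i$ first occurs in $\pxv$ as a right-hand side, and since $p_i\in\sipxv$ there is a step $(p_i\rightarrow p_r)\in\pxv$, which I fix. Pick $\gamma^\ast\in\rho(p_i\rightarrow p_r)\subseteq D(p_r)$: it left $\dxv{p_r}$ at that step having lost every support in the then-current domain of $p_i$. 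As $I$ is arc consistent, $\gamma^\ast$ has a support $\beta^\ast\in D(p_i)$; since domains only shrink, $\gamma^\ast$ has no support in $\dxv{p_i}$ either, so $\beta^\ast\in D(p_i)\setminus\dxv{p_i}$ and, moreover, $\gamma^\ast$ is incompatible with every value of $\dxv{p_i}$. Let $(p_j\rightarrow p_i)\in\pxv$ be the step deleting $\beta^\ast$ from $p_i$; then every value of the current (hence of the final) domain of $p_j$ is incompatible with $\beta^\ast$. This step precedes $(p_i\rightarrow p_r)$, whereas $\gamma^\ast$ leaves $\dxv{p_r}$ only at the latter step; so if $p_j=p_r$ then $\gamma^\ast$ lies in the current domain of $p_j$ at the step $(p_j\rightarrow p_i)$, forcing $\gamma^\ast\beta^\ast$ negative, a contradiction. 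Hence $p_j\neq p_r$.

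The core of the argument: suppose $p_i$ is in the scope of $R(p_i,w)$, non-trivial in $\ixv$, with $w\notin\{p_j,p_r\}$. Non-triviality gives $\beta_0\in\dxv{p_i}$, $\gamma_0\in\dxv{w}$ with $\beta_0\gamma_0$ negative; arc consistency of $\ixv$ gives a support $\beta_1\in\dxv{p_i}$ of $\gamma_0$ ($\beta_1\gamma_0$ positive, $\beta_1\neq\beta_0$) and a support $c\in\dxv{p_j}$ of $\beta_0$ ($c\beta_0$ positive), while $c\beta^\ast$ is negative since $c\in\dxv{p_j}$. The values $\beta^\ast,\beta_0,\beta_1$ of $p_i$ are distinct ($\beta^\ast\notin\dxv{p_i}$ but $\beta_0,\beta_1\in\dxv{p_i}$) and the variables $p_i,w,p_j,p_r$ are pairwise distinct. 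Label the three values on the centre of $\PatternPThree$ by $b_1,b_2,b_3$ so that its three degree-one outer variables are negative–positive to $(b_2,b_3)$, to $(b_1,b_2)$ and to $(b_3,b_1)$ respectively; map the centre to $p_i$ by $b_1\mapsto\beta^\ast$, $b_2\mapsto\beta_0$, $b_3\mapsto\beta_1$, and the outer variables to $w,p_j,p_r$ with values $\gamma_0,c,\gamma^\ast$. Then all prescribed edges hold: $\gamma_0$ is negative to $\beta_0$ and positive to $\beta_1$; $c$ is negative to $\beta^\ast$ and positive to $\beta_0$; $\gamma^\ast$ is negative to $\beta_1$ and positive to $\beta^\ast$. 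Thus $\PatternPThree$ occurs in $I$, a contradiction.

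Consequently every constraint with $p_i$ in its scope that is non-trivial in $\ixv$ has the other endpoint in $\{p_j,p_r\}$, so there are at most two, of the forms $R(p_j,p_i)$ with $(p_j\rightarrow p_i)\in\pxv$ and $R(p_i,p_r)$ with $(p_i\rightarrow p_r)\in\pxv$, as claimed. The one remaining point — the index condition $j<i$ — calls for choosing the deleted witness $\beta^\ast$ carefully (e.g.\ so that the step deleting it is the first step at which $p_i$ loses a value) and for treating separately the degenerate case in which $p_i$'s first pruner already coincides with $p_r$; in that case one verifies that the two surviving constraints still match the stated shape after renaming. I expect this index/coincidence bookkeeping to be the only delicate part: the substance is the three-way assembly above, and it is precisely the \emph{inner} hypothesis that supplies the asymmetric neighbour $\gamma^\ast$ — negative to all surviving values of $p_i$, positive to a deleted one — without which $\PatternPThree$ cannot be completed.
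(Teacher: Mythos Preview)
Your core construction is correct and is essentially the paper's: you exhibit the same occurrence of \PatternPThree\ on $(p_i,p_j,p_r,w)$, using a deleted value $\beta^\ast$ at $p_i$ (the paper's $c_i$), the value $\gamma^\ast$ it last supported at $p_r$ (the paper's $a_r$), and two surviving values $\beta_0,\beta_1\in\dxv{p_i}$ coming from the non-trivial third constraint. Your argument that $p_j\neq p_r$ is also fine.

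The only real gap is the one you flag yourself, and your proposed repair does not work as written. You fix the step $(p_i\rightarrow p_r)$ and $\gamma^\ast$ first, then define $\beta^\ast$ as a support of $\gamma^\ast$, and only then let $p_j$ be whoever deleted $\beta^\ast$; but $\beta^\ast$ is now constrained to support that particular $\gamma^\ast$, and there is no reason any support of $\gamma^\ast$ should be removed at the \emph{first} pruning step of $p_i$. To get $j<i$ you must reverse the order of choices: take $p_j$ to be the first pruner of $p_i$ (so $j<i$ is automatic from the indexing by first appearance), and then invoke the trace convention from Section~\ref{sec:trace} that non-propagating prunings are performed last. Since $p_i$ is inner, the first pruning $(p_j\rightarrow p_i)$ must incur further propagation, so some $\beta^\ast\in\rho(p_j\rightarrow p_i)$ is the last support of some $\gamma^\ast$ at some $p_r$; this determines $p_r$ and $\gamma^\ast$, and the rest of your argument goes through unchanged. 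This is exactly how the paper organises the proof. The ``degenerate coincidence'' you worry about does not arise once the choices are made in this order: your own timing argument shows $p_j\neq p_r$.
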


\begin{proof}
The claim is true for $p_0 = x$ as every constraint incident to $x$
is trivial. Otherwise, let $p_i \in \sipxv$ be such that $p_i \neq
x$. Let $p_j$, $j < i$ be such that $(p_j \rightarrow p_i)$ occurs
first in $\pxv$. Because $p_i \in \sipxv$ and we assumed that the arc
consistency algorithm performs the pruning that do not incur further
propagation last, we know that there exists $c_i \in \rho(p_j
\rightarrow p_i)$ and $p_r \in \spxv$ with $(p_i \rightarrow p_r) \in
\pxv$ such that the pruning of $c_i$ from $D(p_i)$ allows the
pruning of some $a_r \in \rho(p_i \rightarrow p_r)$ from the domain
of $p_r$. It follows that $(c_i,a_r) \in
R(p_i,p_r)$, $(v_i,a_r) \notin R(p_i,p_r)$ for any $v_i \in
\dxv{p_i}$ and $(v_j,c_i) \notin R(p_j,p_i)$ for any $v_j \in
\dxv{p_j}$. Moreover, $a_r$ was a support for $c_i$ at $p_r$ when $c_i$ was pruned so we know that $p_j \neq p_r$.

For the sake of contradiction, let us assume that there exists a
constraint $R(p_i,l)$ with $l \notin \{p_j,p_r\}$ that is not
trivial. In particular, there exist $a_i,b_i \in \dxv{p_i}$ and $a_l
\in \dxv{l}$ such that $(a_i,a_l) \in R(p_i,l)$ but $(b_i,a_l) \notin
R(p_i,l)$. Since $a_i$ is in $\dxv{p_i}$ and $a_r$ was removed by arc consistency when
inspecting the constraint $R(p_i,p_r)$, we have $(a_i,a_r) \notin
R(p_i,p_r)$. $\ixv$ is arc consistent so there exists some $a_j \in
\dxv{p_j}$ such that $(a_j,b_i) \in R(p_j,p_i)$, and since $c_i \in
\rho(p_j \rightarrow p_i)$ we have $(a_j,c_i) \notin R(p_j,p_i)$. At
this point we have reached the desired contradiction as
\PatternPThree\ occurs on $(p_i,p_j,p_r,l)$ with $p_i$ being the
middle variable (see Figure~\ref{fig:p3paths}).
\end{proof}

\thicklines \setlength{\unitlength}{1.4pt}
\begin{figure}
\centering
\begin{picture}(100,100)(0,-5)
\put(10,10){\makebox(0,0){$\bullet$}}
\put(7,15){\makebox(0,0){$a_j$}}
\put(10,10){\oval(18,22)}
\put(10,-6){\makebox(0,0){$p_j$}}
\put(10,90){\makebox(0,0){$\bullet$}}
\put(10,95){\makebox(0,0){$a_r$}}
\put(10,90){\oval(18,22)}
\put(10,74){\makebox(0,0){$p_r$}}
\put(90,50){\makebox(0,0){$\bullet$}}
\put(90,55){\makebox(0,0){$a_l$}}
\put(90,50){\oval(18,22)}
\put(90,34){\makebox(0,0){$l$}}
\put(40,40){\makebox(0,0){$\bullet$}}
\put(42,44.5){\makebox(0,0){$b_i$}}
\put(40,50){\makebox(0,0){$\bullet$}}
\put(42,54){\makebox(0,0){$a_i$}}
\put(40,60){\makebox(0,0){$\bullet$}}
\put(42,64){\makebox(0,0){$c_i$}}
\put(40,50){\oval(18,38)}
\put(40,25){\makebox(0,0){$p_i$}}

\dashline{3}(10,10)(40,60)
\dashline{3}(40,40)(90,50)
\dashline{3}(10,90)(40,50)
\put(10,10){\line(1,1){30}}
\put(10,90){\line(1,-1){30}}
\put(40,50){\line(1,0){50}}
\end{picture}
\caption{The occurence of \PatternPThree\ in the proof of Lemma~\ref{lem:paths}.}
\label{fig:p3paths}
\end{figure}
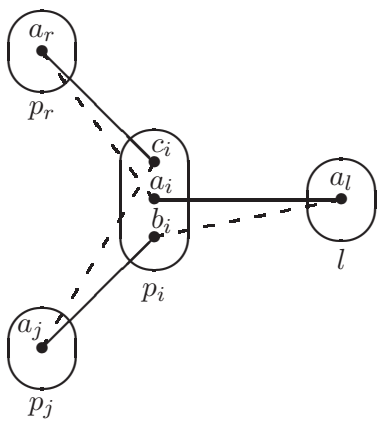

Given a subset $S$ of variables, an \textit{$S$-path} between two variables $y_1$ and $y_2$ is a path
$R(y_1,x_2),R(x_2,x_3),\ldots,R(x_{k},y_2)$ of non-trivial
constraints with $k \geq 2$ and $x_2,\ldots,x_{k} \in S$.

\begin{lemma}
\label{lem:xpath} Consider the instance $\ixv$. There is no
$(\sipxv)$-path between two variables in $\mvar \backslash \sipxv$
and there is no cycle of non-trivial constraints in $\ixv[\sipxv]$.
\end{lemma}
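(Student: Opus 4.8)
The plan is to exploit Lemma~\ref{lem:paths}, which says that every inner variable $p_i \in \sipxv$ lies in at most two non-trivial constraints of $\ixv$, namely the ``incoming'' constraint $R(p_j,p_i)$ with $j<i$ (where $(p_j\rightarrow p_i)$ is the first propagation into $p_i$) and the ``outgoing'' constraint $R(p_i,p_r)$ with $(p_i\rightarrow p_r)\in\pxv$. In particular, in $\ixv$ each inner variable has degree at most $2$, and moreover its unique possible out-neighbour $p_r$ and in-neighbour $p_j$ are dictated by the trace.

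First I would prove there is no cycle of non-trivial constraints inside $\ixv[\sipxv]$. Suppose for contradiction such a cycle exists; take the variable $p_i$ on the cycle with the largest index $i$. Both of its cycle-neighbours lie in $\sipxv$ and are joined to $p_i$ by non-trivial constraints. By Lemma~\ref{lem:paths} these two neighbours must be exactly $p_j$ and $p_r$ with $j<i$ and $(p_i\rightarrow p_r)\in\pxv$. Now $p_r\in\spxv$ appears in $\pxv$ because $p_i$ propagated to it, so $r$ is not the index of first appearance of any variable that propagated \emph{before} $p_i$ was first touched; more to the point, since $p_i$ was first modified by $p_j$ with $j<i$, and $p_i$ only starts propagating after it is modified, we get $r>i$ (the variable $p_r$ first appears no earlier than the step $(p_i\rightarrow p_r)$, which comes after $p_i$'s own first appearance at index $i$). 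This contradicts maximality of $i$ on the cycle. Hence $\ixv[\sipxv]$ is acyclic.

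Next I would handle the $(\sipxv)$-path claim. Suppose $R(y_1,x_2),R(x_2,x_3),\ldots,R(x_k,y_2)$ is an $(\sipxv)$-path of non-trivial constraints with $y_1,y_2\in X\setminus\sipxv$ and $x_2,\ldots,x_k\in\sipxv$ and $k\ge 2$. Consider the largest index among the intermediate variables, say $x_m$ is $p_i$ with $i$ maximal over $\{x_2,\ldots,x_k\}$. Its two path-neighbours $x_{m-1},x_{m+1}$ are joined to $p_i$ by non-trivial constraints of $\ixv$; by Lemma~\ref{lem:paths} they must be $\{p_j,p_r\}$ with $j<i$ and $(p_i\rightarrow p_r)\in\pxv$, so again $r>i$. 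If $p_r$ is an intermediate variable of the path, this contradicts maximality of $i$. Hence $p_r$ must be an endpoint, i.e. $p_r\in\{y_1,y_2\}\subseteq X\setminus\sipxv$. But by the definition of $\sipxv$, if $(p_i\rightarrow p_r)\in\pxv$ then $p_i$ propagates to some variable in $\spxv$, making $p_i$ inner --- that is consistent --- and symmetrically one checks whether $p_r$ being reached by a propagation from $p_i$ forces anything; it does not by itself force $p_r$ inner. So I need a second argument: the other neighbour of $p_i$ on the path is then $p_j$ with $j<i$, which is also intermediate (since $p_r$ took the endpoint slot and the path has $\ge 2$ intermediate vertices only if $k\ge 3$; if $k=2$ there is a single intermediate vertex $x_2$ whose two neighbours are the two endpoints $y_1,y_2$, both outside $\sipxv$, but Lemma~\ref{lem:paths} forces one neighbour to be $p_j$ with $j<i$ and the in-edge into $x_2=p_i$ came from $p_j\in\spxv$; $p_j$ could still be outside $\sipxv$, e.g.\ $p_j=x$). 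So the clean finish is: walk along the path from the maximal-index vertex following the forced structure. At $p_i$, the neighbour that is \emph{not} $p_r$ is $p_j$ with $j<i$, and the edge $R(p_j,p_i)$ is the incoming propagation edge, so $p_j$ itself propagated further (to $p_i$), meaning $p_j$ is inner. Repeat at $p_j$: its neighbours on the path are $p_i$ (already visited) and, by Lemma~\ref{lem:paths} applied to $p_j$, the variable $p_{j'}$ with $j'<j$ feeding the first propagation into $p_j$. Iterating, we get a strictly decreasing sequence of indices $i>j>j'>\cdots$ of \emph{intermediate} variables, which must terminate; it can only terminate at an endpoint, but the endpoints are not in $\sipxv$ whereas every vertex in this decreasing chain is intermediate and in $\sipxv$ --- contradiction. (The base case $p_0=x$: if the chain reaches $x$, then $x$ is an intermediate vertex of the path, but $x\notin\sipxv$ since $x$ has no incoming propagation; actually $x$ could be inner if $x$ propagated, and indeed $(x\rightarrow y_0)\in\pxv$, so $x\in\sipxv$ --- one must double-check that $x$ being intermediate is still impossible, which follows because all constraints incident to $x$ are trivial in $\ixv$ by the first line of the proof of Lemma~\ref{lem:paths}, so $x$ cannot be an intermediate vertex of a path of \emph{non-trivial} constraints.)

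The main obstacle I anticipate is bookkeeping the index/propagation-order relationships precisely: specifically, pinning down that whenever $(p_i\rightarrow p_r)\in\pxv$ one has $r>i$ (this uses that $p_i$ only appears as the left side of a propagation after it has itself been modified, together with the ``no-further-propagation steps done last'' normalisation), and cleanly separating the cyclic case from the path case so that the ``follow the decreasing chain of incoming-propagation edges'' argument terminates at the right kind of vertex. Once the ``$(p_i\rightarrow p_r)\implies r>i$'' fact and the ``$x$ has only trivial incident constraints in $\ixv$'' fact are isolated as the two pillars, both statements follow by the maximal-index / decreasing-chain argument sketched above, and I would not expect further surprises.
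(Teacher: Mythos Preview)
Your proposal contains a genuine gap: the claim that $(p_i\rightarrow p_r)\in\pxv$ implies $r>i$ is \emph{false}. The index $r$ records the order of \emph{first} appearance of $p_r$ in the trace, and nothing prevents $p_r$ from having been touched by an earlier propagation before $p_i$ ever reaches it. Concretely, consider the trace $(x\rightarrow a),(x\rightarrow b),(b\rightarrow a)$: here $p_0=x$, $p_1=a$, $p_2=b$, and $(p_2\rightarrow p_1)\in\pxv$ with $1<2$. Lemma~\ref{lem:paths} only guarantees $j<i$ for the \emph{incoming} neighbour; it says nothing about the index of the outgoing neighbour $p_r$. Since your cycle argument rests entirely on ``$r>i$ contradicts maximality,'' that argument does not go through as written. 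Your own closing paragraph flags this as ``bookkeeping,'' but it is not fixable bookkeeping---the inequality simply does not hold.

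Your fallback ``decreasing chain'' argument for the path case is correct and would also rescue the cycle case (walking the $p_j$-direction around the cycle yields strictly decreasing indices, impossible on a finite cycle). But this is doing a lot of work to recover from the wrong choice of extremal vertex. The paper's proof takes the \emph{minimum}-index intermediate variable $p_i$ instead of the maximum. Lemma~\ref{lem:paths} then hands you the incoming neighbour $p_j$ with the guaranteed inequality $j<i$; since $(p_j\rightarrow p_i)\in\pxv$, the variable $p_j$ is inner, hence not an endpoint of the path, hence an intermediate variable of smaller index---contradiction in one step. The same argument handles the cycle verbatim. What the paper's choice buys is that it leans on the one index inequality Lemma~\ref{lem:paths} actually provides ($j<i$), rather than on the one it does not ($r>i$); your decreasing-chain detour is essentially an unrolled version of this same observation.
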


\begin{proof}
Let $y_1,y_2 \in \mvar \backslash \sipxv$ and assume for the sake
of contradiction that a $(\sipxv)$-path
$R(y_1,x_2),R(x_2,x_3),\ldots,R(x_{k-1},y_2)$ exists. Let $p_i \in
\{x_2,\ldots,x_{k-1}\}$ be such that $i$ is minimum.
Since $p_i$ is in the scope of two non-trivial constraints in this path,
it follows from Lemma~\ref{lem:paths} that $p_i$ is in the scope
of exactly two non-trivial constraints, one of which is
of the form $R(p_j,p_i)$ with $j < i$ and $(p_j \rightarrow p_i) \in
\pxv$. It follows from $(p_j \rightarrow p_i) \in
\pxv$ that $p_j \in \sipxv$ and
hence $p_j$ is not an endpoint of the path, and then $j < i$  contradicts the minimality of $i$. The second part of the claim follows from the same argument, by
considering a cycle as a $(\sipxv)$-path
$R(x_1,x_2),R(x_2,x_3),\ldots,R(x_{k-1},x_1)$ with $x_1 \in (\sipxv)$
and defining $p_i$ as the variable among $\{x_1,\ldots,x_{k-1}\}$
with minimum index.
\end{proof}

\begin{lemma}
\label{lem:solfix}
$\ixv$ has a solution if and only if $\ixv[\mvar \backslash \sipxv]$ has a solution.
\end{lemma}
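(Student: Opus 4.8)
The plan is to prove both directions of the equivalence. The forward direction is immediate: if $\ixv$ has a solution $s$, then its restriction to $\mvar \backslash \sipxv$ is a solution to $\ixv[\mvar \backslash \sipxv]$, since every constraint of the projection is a constraint of $\ixv$. The substance of the lemma is the converse, so assume we are given a solution $s'$ to $\ixv[\mvar \backslash \sipxv]$, and we must extend it to a full solution of $\ixv$ by choosing values for the inner variables in $\sipxv$.

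First I would invoke the structural results already proved. By Lemma~\ref{lem:paths}, every $p_i \in \sipxv$ is in the scope of at most two non-trivial constraints of $\ixv$, of the form $R(p_j,p_i)$ and $R(p_i,p_r)$ with $j<i$; by Lemma~\ref{lem:xpath}, the non-trivial constraints among inner variables form a \emph{forest} (no cycles in $\ixv[\sipxv]$), and no $(\sipxv)$-path joins two variables outside $\sipxv$. The key consequence I want to extract is that the bipartite-like interaction between $\sipxv$ and $\mvar\backslash\sipxv$ is tame: each connected component $T$ of the forest $\ixv[\sipxv]$ is attached to the rest of the instance through non-trivial constraints at \emph{at most one} external variable. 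Indeed, if a component $T$ had non-trivial constraints to two distinct external variables $y_1,y_2$, concatenating the witnessing constraints with a path through $T$ would yield a $(\sipxv)$-path between $y_1$ and $y_2$, contradicting Lemma~\ref{lem:xpath}; and since each inner variable has at most two non-trivial incident constraints, a single inner variable cannot simultaneously have two external non-trivial neighbours and a neighbour inside $T$, so in fact the whole attachment happens at one inner vertex touching one external vertex (or $T$ is entirely isolated from the outside).

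The main work is then to extend $s'$ component-by-component. Fix a component $T$. If $T$ is isolated from $\mvar\backslash\sipxv$, I just need a solution to $\ixv[T]$; if $T$ attaches to a single external variable $y$ at an inner vertex $p$, I need a solution to $\ixv[T]$ that is additionally compatible with the already-chosen value $s'(y)$ across the single constraint $R(p,y)$ — equivalently, a solution to $\ixv[T\cup\{y\}]$ extending the partial assignment $y\mapsto s'(y)$. Because $T$ is a tree (each inner variable has degree at most $2$ here, so $T$ is actually a \emph{path}), and $\ixv$ is arc consistent, such an extension can be built greedily: root the path at $p$, assign $p$ a value supporting $s'(y)$ across $R(p,y)$ (one exists by arc consistency of $\ixv$), then walk down the path, at each step picking a value of the next variable compatible with the value just chosen for its predecessor (again possible by arc consistency, since the constraint graph restricted to $T$ is a path with no branching and no cycles, so no consistency conflict can arise). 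Doing this independently for every component yields values for all of $\sipxv$; together with $s'$ on $\mvar\backslash\sipxv$ this assignment satisfies every non-trivial constraint of $\ixv$ — those inside $\mvar\backslash\sipxv$ by hypothesis on $s'$, those inside a single component $T$ and the one bridging constraint by the greedy construction, and there are no other non-trivial constraints by the two structural lemmas.

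The step I expect to be the main obstacle is pinning down precisely the claim that each component of $\ixv[\sipxv]$ has at most one external attachment point and justifying that the greedy tree/path extension never gets stuck — this is where Lemma~\ref{lem:paths} (degree bound, forcing components to be paths) and Lemma~\ref{lem:xpath} (no external-to-external $(\sipxv)$-path, no internal cycle) must be combined carefully, and where one has to be careful that "non-trivial in $\ixv$" is the right notion throughout (a constraint trivial in $\ixv$ imposes nothing, so it is harmless). Everything else is routine bookkeeping over arc consistency.
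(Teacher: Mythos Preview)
Your approach is correct and is essentially the same idea as the paper's: use Lemmas~\ref{lem:paths} and~\ref{lem:xpath} to establish that the inner variables carry a forest (in fact path) structure with very limited attachment to the outside, then extend the given partial solution greedily along these paths using arc consistency. The paper organises this incrementally---growing a set $Y$ from $\mvar\setminus\sipxv$ one inner variable at a time while maintaining the invariant that no $(\mvar\setminus Y)$-path joins two $Y$-variables---whereas you decompose $\sipxv$ into components up front; both arrive at the same greedy extension.

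There is one small gap in your attachment argument. You rule out a component $T$ touching two \emph{distinct} external variables via Lemma~\ref{lem:xpath}, and you rule out a single inner vertex having two external neighbours together with a neighbour in $T$ via the degree bound of Lemma~\ref{lem:paths}. But neither of these excludes the remaining case: a path component $q_1\text{--}\cdots\text{--}q_m$ whose two endpoints both attach to the \emph{same} external variable $y$. This case is indeed impossible (the proof of Lemma~\ref{lem:xpath} nowhere uses $y_1\neq y_2$; alternatively, Lemma~\ref{lem:paths} forces each endpoint with an external neighbour to have its unique $T$-neighbour at a strictly smaller index, which pushes the minimum index to an interior vertex of the path and yields a contradiction there), but your justification as written does not cover it. The paper's incremental formulation sidesteps this entirely, since the running invariant immediately gives uniqueness of the bridging constraint at each step.
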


\begin{proof}
The ``only if'' implication is trivial, so we focus on the other
direction. Suppose that there exists a solution $\phi$ to $\ixv[\mvar
\backslash \sipxv]$. Let $Y$ be a set of variables initialized to
$\mvar \backslash \sipxv$. We will grow $Y$ with the invariants that
$(i)$ we know a solution $\phi$ to $\ixv[Y]$, and $(ii)$ there is no
$(\mvar \backslash Y)$-path between two variables in $Y$ (which is
true at the initial state by Lemma~\ref{lem:xpath}).

If there is no non-trivial constraint between $\mvar \backslash Y$
and $Y$ then $\ixv$ is satisfiable if and only if $\ixv[\mvar
\backslash Y]$ is. By construction $\mvar \backslash Y \subseteq
\sipxv$ and by Lemma~\ref{lem:xpath} we know that $\ixv[\mvar
\backslash Y]$ has no cycle of non-trivial constraints. Because
$\ixv[\mvar \backslash Y]$ is arc consistent and acyclic it has a
solution~\cite{Freuder82:acm}, and we can conclude that in this case $\ixv$ has a
solution.

Otherwise, let $p_i \in \mvar \backslash Y$ be such that there exists
a non-trivial constraint between $p_i$ and some variable $y \in Y$.
By $(ii)$, this non-trivial constraint must be unique (with respect to $p_i$) as otherwise we
would have a $(\mvar \backslash Y)$-path between two variables in
$Y$. By arc consistency, there exists $a_i \in \dxv{p_i}$ such that
$(a_i,\phi(y)) \in R(p_i,y)$; because this non-trivial constraint
is unique, setting $\phi(p_i) = a_i$ yields a solution to $\ixv[Y
\cup \{p_i\}]$. Because any $(\mvar \backslash (Y \cup
\{p_i\}))$-path between two variables in $Y \cup \{p_i\}$ would
extend to a $(\mvar \backslash Y)$-path between $Y$
variables by going through $p_i$, we know that no such path exists.
Then $Y \gets Y \cup \{p_i\}$ satisfies both invariants, so we can
repeat the operation until we have a solution to the whole instance
or all constraints between $Y$ and $\mvar \backslash Y$ are trivial.
In both cases $\ixv$ has a solution.
\end{proof}

\begin{lemma}
\label{lem:outer}
$I$ has a solution if and only if $I[\mvar \backslash \sipxv]$ has a solution.
\end{lemma}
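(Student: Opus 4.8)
The ``only if'' direction is immediate, since the restriction of a solution of $I$ to $\mvar\setminus\sipxv$ is a solution of $I[\mvar\setminus\sipxv]$; so the plan is entirely about the converse, namely turning a solution of $I[\mvar\setminus\sipxv]$ into a solution of the whole instance $I$.

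First I would note that the solutions of $\ixv$ are exactly the solutions of $I$ that assign $v$ to $x$ (a solution of $I$ with $x=v$ uses only arc-consistent values, hence values that survive in $\ixv$). Consequently $\ixv$ satisfiable implies $I$ satisfiable, and by Lemma~\ref{lem:solfix} it would suffice to deduce, from a solution $\psi$ of $I[\mvar\setminus\sipxv]$, that $\ixv[\mvar\setminus\sipxv]$ is satisfiable. Now $I[\mvar\setminus\sipxv]$ and $\ixv[\mvar\setminus\sipxv]$ have the same variable set and differ only in that on the ``reduced-outer'' variables (those in $\spxv\setminus\sipxv$) the instance $\ixv$ uses a proper subset of the corresponding $I$-domain --- this is the content of property~$(iii)$ announced at the start of Section~\ref{sec:PatternPThree}. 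So the concrete task becomes: repair $\psi$ on the reduced-outer variables so that it becomes a solution of $\ixv[\mvar\setminus\sipxv]$. The key fact I would try to prove is that in $\ixv$ a reduced-outer variable is non-trivially constrained only by variables of $\sipxv$: if it also had a non-trivial constraint towards some other variable of $\mvar\setminus\sipxv$, one should be able to exhibit an occurrence of \PatternPThree\ (or to contradict singleton arc consistency) in the spirit of the proof of Lemma~\ref{lem:paths}. Granting this, in $\ixv[\mvar\setminus\sipxv]$ every reduced-outer variable is isolated, so $\psi$ may be re-assigned there to any value surviving in $\ixv$ --- such a value exists because $I$ is SAC and hence $\ixv$ has no empty domain --- while outside those variables $\psi$ already is a solution of $\ixv[\mvar\setminus\sipxv]$, because $\ixv$ and $I$ agree there. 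Applying Lemma~\ref{lem:solfix} then gives a solution of $\ixv$, which is a solution of $I$.

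Should the ``isolation'' claim fail, the fallback is to extend $\psi$ directly: pin every variable of $\mvar\setminus\sipxv$ to its $\psi$-value in $I$, enforce arc consistency to obtain an instance $I'$, and prove that (a) no domain is emptied, exploiting that $I$ is SAC and that the boundary constraints only delete whole values from the domains of $\sipxv$-variables, and (b) $I'[\sipxv]$ remains acyclic --- a re-run of the reasoning behind Lemmas~\ref{lem:paths} and~\ref{lem:xpath}, which is available because $I\in\CSP{\PatternPThree}$ and this property passes to sub-instances. An acyclic arc consistent instance is satisfiable~\cite{Freuder82:acm}, and its solution together with $\psi$ solves $I$. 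In both approaches the main obstacle is the same: Lemmas~\ref{lem:paths} and~\ref{lem:xpath} control the structure of $\ixv$, whose domains differ from those of $I$, and the delicate step is to transport this control to where it is really needed --- to $\ixv[\mvar\setminus\sipxv]$ viewed through a solution of $I[\mvar\setminus\sipxv]$ in the first approach, or to $I$ with its boundary fixed in the second.
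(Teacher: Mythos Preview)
Your isolation claim---that in $\ixv$ every reduced-outer variable is non-trivially constrained only by variables of $\sipxv$---is false, and this breaks the first approach. A three-variable instance already shows it: take $D(x)=\{0,1\}$, $D(y)=\{0,1,2\}$, $D(w)=\{0,1\}$, let $R(x,y)$ forbid only $(0,2)$, let $R(y,w)$ forbid only $(1,1)$, and let $R(x,w)$ be trivial. This instance is singleton arc consistent and, having maximum degree~$2$, lies in \CSP{\PatternPThree}. Branching on $x=0$ removes $y=2$ and nothing else, so $y\in\spxv\setminus\sipxv$ and $w\notin\spxv$; yet $R(y,w)$ is still non-trivial in $\ixv$. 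The reason the Lemma~\ref{lem:paths} argument does not transfer is structural: there one obtains the third leg of \PatternPThree\ precisely because the middle variable is \emph{inner} and therefore propagates to a further variable, whereas a reduced-outer variable by definition does not.

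The fallback also has gaps. For~(a), singleton arc consistency only guarantees that fixing \emph{one} variable does not wipe out a domain; fixing all of $\mvar\setminus\sipxv$ simultaneously to $\psi$ is a much stronger assignment, and nothing you cite prevents a wipeout inside $\sipxv$. For~(b), Lemmas~\ref{lem:paths} and~\ref{lem:xpath} are statements about $\ixv$, whose structure is governed by the specific trace starting from the single assignment $x=v$; your $I'$ arises from an entirely different set of pinned variables (not even including $x$, since $x\in\sipxv$), so there is no trace to induct along and the acyclicity of $I'[\sipxv]$ is unsupported. The paper's proof avoids both difficulties by arguing indirectly: it assumes $\ixv[\mvar\setminus\sipxv]$ is unsatisfiable, takes a minimum-size witness $Z\subseteq\mvar\setminus\sipxv$, picks a solution $\phi$ of $I[Z]$ maximising agreement with the $\ixv$-domains, and then uses the absence of \PatternPThree\ together with the minimality of $Z$ and the extremality of $\phi$ to force a contradiction. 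The work is precisely in handling the interaction between a reduced-outer variable $p_r$ and its neighbours in $Z$, which your isolation claim tried to sidestep.
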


\begin{proof}
Again the ``only if'' implication is trivial so we focus on the other
direction. Let us assume for the sake of contradiction that $I[\mvar
\backslash \sipxv]$ has a solution but $I$ does not. In
particular  this implies that $\ixv$ does not have a solution, and
then by Lemma~\ref{lem:solfix} we know that $\ixv[\mvar \backslash
\sipxv]$ has no solution either. We define $Z$ as a subset of $\mvar \backslash
\sipxv$ of minimum size such that $\ixv[Z]$ has no solution. Observe that $\ixv[Z]$ can only differ from
$I[Z]$ by having fewer values in the domain of
the variables in $\spxv$. Let $\phi$ be a solution to $I[Z]$ such that $\phi(y) \in \dxv{y}$ for as many variables $y$ as possible. Because $\phi$ is not a solution to $\ixv[Z]$, there exists $p_r \in
Z \cap \spxv$ and $p_j \in \sipxv$ such that $(p_j
\rightarrow p_r) \in \pxv$ and $\phi(p_r) \in \rho(p_j \rightarrow
p_r)$ (recall that $\rho(p_j \rightarrow p_r)$ is the set of points removed by the AC algorithm in the domain of $p_r$ at step $(p_j \rightarrow p_r)$). By construction, $p_j \notin Z$.

First, let us assume that there exists a variable $y \in Z$, $y \neq p_r$ such that there is no $a_r \in \dxv{p_r}$ with $(\phi(y),a_r) \in R(y,p_r)$. This implies, in particular, that $\phi(y) \notin \dxv{y}$. We first prove that $R(y,p_r)$ and $R(p_j,p_r)$ are the only possible non-trivial constraints involving $p_r$ in $\ixv$. If there exists a fourth variable $z$ such that $R(p_r,z)$ is non-trivial in $\ixv$, then there exist $a_r,b_r \in \dxv{p_r}$ and $a_z \in \dxv{z}$ such that $(a_r,a_z) \in R(p_r,z)$ but $(b_r,a_z) \notin R(p_r,z)$. By assumption we have $(\phi(y),a_r) \notin R(y,p_r)$ and $(\phi(y),\phi(p_r)) \in R(y,p_r)$. Finally, $b_r$ has a support $a_j \in \dxv{p_j}$ and $\phi(p_r) \in \rho(p_j \rightarrow p_r)$ so we have $(a_j,a_r) \in R(p_j,p_r)$ but $(a_j,\phi(p_r)) \notin R(p_j,p_r)$. This produces \PatternPThree\ on $(p_r,y,p_j,z)$ with $p_r$ being the middle variable. Therefore, we know that $R(y,p_r)$ and $R(p_j,p_r)$ are the only possible non-trivial constraints involving $p_r$ in $\ixv$. However, in this case the variable $p_r$ has only one incident non-trivial constraint in $\ixv[Z]$, and hence $\ixv[Z]$ has a solution if and only if $\ixv[Z \backslash p_r]$ has one. This contradicts the minimality of $Z$, and for the rest of the proof we can assume that for every $y \in Z$ there exists some $a_r \neq \phi(p_r)$ such that $a_r \in \dxv{p_r}$ and $(\phi(y),a_r) \in R(y,p_r)$.

Now, let $y \in Z$ be such that $y \neq p_r$ and $|\{ b \in \dxv{p_r}
\mid (\phi(y),b) \in R(y,p_r) \}|$ is minimum. By the argument above, there exists $a_r \in \dxv{p_r}$ such that $(\phi(y),a_r) \in R(y,p_r)$ and $a_r \neq \phi(p_r)$. By the choice of $\phi$ and $p_r$, setting $\phi(p_r)
= a_r$ would violate at least one constraint in $I[Z]$, so there exists some variable $z \in Z$, $z \neq y$
such that $(\phi(z),a_r) \notin R(z,p_r)$. Furthermore, by arc
consistency of $\ixv$ there exists $a_j \in \dxv{p_j}$ such that
$(a_j,a_r) \in R(p_j,p_r)$. Recall that we picked $p_j$ in such a way
that $\phi(p_r) \in \rho(p_j \rightarrow p_r)$, and so we have
$(a_j,\phi(p_r)) \notin R(p_j,p_r)$. We summarize what we have in Figure~\ref{fig:lemsol}.
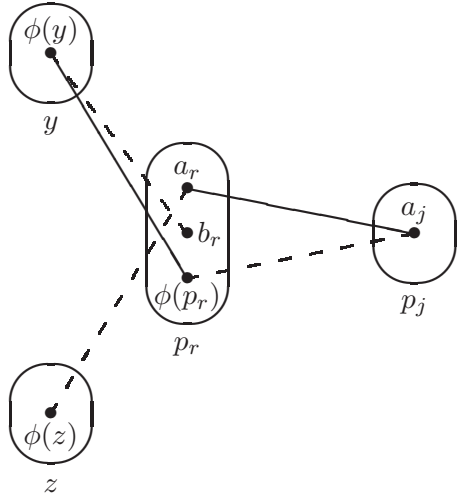
\begin{figure}
\centering
\thicklines \setlength{\unitlength}{1.7pt}
\begin{picture}(100,100)(0,-5)
\put(10,10){\makebox(0,0){$\bullet$}}
\put(10,5){\makebox(0,0){$\phi(z)$}}
\put(10,10){\oval(18,22)}
\put(10,-6){\makebox(0,0){$z$}}
\put(10,90){\makebox(0,0){$\bullet$}}
\put(10,95){\makebox(0,0){$\phi(y)$}}
\put(10,90){\oval(18,22)}
\put(10,74){\makebox(0,0){$y$}}
\put(90,50){\makebox(0,0){$\bullet$}}
\put(90,54.5){\makebox(0,0){$a_j$}}
\put(90,50){\oval(18,22)}
\put(90,34){\makebox(0,0){$p_j$}}
\put(40,40){\makebox(0,0){$\bullet$}}
\put(40,36){\makebox(0,0){$\phi(p_r)$}}
\put(40,50){\makebox(0,0){$\bullet$}}
\put(45,50){\makebox(0,0){$b_r$}}
\put(40,60){\makebox(0,0){$\bullet$}}
\put(40,64){\makebox(0,0){$a_r$}}
\put(40,50){\oval(18,40)}
\put(40,25){\makebox(0,0){$p_r$}}

\dashline{3}(10,10)(40,60)
\dashline{3}(40,40)(90,50)
\dashline{3}(10,90)(40,50)
\put(10,90){\line(3,-5){30}}
\put(40,60){\line(5,-1){50}}
\end{picture}
\caption{Some positive and negative edges between $y$, $z$, $p_j$ and $p_r$. The positive edges $\phi(y)a_r$ and $\phi(z)\phi(p_r)$ are omitted for clarity; $b_r$ is any value in $\dxv{p_r}$ that is not compatible with $\phi(y)$.}
\label{fig:lemsol}
\end{figure}
Observe that unless \PatternPThree\ occurs, for every $b_r \in
\dxv{p_r}$ such that $(\phi(y),b_r) \notin R(y,p_r)$ we also have
$(\phi(z),b_r) \notin R(z,p_r)$. However, recall that $(\phi(y),a_r)
\in R(y,p_r)$ so $\phi(z)$ is compatible with strictly fewer values
in $\dxv{p_r}$ than $\phi(y)$. This contradicts the choice of $y$. It
follows that setting $\phi(p_r) = a_r$ cannot violate any constraint
in $I[Z]$, which is impossible by our choice of
$\phi$ - a final contradiction.
\end{proof}

\begin{theorem}
\CSP{\PatternPThree} is solved by singleton arc consistency.
\end{theorem}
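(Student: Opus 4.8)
The plan is to prove, by induction on the number of variables, the following statement: every singleton arc consistent instance $I \in$ \CSP{\PatternPThree} all of whose domains are non-empty has a solution. The theorem is then immediate, since enforcing SAC on an arbitrary instance of \CSP{\PatternPThree} either empties some domain (so there is no solution) or leaves all domains non-empty, and \CSP{\PatternPThree} is closed under enforcing SAC. The engine of the induction is Lemma~\ref{lem:outer}: for a well-chosen variable $x$ and value $v \in D(x)$ it lets us replace $I$ by the strictly smaller instance $I[\mvar \backslash \sipxv]$ without affecting satisfiability.

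For the base case, if $I$ has no non-trivial constraint then assigning each variable an arbitrary value from its non-empty domain is a solution. Otherwise $I$ has a non-trivial constraint $R(x,y)$, so there are $v \in D(x)$ and $b \in D(y)$ with $(v,b) \notin R(x,y)$. In $\ixv$ the value $b$ must be pruned from $\dxv{y}$; since the domain of $x$ is the only one to have changed when we start enforcing arc consistency, the first recorded step of $\pxv$ has the form $(x \rightarrow y_0)$, so $x \in \sipxv$ and in particular $\sipxv \neq \emptyset$. By Lemma~\ref{lem:outer}, $I$ has a solution if and only if $I[\mvar \backslash \sipxv]$ does, and the latter instance has strictly fewer variables, since it omits at least $x$.

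It remains to check that $I[\mvar \backslash \sipxv]$ meets the hypotheses of the inductive claim. It belongs to \CSP{\PatternPThree}, since any occurrence of \PatternPThree\ in a projection lifts to an occurrence in the whole instance. Its domains are non-empty, as a projection removes variables and constraints but never values. The one point requiring an argument is that $I[\mvar \backslash \sipxv]$ is singleton arc consistent. Fix a value $a$ in the domain of some variable $p \notin \sipxv$ and enforce arc consistency after setting $D(p) = \{a\}$, both in $I$ and in $I[\mvar \backslash \sipxv]$. Deleting constraints can only cause fewer values to lose all of their supports, and deleting a variable deletes exactly the constraints incident to it; hence, by a straightforward induction on the runs of the arc consistency algorithm, the arc-consistent domains obtained in $I[\mvar \backslash \sipxv]$ contain those obtained in $I$ (restricted to $\mvar \backslash \sipxv$). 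The latter are non-empty because $I$ is SAC, so the former are non-empty too. Thus the inductive hypothesis applies and gives a solution to $I[\mvar \backslash \sipxv]$, and then Lemma~\ref{lem:outer} yields a solution to $I$.

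All of the real work lives in Lemmas~\ref{lem:paths}--\ref{lem:outer}: the trace-based structural analysis of $\ixv$ (Lemmas~\ref{lem:paths} and~\ref{lem:xpath}), the reduction of the solvability of $\ixv$ to that of its projection onto $\mvar \backslash \sipxv$ (Lemma~\ref{lem:solfix}), and the transfer of this back to $I$ (Lemma~\ref{lem:outer}). What remains for the theorem is bookkeeping, and the only non-immediate point in that bookkeeping is the preservation of singleton arc consistency under the projection onto $\mvar \backslash \sipxv$ --- this is the claim I would state and justify most carefully, while the base case, the existence of $x,v$ with $\sipxv \neq \emptyset$, and the closure of \CSP{\PatternPThree} under projection are routine.
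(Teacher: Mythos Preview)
Your proof is correct and follows essentially the same approach as the paper's: iterate Lemma~\ref{lem:outer} to strip off $\sipxv$ until nothing remains. You are in fact more careful than the paper on two points the paper leaves implicit---handling the all-trivial base case (equivalently, ensuring $\sipxv \neq \emptyset$ by choosing $x$ incident to a non-trivial constraint) and justifying why the projection $I[\mvar \backslash \sipxv]$ remains singleton arc consistent.
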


\begin{proof}
Let $I \in$ \CSP{\PatternPThree} be singleton arc consistent. Pick
any variable $x$ and value $v \in D(x)$. By singleton arc consistency
the instance $\ixv$ does not have any empty domains. If $\sipxv$ is empty then $I$ has a solution if and only if $I[\mvar \backslash \{x\}]$ has one. Otherwise, by Lemma~\ref{lem:outer}, $I$ has a solution if and only if $I[\mvar
\backslash \sipxv]$ has one. In the latter case we must have $x \in \sipxv$, so overall we can conclude that $I$ has a solution if and only if $I[\mvar
\backslash (\sipxv \cup \{x\})]$ has one. Because $I[\mvar \backslash (\sipxv \cup \{x\})]$ is
singleton arc consistent as well and $\sipxv \cup \{x\} \neq \emptyset$ we
can repeat the procedure until $\mvar \backslash (\sipxv \cup \{x\})$ is empty, at
which point we may conclude that $I$ has a solution.
\end{proof}

\section{Tractability of \PatternREight\ and \PatternRSevenMinus}
\label{sec:R87-}

\PatternPThree\ and \PatternREight\ (Figure~\ref{fig:open2}) are structurally
dissimilar, but the idea of using $\ixv$ and the trace of the arc consistency algorithm to extract
variables from $I$ without altering satisfiability works in the case
of \PatternREight\ as well. We define a \textit{star} to be a non-empty set of constraints whose scopes all intersect.   The \textit{centers} of a star are its variables of highest degree (every
star with three or more variables has a unique center).
The following lemma is the \PatternREight\ analog of Lemma~\ref{lem:paths}; the main differences are a slightly stronger prerequisite (no neighbourhood substitutable values) and that arc consistency leaves stars of non-trivial constraints instead of paths.  

\begin{lemma}
\label{lem:path}
Let $I=(\mvar,\mdom,\mcons) \in$ \CSP{\PatternREight} be singleton arc consistent. Let $x \in \mvar$, $v \in D(x)$ and consider the instance $\ixv$. After the removal of
every neighbourhood substitutable value, every connected component of
non-trivial constraints that intersect with $\spxv$ is a star with a
center in $\spxv$.
\end{lemma}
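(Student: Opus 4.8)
The plan is to mimic the structure of the proof of Lemma~\ref{lem:paths}, but with stars instead of paths. First I would fix a connected component $K$ of non-trivial constraints in $\ixv$ (after removing all neighbourhood substitutable values) that intersects $\spxv$, and I would try to show that every variable in $K$ that has been touched by propagation behaves like an inner variable in the previous section. Concretely, for each $p_i\in\sipxv$ appearing in $K$ with $p_i\neq x$, pick $p_j$ with $j<i$ minimal such that $(p_j\to p_i)$ occurs first in $\pxv$, and — exactly as in Lemma~\ref{lem:paths} — extract from the ``no-trivial-pruning-last'' assumption a witness $c_i\in\rho(p_j\to p_i)$ and a step $(p_i\to p_r)\in\pxv$ with a value $a_r\in\rho(p_i\to p_r)$ such that $(c_i,a_r)\in R(p_i,p_r)$, $(v_i,a_r)\notin R(p_i,p_r)$ for all $v_i\in\dxv{p_i}$, $(v_j,c_i)\notin R(p_j,p_i)$ for all $v_j\in\dxv{p_j}$, and $p_j\neq p_r$.

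The core step is then: show that in $\ixv$, after removing neighbourhood substitutable values, every such $p_i$ together with its incident non-trivial constraints forms a star centred at $p_i$, \emph{and} all of those incident non-trivial constraints "pass through" the point $c_i$ in the sense needed to make the component a single star centred in $\spxv$. I would argue by contradiction: suppose $p_i$ is in the scope of two non-trivial constraints $R(p_i,l_1)$ and $R(p_i,l_2)$ with $l_1,l_2\notin\{p_j,p_r\}$ and $l_1\neq l_2$ (or more generally that the component around $p_i$ is not a star). Using arc consistency of $\ixv$ and the structure of $\PatternREight$ — which has a degree-2 middle variable with three values and a specific pattern of positive/negative edges, plus the outer structure — I would locate the seven points witnessing $\PatternREight$ with $p_i$ playing the role of the middle variable, $p_j$ and $p_r$ on one side (the side carrying $c_i$ and $a_r$), and $l_1,l_2$ (or the relevant outer points) on the other. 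The removal of neighbourhood substitutable values is what guarantees the existence of the extra distinguishing values on the $l$-side that the previous lemma did not need: without it, a point could fail to be a genuine obstruction. Once the contradiction is in place, every propagated variable has a unique "branching" structure; combined with the fact that $c_i$-type values force all incident non-trivial constraints of $p_i$ to share $p_i$ as a common variable, the connected component collapses to a star.

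Finally I would handle the center-in-$\spxv$ claim and the variables of $K$ \emph{not} in $\sipxv$. A variable $p_r\in\spxv\setminus\sipxv$ is an endpoint of propagation, so by the argument above it can be incident (within $K$) to only the constraints coming "into" it from inner variables; the highest-degree variable of the component must therefore be one of the inner variables $p_i$ (which has both an incoming and an outgoing propagation edge), so the center lies in $\spxv$. For a variable of $K$ outside $\spxv$ entirely, arc consistency of $\ixv$ and $I\in\CSP{\PatternREight}$ force it to be a leaf of the star by the same forbidden-pattern argument. The main obstacle I anticipate is the bookkeeping in the contradiction step: unlike $\PatternPThree$, the pattern $\PatternREight$ has seven points and a richer edge structure, so I expect to need a careful case analysis according to which side of the star a putative extra constraint of $p_i$ lies on, and to use the "no mergeable/neighbourhood-substitutable points" hypothesis in several places to manufacture the exact values needed to instantiate $\PatternREight$ (rather than merely a proper subpattern of it).
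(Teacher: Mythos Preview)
Your plan has a structural gap at the contradiction step. You assume, towards a contradiction, that ``$p_i$ is in the scope of two non-trivial constraints $R(p_i,l_1),R(p_i,l_2)$ with $l_1,l_2\notin\{p_j,p_r\}$'', but this configuration is exactly what a star centred at $p_i$ looks like; it contradicts nothing. What genuinely obstructs star shape is a simple path of length two \emph{away} from $p_{k+1}$, i.e.\ non-trivial constraints $R(p_{k+1},z_1)$ and $R(z_1,z_2)$ in series. Since \PatternREight\ is itself a $4$-variable path, your picture with ``$p_j,p_r$ on one side and $l_1,l_2$ on the other'' cannot instantiate it: the forbidden pattern must be found along a chain $p_j\text{--}p_{k+1}\text{--}z_1\text{--}z_2$, not along branches at $p_i$.

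There is a second missing ingredient. Mimicking Lemma~\ref{lem:paths} by focusing on inner variables and the pair $(p_j\to p_i),(p_i\to p_r)$ does not give you the fact you actually need, namely that $R(p_j,p_{k+1})$ is \emph{trivial} in $\ixv$. The paper obtains this by induction on the trace order: if the component meeting $p_{k+1}$ is not a star centred in $\spxv$, then by the induction hypothesis it cannot meet any $p_j$ with $j\le k$, so the incoming constraint $R(p_j,p_{k+1})$ is forced to be trivial. Only then can one produce a single $a_j\in\dxv{p_j}$ compatible with \emph{both} surviving values $a_{k+1},b_{k+1}$ and incompatible with the propagated $c_{k+1}$, which is precisely the middle column of \PatternREight. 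With that in hand, the paper runs a two-case analysis along the chain (according to whether some $b_1\in\dxv{z_1}$ is incompatible with $a_2$), using the neighbourhood-substitutability hypothesis not vaguely ``in several places'' but specifically to manufacture a third variable $z_3$ (first case) or to show $a_1$ is dominated by $b_1$ (second case). Your outline recovers neither the triviality of $R(p_j,p_{k+1})$ nor the length-two path, so the \PatternREight\ witness cannot be assembled as described.
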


\begin{proof}
We proceed by induction. Suppose that all neighbourhood substitutable values have been removed. First, no connected component of non-trivial
constraints may contain $p_0 = x$. Then, let $k \geq 0$ and
suppose that every connected component of non-trivial constraints
that intersect $\{p_i \mid i \leq k \}$ is a star centered on
$\spxv$. Suppose also, for the sake of contradiction, that there
exists a connected component $\mathcal{G}$ of non-trivial constraints
that contains $p_{k+1}$ and that is \textit{not} a star centered on
$\spxv$.

Let $p_j$, $j \leq k$ be such that $(p_j \rightarrow p_{k+1}) \in
\pxv$. By the induction hypothesis, $p_j$ cannot be part of
$\mathcal{G}$ and hence $R(p_j,p_{k+1})$ must be trivial.
Furthermore, if every simple path of non-trivial constraints starting
at $p_{k+1}$ had length 1 then $\mathcal{G}$ would be a star centered
on $p_{k+1}$, which would contradict our assumption. Therefore, there
exist two distinct variables $z_1,z_2 \notin \{p_j,p_{k+1}\}$ such
that neither $R(p_{k+1},z_1)$ nor $R(z_1,z_2)$ is trivial (again,
the claim $z_1,z_2 \neq p_j$ comes from the fact that $p_j$ is not
part of $\mathcal{G}$).

Because $R(p_{k+1},z_1)$ is not trivial, there exist two distinct
values $a_{k+1},b_{k+1} \in \dxv{p_{k+1}}$ and $a_1 \in \dxv{z_1}$
such that $(a_{k+1},a_1) \in R(p_{k+1},z_1)$ but $(b_{k+1},a_1)
\notin R(p_{k+1},z_1)$. Furthermore, $R(p_j,p_{k+1})$ is trivial and
hence there exists $a_j \in \dxv{p_j}$ such that
$(a_j,a_{k+1}),(a_j,b_{k+1}) \in R(p_j,p_{k+1})$. Finally, since
$(p_j \rightarrow p_{k+1}) \in \pxv$ some propagation must have taken
place in the domain of $p_{k+1}$, and hence there exists $c_{k+1}$
such that $(a_j,c_{k+1}) \notin R(p_j,p_{k+1})$. We can summarize
what we have in the following picture (the tuple $(a_1,a_2)$ comes
from the fact that $a_1$ must have a support in $R(z_1,z_2)$).

\thicklines \setlength{\unitlength}{2pt}
\begin{center}
\begin{picture}(110,40)(0,0)

\put(10,20){\makebox(0,0){$\bullet$}}
\put(10,24){\makebox(0,0){$a_j$}}

\put(40,10){\makebox(0,0){$\bullet$}}
\put(40,13){\makebox(0,0){$c_{k+1}$}}
\put(40,20){\makebox(0,0){$\bullet$}}
\put(40,24){\makebox(0,0){$b_{k+1}$}}
\put(40,30){\makebox(0,0){$\bullet$}}
\put(40,33){\makebox(0,0){$a_{k+1}$}}

\put(70,30){\makebox(0,0){$\bullet$}}
\put(70,33){\makebox(0,0){$a_1$}}

\put(100,30){\makebox(0,0){$\bullet$}}
\put(100,33){\makebox(0,0){$a_2$}}

\put(10,20){\oval(18,18)}
\put(10,7){\makebox(0,0){$p_j$}}
\put(40,20){\oval(18,38)}
\put(40,-3){\makebox(0,0){$p_{k+1}$}}
\put(70,30){\oval(18,18)}
\put(70,17){\makebox(0,0){$z_1$}}
\put(100,30){\oval(18,18)}
\put(100,17){\makebox(0,0){$z_2$}}

\put(10,20){\line(3,1){30}}
\put(10,20){\line(3,0){30}}

\put(40,30){\line(3,0){30}}

\put(70,30){\line(3,0){30}}

\dashline{4}(10,20)(40,10)
\dashline{4}(40,20)(70,30)
\end{picture}
\end{center}

\vspace{3mm}

\textbf{First case: there exists $\bm{b_1 \in \dxv{z_1}}$ such that
$\bm{(b_1,a_2) \notin R(z_1,z_2)}$}. For \PatternREight\ not to
occur, $(b_{k+1},b_1)$ must not belong to $R(p_{k+1},z_1)$. By arc
consistency, $b_1$ must be connected to some $d_{k+1} \in
\dxv{p_{k+1}}$. If there exists one such $d_{k+1}$ such that $(d_{k+1},a_1) \notin R(p_{k+1},z_1)$, then \PatternREight\ occurs
again, so $a_1$ dominates $b_1$ in the constraint $R(p_{k+1},z_1)$.
However, recall that all neighbourhood substitutable values have been
removed, so there must exist a variable $z_3$ (potentially equal to
$z_2$, but different from $p_j,p_{k+1},z_1$) and $a_3 \in \dxv{z_3}$
such that $(b_1,a_3) \in R(z_1,z_3)$ but $(a_1,a_3) \notin
R(z_1,z_3)$. Finally, because $b_{k+1}$ is arc consistent, there
exists $c_1 \in D(z_1)$ such that $(b_{k+1},c_1) \in R(p_{k+1},z_1)$.
We obtain the following two structures, which may only differ on the
last constraint.

\thicklines \setlength{\unitlength}{2pt}
\begin{center}
\begin{picture}(110,50)(0,0)

\put(10,20){\makebox(0,0){$\bullet$}}
\put(10,24){\makebox(0,0){$a_j$}}

\put(40,10){\makebox(0,0){$\bullet$}}
\put(40,14.5){\makebox(0,0){$d_{k+1}$}}
\put(40,20){\makebox(0,0){$\bullet$}}
\put(40,23){\makebox(0,0){$c_{k+1}$}}
\put(40,30){\makebox(0,0){$\bullet$}}
\put(40.5,34){\makebox(0,0){$b_{k+1}$}}
\put(40,40){\makebox(0,0){$\bullet$}}
\put(40,43){\makebox(0,0){$a_{k+1}$}}

\put(70,20){\makebox(0,0){$\bullet$}}
\put(70,24){\makebox(0,0){$b_1$}}
\put(70,30){\makebox(0,0){$\bullet$}}
\put(70,33){\makebox(0,0){$c_1$}}
\put(70,40){\makebox(0,0){$\bullet$}}
\put(70,43){\makebox(0,0){$a_1$}}

\put(100,30){\makebox(0,0){$\bullet$}}
\put(100,33){\makebox(0,0){$a_3$}}

\put(10,20){\oval(18,18)}
\put(10,7){\makebox(0,0){$p_j$}}
\put(40,25){\oval(18,48)}
\put(40,-3){\makebox(0,0){$p_{k+1}$}}
\put(70,30){\oval(18,38)}
\put(70,7){\makebox(0,0){$z_1$}}
\put(100,30){\oval(18,18)}
\put(100,17){\makebox(0,0){$z_3$}}

\put(10,20){\line(3,1){30}}
\put(10,20){\line(3,2){30}}
\put(10,20){\line(3,-1){30}}

\put(40,40){\line(3,0){30}}
\put(40,30){\line(3,0){30}}
\put(40,10){\line(3,1){30}}

\put(70,20){\line(3,1){30}}

\dashline{4}(10,20)(40,20)
\dashline{4}(40,30)(70,40)
\dashline{4}(40,30)(70,20)
\dashline{4}(70,40)(100,30)
\end{picture}
\end{center}

\vspace{3mm}

\thicklines \setlength{\unitlength}{2pt}
\begin{center}
\begin{picture}(110,50)(0,0)

\put(10,20){\makebox(0,0){$\bullet$}}
\put(10,24){\makebox(0,0){$a_j$}}

\put(40,10){\makebox(0,0){$\bullet$}}
\put(40,14.5){\makebox(0,0){$d_{k+1}$}}
\put(40,20){\makebox(0,0){$\bullet$}}
\put(40,23){\makebox(0,0){$c_{k+1}$}}
\put(40,30){\makebox(0,0){$\bullet$}}
\put(40.5,34){\makebox(0,0){$b_{k+1}$}}
\put(40,40){\makebox(0,0){$\bullet$}}
\put(40,43){\makebox(0,0){$a_{k+1}$}}

\put(70,20){\makebox(0,0){$\bullet$}}
\put(70,24){\makebox(0,0){$b_1$}}
\put(70,30){\makebox(0,0){$\bullet$}}
\put(70,33){\makebox(0,0){$c_1$}}
\put(70,40){\makebox(0,0){$\bullet$}}
\put(70,43){\makebox(0,0){$a_1$}}

\put(100,30){\makebox(0,0){$\bullet$}}
\put(100,33){\makebox(0,0){$a_2$}}

\put(10,20){\oval(18,18)}
\put(10,7){\makebox(0,0){$p_j$}}
\put(40,25){\oval(18,48)}
\put(40,-3){\makebox(0,0){$p_{k+1}$}}
\put(70,30){\oval(18,38)}
\put(70,7){\makebox(0,0){$z_1$}}
\put(100,30){\oval(18,18)}
\put(100,17){\makebox(0,0){$z_2$}}

\put(10,20){\line(3,1){30}}
\put(10,20){\line(3,2){30}}
\put(10,20){\line(3,-1){30}}

\put(40,40){\line(3,0){30}}
\put(40,30){\line(3,0){30}}
\put(40,10){\line(3,1){30}}

\put(70,40){\line(3,-1){30}}

\dashline{4}(10,20)(40,20)
\dashline{4}(40,30)(70,40)
\dashline{4}(40,30)(70,20)
\dashline{4}(70,20)(100,30)
\end{picture}
\end{center}

\vspace{3mm}

The key observation here is that whenever $a_1$ or $b_1$ is compatible with any
value $v$ of a fourth variable $y \notin \{p_j, p_{k+1}, z_1\}$, then $c_1$ is compatible with $v$ as well unless
\PatternREight\ occurs. Thus, the only constraint on which $c_1$ may not
dominate both $a_1$ and $b_1$ is $R(p_{k+1},z_1)$. However, if $(d_{k+1},c_1)
\notin R(p_{k+1},z_1)$ then \PatternREight\ occurs in $(p_j,p_{k+1},z_1,z_2)$,
and if $(a_{k+1},c_1) \notin R(p_{k+1},z_1)$ then \PatternREight\ occurs in
$(p_j,p_{k+1},z_1,z_3)$; this is true for any choice of $a_{k+1}$ and
$d_{k+1}$ so $c_1$ dominates both $a_1$ and $b_1$ in $R(p_{k+1},z_1)$ - a contradiction,
since it means that $a_1$ and $b_1$ should have been removed by neighbourhood substitution.

\textbf{Second case: there does not exist $\bm{b_1 \in \dxv{z_1}}$
such that $\bm{(b_1,a_2) \notin R(z_1,z_2)}$ for any choice of
$\bm{z_2}$}. This means that we must have $(v_1,v_2) \in R(z_1,z_2)$
for all $v_1 \in \dxv{z_1}$ and $v_2 \in \dxv{z_2}$ such that
$(a_1,v_2) \in R(z_1,z_2)$. Putting this together with the fact that
by hypothesis $R(z_1,z_2)$ is not trivial, there exists $b_2 \in
\dxv{z_2}$ such that $(a_1,b_2) \notin R(z_1,z_2)$. Then $b_2$ must
have a support $(b_1,b_2)$ in $R(z_1,z_2)$, and $b_1$ must have a
support $(d_{k+1},b_1)$ in $R(p_{k+1},z_1)$. Because $R(p_j,p_{k+1})$
is trivial, $(a_j,d_{k+1}) \in R(p_j,p_{k+1})$. Let us update our
picture:

\thicklines \setlength{\unitlength}{2pt}
\begin{center}
\begin{picture}(110,50)(0,0)

\put(10,20){\makebox(0,0){$\bullet$}}
\put(10,24){\makebox(0,0){$a_j$}}

\put(40,10){\makebox(0,0){$\bullet$}}
\put(40,14.5){\makebox(0,0){$d_{k+1}$}}
\put(40,20){\makebox(0,0){$\bullet$}}
\put(40,23){\makebox(0,0){$c_{k+1}$}}
\put(40,30){\makebox(0,0){$\bullet$}}
\put(40.5,34){\makebox(0,0){$b_{k+1}$}}
\put(40,40){\makebox(0,0){$\bullet$}}
\put(40,43){\makebox(0,0){$a_{k+1}$}}

\put(70,20){\makebox(0,0){$\bullet$}}
\put(70,24){\makebox(0,0){$b_1$}}
\put(70,30){\makebox(0,0){$\bullet$}}
\put(70,33){\makebox(0,0){$a_1$}}

\put(100,20){\makebox(0,0){$\bullet$}}
\put(100,24){\makebox(0,0){$b_2$}}
\put(100,30){\makebox(0,0){$\bullet$}}
\put(100,33){\makebox(0,0){$a_2$}}

\put(10,20){\oval(18,18)}
\put(10,7){\makebox(0,0){$p_j$}}
\put(40,25){\oval(18,48)}
\put(40,-3){\makebox(0,0){$p_{k+1}$}}
\put(70,25){\oval(18,28)}
\put(70,7){\makebox(0,0){$z_1$}}
\put(100,25){\oval(18,28)}
\put(100,7){\makebox(0,0){$z_2$}}

\put(10,20){\line(3,1){30}}
\put(10,20){\line(3,2){30}}
\put(10,20){\line(3,-1){30}}

\put(40,40){\line(3,-1){30}}
\put(40,10){\line(3,1){30}}

\put(70,30){\line(3,0){30}}
\put(70,20){\line(3,0){30}}
\put(70,20){\line(3,1){30}}

\dashline{4}(10,20)(40,20)
\dashline{4}(40,30)(70,30)
\dashline{4}(70,30)(100,20)
\end{picture}
\end{center}

\vspace{3mm}

Observe that $a_{k+1}$ is an arbitrary value of $\dxv{p_{k+1}}$ that
is compatible with $a_1$. If $(a_{k+1},b_1) \notin R(p_{k+1},z_1)$,
then \PatternREight\ occurs. Hence, every value compatible with $a_1$
in \textit{every} constraint involving $z_1$ is also compatible with
$b_1$. This means that $a_1$ should have been removed by
neighbourhood substitution - a final contradiction.
\end{proof}

In the proof of SAC-solvability of \PatternPThree, only inner variables are extracted from the instance. The above lemma suggests that in the case of \PatternREight\ it is more convenient to extract all variables in $\spxv$, plus any variable that can be reached from those via a non-trivial constraint.

\begin{lemma}
\label{lem:sep}
Let $I=(\mvar,\mdom,\mcons) \in$ \CSP{\PatternREight} be singleton arc consistent. Let $x \in \mvar$, $v \in D(x)$ and consider the instance $\ixv$. After the removal of every neighbourhood substitutable value, there exists a partition $(X_1,X_2)$ of $\mvar$ such that
\begin{itemize}
\item $\spxv \subseteq X_1$;
\item $\forall (x,y) \in X_1 \times X_2$, $R(x,y)$ is trivial;
\item Every connected component of non-trivial constraints with scopes subsets of $X_1$ is a star.
\end{itemize}
\end{lemma}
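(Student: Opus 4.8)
The plan is to build the partition $(X_1, X_2)$ explicitly from the set $\spxv$ of variables touched by the arc consistency trace, padding it with one extra layer of neighbours so that the "boundary" of $X_1$ consists only of variables that are \emph{not} in $\spxv$. Concretely, I would set
\[
X_1 = \spxv \cup \{\, z \in \mvar \mid \exists y \in \spxv : \dxv{y} \times \dxv{z} \not\subseteq R(y,z) \,\}
\]
and $X_2 = \mvar \setminus X_1$. The first bullet, $\spxv \subseteq X_1$, is then immediate from the definition. For the second bullet I would argue by contradiction: suppose $R(y,z)$ is non-trivial in $\ixv$ for some $y \in X_1$, $z \in X_2$. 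Then $y \notin \spxv$ by construction of $X_1$ (every variable of $\spxv$ that has a non-trivial constraint to some $z$ would have pulled $z$ into $X_1$), so $y$ lies in $X_1 \setminus \spxv$, which means $y$ is adjacent via a non-trivial constraint to some variable of $\spxv$; hence $y$ and $z$ both lie in a connected component of non-trivial constraints that intersects $\spxv$. By Lemma~\ref{lem:path} (after removing neighbourhood substitutable values) this component is a star whose center is in $\spxv$. But then both $y$ and $z$ are leaves or the center — in any case, being a star centered at a variable of $\spxv$, every edge of the component is incident to that center, so $y$ adjacent to $z$ forces one of $y,z$ to be the center, contradicting that neither is in $\spxv$ while the center is.

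The third bullet is the easiest part: any connected component of non-trivial constraints entirely inside $X_1$ that intersects $\spxv$ is a star directly by Lemma~\ref{lem:path}, and a component of $\ixv[X_1]$ that does \emph{not} intersect $\spxv$ is a component of $\ixv$ on variables outside $\spxv$ — and since $\ixv$ differs from the singleton-arc-consistent $I$ only in domains of $\spxv$-variables, such a component cannot contain any non-trivial constraint at all (it would be a non-trivial constraint of $I$ unaffected by the assignment of $x$, but it survived arc consistency so it is genuinely a constraint; however it contributes nothing to the star claim since a single constraint, or indeed any isolated component with no $\spxv$ variable, is vacuously a star as it has intersecting scopes — a component by definition is connected). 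So every component of non-trivial constraints in $\ixv[X_1]$ is a star. One subtlety to handle cleanly: Lemma~\ref{lem:path} is stated \emph{after removing every neighbourhood substitutable value}, so I must first note that removing such values changes neither the satisfiability of $\ixv$ nor its (singleton) arc consistency (as recalled in the preliminaries), and that it does not re-introduce \PatternREight; thus I may work in the reduced instance and the partition found there is valid for $\ixv$ too, since removing values only makes more constraints trivial — the non-trivial constraints of $\ixv$ are a superset, so I should be careful and define $X_1$ relative to the \emph{original} $\ixv$, then observe that passing to the reduced instance only shrinks the set of non-trivial constraints, so Lemma~\ref{lem:path} applies to every component that matters.

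The main obstacle I expect is exactly this interplay between $\ixv$ and its neighbourhood-substitution-reduced version: Lemma~\ref{lem:path} only controls the reduced instance, but the partition must witness triviality of constraints in $\ixv$ itself. The clean way around it is to prove the separation statement for the reduced instance $\ixv'$ first — there, by Lemma~\ref{lem:path}, the argument above goes through verbatim — and then observe that any constraint trivial in $\ixv'$ need not be trivial in $\ixv$, so this does \emph{not} immediately transfer. The honest fix is that the lemma as used downstream only needs a partition for the reduced instance (the subsequent sections work with the reduced $\ixv$ throughout), or alternatively one observes that neighbourhood substitution never merges variables, only deletes values, and a value deletion can only turn a non-trivial constraint trivial, never the reverse — so if $R(y,z)$ were non-trivial in $\ixv$ with $y \in X_1, z \in X_2$, it would either still be non-trivial in $\ixv'$ (contradiction as above) or have become trivial, but then it is already trivial and there is nothing to prove. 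Hence defining $X_1$ via the original $\ixv$ and running the Lemma~\ref{lem:path} argument in $\ixv'$ resolves everything, and the short proof already displayed above (``Proof of Lemma~\ref{lem:sep}'') is exactly this, with the neighbourhood-substitution bookkeeping suppressed.
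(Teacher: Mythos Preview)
Your construction of $X_1$ and the argument for the first two bullets are exactly the paper's proof. Two points in your write-up deserve correction.

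For the third bullet, your treatment of components that do not meet $\spxv$ is muddled: it is not true that an arbitrary connected component of non-trivial constraints is ``vacuously a star as it has intersecting scopes'' --- a path on four variables is a counterexample. The clean observation (and the reason the paper says ``immediate by Lemma~\ref{lem:path}'') is that this case never arises: by the very definition of $X_1$, every variable of $X_1 \setminus \spxv$ is joined by a non-trivial constraint to some variable of $\spxv$, so every connected component of non-trivial constraints inside $X_1$ necessarily meets $\spxv$, and Lemma~\ref{lem:path} applies.

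On the neighbourhood-substitution subtlety, you are right that Lemma~\ref{lem:path} is stated for the reduced instance and that the paper suppresses this bookkeeping. Your first resolution --- prove the partition lemma for the reduced instance, which is all that Theorem~\ref{thm:R8sac} actually uses --- is the correct one. Your ``alternative'' resolution is flawed: if $R(y,z)$ is non-trivial in $\ixv$ but becomes trivial after neighbourhood substitution, you cannot conclude that ``it is already trivial and there is nothing to prove'', since the lemma as stated for $\ixv$ would still be violated. Drop that alternative and commit to the first fix.
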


\begin{proof}
Let $X_1 = \spxv \cup \{ \, z \in \mvar \, \mid \, \exists y \in
\spxv \, : \, \dxv{y} \times \dxv{z} \not\subseteq R(y,z) \, \}$.
We have $\spxv \subseteq X_1$, and by construction every non-trivial
constraint between $y \in X_1$ and $z \notin X_1$ must be such that
$y \notin \spxv$ and $y$ is adjacent to a variable in $\spxv$ via a
non-trivial constraint. By Lemma~\ref{lem:path} this is impossible,
and hence there is no non-trivial constraint between $X_1$ and $\mvar \backslash X_1$. The last property is immediate
by Lemma~\ref{lem:path}.
\end{proof}

\begin{theorem}
\label{thm:R8sac}
\CSP{\PatternREight} is solved by singleton arc consistency.
\end{theorem}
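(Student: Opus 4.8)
The plan is to mimic the structure used for \PatternPThree, exploiting Lemma~\ref{lem:sep} in place of Lemmas~\ref{lem:paths}--\ref{lem:xpath}. Given a singleton arc consistent instance $I \in$ \CSP{\PatternREight}, pick a variable $x$ and value $v \in D(x)$, build $\ixv$, and remove all neighbourhood substitutable values (which, as noted in the preliminaries, destroys neither satisfiability nor singleton arc consistency, and cannot introduce a new pattern). Lemma~\ref{lem:sep} then gives a partition $(X_1,X_2)$ with no non-trivial constraint between the two parts, $\spxv \subseteq X_1$, and every connected component of non-trivial constraints inside $X_1$ being a star. The first task is to show that $\ixv$ is satisfiable if and only if $\ixv[X_2]$ is: since $X_1$ and $X_2$ are disconnected, it suffices to check that $\ixv[X_1]$ is satisfiable, and since $\ixv[X_1]$ is a disjoint union of arc consistent stars, each star is trivially satisfiable (pick any value at the center, then extend to each leaf by arc consistency, there being only one non-trivial constraint per leaf). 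Hence $\ixv$ has a solution iff $\ixv[X_2]$ does.

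The harder half, paralleling Lemma~\ref{lem:outer}, is to show that $I$ itself is satisfiable iff $I[X_2]$ is. One direction is trivial. For the converse, suppose $I[X_2]$ has a solution $\phi$ but $I$ does not; then $\ixv$ has no solution, so by the previous paragraph $\ixv[X_2]$ has no solution, so some minimal $Z \subseteq X_2$ has $\ixv[Z]$ unsatisfiable while $I[Z]$ is satisfiable. As in the \PatternPThree\ argument, $\ixv[Z]$ differs from $I[Z]$ only by having smaller domains at variables of $\spxv$, and one picks a solution $\phi$ of $I[Z]$ agreeing with $\dxv{\cdot}$ on as many variables as possible; there must then be some $p_r \in Z \cap \spxv$ with $\phi(p_r) \in \rho(p_j \rightarrow p_r)$ for some $p_j \notin Z$. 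The goal is a contradiction: using minimality of $Z$ to bound the number of non-trivial constraints incident to $p_r$ in $\ixv[Z]$, and then a counting/domination argument (``$\phi(z)$ is compatible with strictly fewer values at $p_r$ than $\phi(y)$'') to force a pattern occurrence or a possible re-assignment of $\phi(p_r)$ contradicting the maximality of $\phi$. I expect the technical content here to again be a small case analysis on whether some witness variable $y \in Z$ fails to have a support at $p_r$ compatible with $\phi(y)$, exactly as in Lemma~\ref{lem:outer}, but with \PatternREight\ in the role of \PatternPThree; the star structure around $p_r$ coming from Lemma~\ref{lem:path}/\ref{lem:sep} is what keeps this manageable.

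Once the analogue of Lemma~\ref{lem:outer} is in hand, the theorem follows by the same descent as for \PatternPThree: $I[X_2]$ is again singleton arc consistent and in \CSP{\PatternREight}, and $X_2 \subsetneq \mvar$ since $x \in X_1 \setminus X_2$, so iterating the construction terminates with a trivially satisfiable instance. Because at no point do we add constraints or edges --- we only assign, enforce arc consistency, and remove neighbourhood substitutable values --- no step introduces \PatternREight, so the hypothesis is preserved throughout. The main obstacle, as in the \PatternPThree\ case, is the domination/counting argument inside the minimal unsatisfiable $Z$: one must show that whenever reassigning $\phi(p_r)$ to a value $a_r \in \dxv{p_r}$ breaks some constraint $R(z,p_r)$, either \PatternREight\ appears on $(p_r, y, p_j, z)$ with $p_r$ central, or the incompatibility pattern of $\phi(y)$ and $\phi(z)$ at $p_r$ violates the minimality of the chosen $y$; handling the extra branching arising from \PatternREight's richer structure (three values on the central variable, the double negative edge from $p_j$) compared with \PatternPThree\ is where care is needed, though Lemma~\ref{lem:sep} does most of the structural heavy lifting by guaranteeing that $p_r$ lies in a star rather than an arbitrary neighbourhood.
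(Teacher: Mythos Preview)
Your proposal is on the right track but substantially overcomplicates matters, because you have missed the key simplification that distinguishes the \PatternREight\ situation from the \PatternPThree\ one. In the \PatternPThree\ proof, the variables removed were only the \emph{inner} variables $\sipxv$, so the remaining instance $I[\mvar \backslash \sipxv]$ could still contain variables of $\spxv$ whose domains had shrunk in $\ixv$; this is precisely why Lemma~\ref{lem:outer} with its minimal-$Z$ and counting argument was needed. By contrast, Lemma~\ref{lem:sep} puts \emph{all} of $\spxv$ into $X_1$. Consequently no variable of $X_2$ has had its domain touched by the AC propagation from $(x,v)$, and therefore $\ixv[X_2]$ is \emph{literally the same instance} as $I[X_2]$. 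Your proposed ``harder half'' collapses immediately: in your setup you would look for $p_r \in Z \cap \spxv$, but $Z \subseteq X_2$ and $\spxv \subseteq X_1$, so $Z \cap \spxv = \emptyset$ and there is nothing to do.

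The paper's proof exploits this directly and is accordingly much shorter: since $\ixv[X_1]$ is an arc consistent forest (your star argument is fine here) it has a solution, and $\ixv[X_2] = I[X_2]$ is again singleton arc consistent and in \CSP{\PatternREight}, so one recurses on $I[X_2]$. There is no need for any analogue of Lemma~\ref{lem:outer}, no minimal unsatisfiable subset $Z$, and no domination/counting argument. So your outline is not wrong, but the entire second and third paragraphs are unnecessary once you notice that $\spxv \subseteq X_1$ rather than merely $\sipxv \subseteq X_1$.
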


\begin{proof}
Let $I \in$ \CSP{\PatternREight} and suppose that $I$ is singleton
arc consistent. Let $x \in \mvar$ and $v \in D(x)$. Because of
singleton arc consistency the instance $\ixv$ has no empty domains. We remove all neighbourhood substitutable values from $\ixv$. By Lemma~\ref{lem:sep}, the variable set of $\ixv$ can be divided
into two parts $X_1,X_2$ such that $\ixv$ has a solution if and only
if both $\ixv[X_1]$ and $\ixv[X_2]$ are satisfiable. $\ixv[X_1]$ is
an arc consistent instance with no cycle of non-trivial constraints,
and hence is satisfiable. $\ixv[X_2]$ is exactly $I[X_2]$ with some neighbourhood substitutable values removed because no
variable in $X_2$ was affected by propagation after $x$ was assigned. Call this new instance $I[X_2]'$. Because $I[X_2]'$ is singleton arc consistent as well (being singleton arc consistent is invariant under projection and removal of neighbourhood substitutable values), we can repeat
the same reasoning on $I[X_2]'$. At each step the set $X_1$ cannot be
empty (it contains $x$) so this procedure will always terminate, and
because each $I[X_1]$ has a solution $I$ has a solution as well.
\end{proof}

Our proof of the SAC-solvability of \PatternRSevenMinus\ (Figure~\ref{fig:R7}) follows a similar reasoning, with two main differences. First, branching on just any variable-value pair (as we did for \PatternPThree\ and \PatternREight) may lead to a subproblem that is \textit{not} solved by arc consistency. However, once the right assignment is made the reward is much greater as all constraints involving a variable whose domain has been reduced by arc consistency must become trivial \textit{except at most one}.

Finding out which variable-value pair $(x,v)$ we should branch on is tricky. We first show that the above property is guaranteed to hold if $(x,v)$ is the meet point of the positive edges in a particular pattern \PatternMHat\ (Figure~\ref{fig:MV2}). However, \PatternMHat\ is an NP-hard pattern~\cite{cccms12:jair} so it might happen that \PatternMHat\ does not occur at all in the instance. To handle this problem we define a weaker pattern \PatternVTwo\ (Figure~\ref{fig:MV2}), whose absence is known to imply SAC-solvability (because it is a sub-pattern of \PatternREight) and hence can be safely assumed to occur somewhere. Our strategy is to branch on the assignment that corresponds to the meet point of the positive edges in \PatternVTwo\ and attempt to prove that the above property holds by induction, following the trace of the AC algorithm. We then show that if the induction started from
\PatternVTwo\ breaks then \PatternMHat\ must occur somewhere - a win-win situation.

\thicklines \setlength{\unitlength}{1.7pt}
\begin{figure}
\centering

\begin{picture}(200,50)
\put(0,0){
\begin{picture}(80,50)(0,0)
\put(10,30){\makebox(0,0){$\bullet$}}
\put(10,40){\makebox(0,0){$\bullet$}}

\put(40,10){\makebox(0,0){$\bullet$}}
\put(40,20){\makebox(0,0){$\bullet$}}
\put(40,30){\makebox(0,0){$\bullet$}}

\put(70,30){\makebox(0,0){$\bullet$}}
\put(70,40){\makebox(0,0){$\bullet$}}

\put(10,35){\oval(18,28)}
\put(40,20){\oval(18,38)}
\put(70,35){\oval(18,28)}

\dashline{4}(10,40)(40,30)
\dashline{4}(70,40)(40,30)
\dashline{4}(10,30)(40,20)
\dashline{4}(70,30)(40,10)

\put(10,30){\line(1,0){30}}
\put(40,30){\line(1,0){30}}
\end{picture}
}

\put(100,0){
\begin{picture}(80,50)(0,0)
\put(10,30){\makebox(0,0){$\bullet$}}

\put(40,10){\makebox(0,0){$\bullet$}}
\put(40,20){\makebox(0,0){$\bullet$}}
\put(40,30){\makebox(0,0){$\bullet$}}

\put(70,30){\makebox(0,0){$\bullet$}}

\put(10,30){\oval(18,18)}
\put(40,20){\oval(18,38)}
\put(70,30){\oval(18,18)}

\dashline{4}(10,30)(40,20)
\dashline{4}(70,30)(40,10)

\put(10,30){\line(1,0){30}}
\put(40,30){\line(1,0){30}}
\end{picture}
}
\end{picture}

\caption{The patterns \PatternMHat\ (left) and \PatternVTwo\ (right).} \label{fig:MV2}
\end{figure}
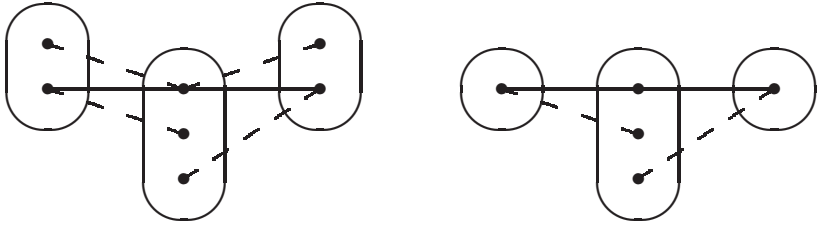

\begin{lemma}
\label{lem:moth}
Let $I=(\mvar,\mdom,\mcons) \in$ \CSP{\PatternRSevenMinus} be singleton arc consistent. Let $x \in \mvar$ be such that \PatternMHat\ occurs
on $(y,x,z)$ with $x$ the middle variable and $v$ be the value in
$D(x)$ that is the meet point of the two positive edges. Then every
constraint whose scope contains a variable in $\spxv$ is trivial in
$\ixv$, except possibly $R(y,z)$.
\end{lemma}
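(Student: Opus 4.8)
Name the seven points of \PatternRSevenMinus\ thus: $\alpha,\delta$ are the points of the two degree-$1$ variables, $\beta_1,\beta_2$ those of the degree-$2$ variable, $\gamma_1,\gamma_2,\gamma_3$ those of the degree-$3$ variable, the negative edges are $\alpha\beta_1$, $\beta_2\gamma_1$, $\gamma_2\delta$ and the positive edges are $\alpha\beta_2$, $\beta_2\gamma_2$, $\beta_2\gamma_3$, $\gamma_3\delta$. From the \PatternMHat\ occurrence on $(y,x,z)$ we extract values $v_1,v_2\in D(x)\setminus\{v\}$, a value $a\in D(y)$ with $(a,v)$ positive and $(a,v_1)$ negative, a value $b\in D(y)$ with $(b,v)$ negative, and symmetrically values $a',b'\in D(z)$ with $(a',v)$ positive, $(a',v_2)$ negative, $(b',v)$ negative. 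Since $\dxv{x}=\{v\}$, arc consistency prunes $b$ and $b'$ when forming $\ixv$, so $b\notin\dxv{y}$ and $b'\notin\dxv{z}$.

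Let $S(k)$ be the statement: \emph{for every variable $l$, if $R(p_k,l)$ is non-trivial in $\ixv$ then $\{p_k,l\}=\{y,z\}$.} Since $\spxv=\{p_k\mid k\geq 0\}$, proving $S(k)$ for all $k$ is exactly the lemma, and the plan is to do this by induction on $k$. The base case $p_0=x$ is immediate: $\ixv$ is arc consistent and $\dxv{x}=\{v\}$, so every value surviving at a neighbour of $x$ is compatible with $v$, i.e.\ every constraint incident to $x$ is trivial in $\ixv$. For the inductive step, assume $S(0),\dots,S(k-1)$, and for contradiction assume $R(p_k,l)$ is non-trivial in $\ixv$ with $\{p_k,l\}\neq\{y,z\}$; fix $a_k,b_k\in\dxv{p_k}$ and $c_l\in\dxv{l}$ with $(a_k,c_l)\in R(p_k,l)$ and $(b_k,c_l)\notin R(p_k,l)$, noting $a_k\neq b_k$ and (by the base case) $l\neq x$. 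The goal is to exhibit an occurrence of \PatternRSevenMinus\ in $I$, contradicting that $I\in$ \CSP{\PatternRSevenMinus}.

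Suppose first $p_k\in\{y,z\}$, say $p_k=y$ (the case $p_k=z$ is symmetric, interchanging $v_1$ with $v_2$, $a$ with $a'$ and $b$ with $b'$); then $l\notin\{x,y,z\}$, and $a_k,b_k\in\dxv{y}$ are compatible with $v$. Then \PatternRSevenMinus\ occurs on $(z,x,y,l)$ via $\alpha\mapsto a'$, $(\beta_1,\beta_2)\mapsto(v_2,v)$, $(\gamma_1,\gamma_2,\gamma_3)\mapsto(b,b_k,a_k)$, $\delta\mapsto c_l$. Suppose instead $p_k\notin\{y,z\}$. Let $(p_j\rightarrow p_k)\in\pxv$ be the first propagation into $p_k$ (so $j<k$) and let $c_k$ be a value it prunes from $\dxv{p_k}$, so $(w,c_k)\notin R(p_j,p_k)$ for all $w\in\dxv{p_j}$. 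Since $p_k\notin\{y,z\}$ we have $\{p_j,p_k\}\neq\{y,z\}$, so $S(j)$ makes $R(p_j,p_k)$ trivial in $\ixv$; hence any $w_0\in\dxv{p_j}$ satisfies $(w_0,a_k),(w_0,b_k)\in R(p_j,p_k)$ and $(w_0,c_k)\notin R(p_j,p_k)$. This gives the partial map $(\gamma_1,\gamma_2,\gamma_3)\mapsto(c_k,b_k,a_k)$ on $p_k$, $\beta_2\mapsto w_0$ on $p_j$, $\delta\mapsto c_l$ on $l$, and it remains to place $\alpha,\beta_1$. If $p_j=x$ then $w_0=v$, and we take $(\beta_1,\alpha)=(v_1,a)$ with $\alpha$ on $y$, or $(\beta_1,\alpha)=(v_2,a')$ with $\alpha$ on $z$, using whichever of $y,z$ lies outside $\{p_k,l\}$ --- available since $\{p_k,l\}\neq\{y,z\}$. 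If $p_j\neq x$ (so $p_j\in\spxv\setminus\{x\}$), trace one more step: let $(p_{j'}\rightarrow p_j)\in\pxv$ be the first propagation into $p_j$ ($j'<j$), let $c_j$ be a value it prunes from $\dxv{p_j}$, and let $w'\in\dxv{p_{j'}}$ be a support of $w_0$ at $p_{j'}$ (it exists since $\ixv$ is arc consistent); then $(w',c_j)\notin R(p_{j'},p_j)$ and $(w',w_0)\in R(p_{j'},p_j)$, so we take $(\beta_1,\alpha)=(c_j,w')$ with $\alpha$ on $p_{j'}$. A direct check of the seven edges then exhibits \PatternRSevenMinus\ in $I$ in all cases.

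The step that needs care --- the main obstacle --- is verifying that the four variables of each exhibited occurrence are pairwise distinct, since an occurrence map must respect variables. Most of this is routine: the ``backbone'' variables are separated by their trace indices; $p_k\notin\{y,z\}$ keeps the fourth variable away from $p_k$ when it is chosen from $\{y,z\}$; and the three variables of the \PatternMHat\ occurrence are distinct by hypothesis. The one genuinely important point is that $l$ cannot equal $p_m$ for any $m<k$: otherwise $R(p_k,l)=R(p_m,p_k)$ would be a non-trivial constraint, so $S(m)$ would force $\{p_m,p_k\}=\{y,z\}$, contradicting $\{p_k,l\}\neq\{y,z\}$; this rules out $l\in\{p_j,p_{j'}\}$ and, with the base case, also $l\neq x$. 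One finally verifies the easy incidences ($v_1,v_2\neq v$; and $b,b',c_k,c_j$ are pruned values, hence distinct from the surviving values they are paired against). This closes the induction, and the conjunction of the statements $S(k)$ is exactly the lemma.
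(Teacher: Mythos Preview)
Your proposal is correct and follows essentially the same inductive argument as the paper: both trace back one or two propagation steps and exhibit \PatternRSevenMinus\ on a path of four variables ending at the offending constraint, using the \PatternMHat\ structure at $(y,x,z)$ to supply the extra negative edge when the trace bottoms out at $x$. Your case split (on whether $p_k\in\{y,z\}$) is organised slightly differently from the paper's (which splits on whether $R(p_j,p_{k+1})$ is trivial), and your explicit handling of variable distinctness via the observation that $l\neq p_m$ for $m<k$ is more careful than the paper's, but the content is the same.
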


\begin{proof}
We prove the claim by induction. Every constraint with $p_0 = x$ in
its scope is trivial. Let $k \geq 0$ and suppose that the claim
holds for every constraint whose scope contains a variable in $\{p_i
\mid i \leq k\}$. Let $w \in \mvar \backslash \{p_i \mid i \leq k\}$
be a variable such that $R(p_{k+1},w)$ is not trivial in $\ixv$ and
$\{p_{k+1},w\} \neq \{y,z\}$. Let $p_j$ be such that $(p_j
\rightarrow p_{k+1}) \in \pxv$, with $j \leq k$. Because $R(p_{k+1},w)$ is not
trivial and $\ixv$ is arc consistent, there exist $a_{k+1},b_{k+1}
\in \dxv{p_{k+1}}$ and $a_w \in \dxv{w}$ such that $(a_{k+1},a_w) \in
R(p_{k+1},w)$ and $(b_{k+1},a_w) \notin R(p_{k+1},w)$.

If $R(p_j,p_{k+1})$ is trivial, then there exists $a_j \in
\dxv{p_j}$ such that $(a_j,a_{k+1}),(a_j,b_{k+1}) \in R(p_j,p_{k+1})$
and $c_{k+1}$ such that $(a_j,c_{k+1}) \notin R(p_j,p_{k+1})$
($c_{k+1}$ is one of the values that were eliminated by arc
consistency at step $(p_j \rightarrow p_{k+1})$). Then, if $p_j \neq
x$ there exists $p_i$, $i < j$ such that $(p_i \rightarrow p_j) \in
\pxv$. By arc consistency and because some propagation must have
taken place in the domain of $p_j$ at step $(p_i \rightarrow p_j)$,
there exists $a_i \in \dxv{p_i}$ and $b_j$ such that $(a_i,a_j) \in
R(p_i,p_j)$ and $(a_i,b_j) \notin R(p_i,p_j)$. It follows that
\PatternRSevenMinus\ occurs on $(p_i,p_j,p_{k+1},w)$, a
contradiction. On the other hand, if $p_j = x$ then we obtain the
same contradiction by using either $y$ or $z$ (the one which does not
appear in $\{p_{k+1},w\}$) instead of $p_i$.

By the induction hypothesis, if $R(p_j,p_{k+1})$ is not trivial then
$\{p_j,p_{k+1}\} = \{y,z\}$. By symmetry we can assume $p_{k+1} = z$.
$R(x,z)$ is trivial, so $\{ (v,a_{k+1}), (v,b_{k+1}) \} \subseteq
R(x,z)$. Furthermore, \PatternMHat\ occurs on $(y,x,z)$ so there
exist $c_{k+1}$ such that $(v,c_{k+1}) \notin R(x,z)$ and $a_y,b_x$
such that $(a_y,v) \in R(y,z)$ but $(a_y,b_x) \notin R(y,z)$. Then,
\PatternRSevenMinus\ occurs on $(y,x,z,w)$, a contradiction.

In both cases the induction holds, so the claim follows.
\end{proof}

\begin{lemma}
\label{lem:weakmoth} 
Let $I=(\mvar,\mdom,\mcons) \in$ \CSP{\PatternMHat} $\cap$ \CSP{\PatternRSevenMinus} be singleton arc consistent. Let $x \in \mvar$ be such that \PatternVTwo\ occurs on $(y,x,z)$ with $x$ the middle variable and $v$ be the value in $D(x)$ that is the meet point of the two positive edges.
Then every constraint whose scope contains a variable in $\spxv$ is trivial in $\ixv$,
except possibly $R(y,z)$. 
\end{lemma}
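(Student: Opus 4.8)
The plan is to re-run the induction of Lemma~\ref{lem:moth}, isolating the one place where that argument uses the two extra negative edges of \PatternMHat\ (the ones incident to the meet value $v$) rather than only the \PatternVTwo-structure on $(y,x,z)$. Concretely, I would induct on $k$ with the invariant ``every constraint whose scope meets $\{p_i\mid i\le k\}$ is trivial in $\ixv$, except possibly $R(y,z)$''. In the inductive step, pick $w\notin\{p_i\mid i\le k\}$ with $R(p_{k+1},w)$ non-trivial in $\ixv$ and $\{p_{k+1},w\}\neq\{y,z\}$, extract $a_{k+1},b_{k+1}\in\dxv{p_{k+1}}$ and $a_w\in\dxv{w}$ with $(a_{k+1},a_w)\in R(p_{k+1},w)$, $(b_{k+1},a_w)\notin R(p_{k+1},w)$, let $p_j$ satisfy $(p_j\rightarrow p_{k+1})\in\pxv$, and split on whether $R(p_j,p_{k+1})$ is trivial.

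If $R(p_j,p_{k+1})$ is trivial, the first case in the proof of Lemma~\ref{lem:moth} carries over unchanged: for $p_j\neq x$ it never mentions $(y,x,z)$, and for $p_j=x$ one produces the missing witness $(a_i,b_j)$ from the \PatternVTwo-structure, using either $(y_0,v)\in R(y,x)$, $(y_0,x_1)\notin R(y,x)$ (if $y\notin\{p_{k+1},w\}$) or the symmetric edges at $z$ (at least one of $y,z$ is available because $\{p_{k+1},w\}\neq\{y,z\}$); either way one reads off an occurrence of \PatternRSevenMinus, contradicting $I\in$~\CSP{\PatternRSevenMinus}. If $R(p_j,p_{k+1})$ is not trivial, the invariant forces $\{p_j,p_{k+1}\}=\{y,z\}$; say $p_{k+1}=z$ and $p_j=y$, so $R(x,z)$ is trivial in $\ixv$ and $(v,a_{k+1}),(v,b_{k+1})\in R(x,z)$. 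If some $c\in D(z)$ has $(v,c)\notin R(x,z)$, then $c$ plays exactly the role of the extra \PatternMHat-point used at this point in Lemma~\ref{lem:moth}, and the same occurrence of \PatternRSevenMinus\ on $(y,x,z,w)$ appears — again a contradiction.

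The only genuinely new case is when $v$ is compatible via $R(x,z)$ with the whole of $D(z)$: this is where the \PatternVTwo-driven induction ``breaks'', and here I would invoke the hypothesis $I\in$~\CSP{\PatternMHat}. Since $(y\rightarrow z)\in\pxv$, arc consistency pruned some value $c^{*}$ from $D(z)$ at that step, so every value of $\dxv{y}$ is incompatible with $c^{*}$ in $R(y,z)$, and $c^{*}$ had in $I$ a support $a'$ at $y$ that arc consistency had deleted from $D(y)$ at an earlier step. Taking $a_{k+1}$ as the meet value at $z$, the positive/negative pair $(a_{k+1},a_w),(b_{k+1},a_w)$ from $R(p_{k+1},w)$, a support of $a_{k+1}$ at $y$ given by arc consistency in $\ixv$, and $c^{*}$ itself, one already has four of the six edges of an occurrence of \PatternMHat\ on $\{y,z,w\}$ with $z$ in the middle; the two missing negative edges are a missing neighbour of the meet value at $y$ and one at $w$. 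I would realise these by a further sub-case analysis: if either edge is absent then, using $a'$ (and, if needed, following the trace one more step back to the variable that caused $a'$ to be pruned), an occurrence of \PatternRSevenMinus\ appears instead. In both ways we reach a contradiction, so this case cannot occur and the invariant is preserved.

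The hard part is this last step: making all six edges of \PatternMHat\ coexist. One has to choose, among $a_{k+1}$ and $b_{k+1}$, the meet value at $z$ so that one ``missing neighbour'' edge is handed over for free by $R(p_{k+1},w)$, pick the arc-consistency supports at $y$ and $w$ with care, and discharge the last missing edge through the \CSP{\PatternRSevenMinus} hypothesis; and throughout one must check that the points \PatternMHat\ forces to be distinct really are distinct — which is exactly where the separation between the deleted witnesses ($c^{*}\notin\dxv{z}$, $a'\notin\dxv{y}$) and the in-domain witnesses ($a_{k+1},b_{k+1}\in\dxv{z}$, the supports in $\dxv{y}$ and $\dxv{w}$) carries the bookkeeping. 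I expect the edge-chasing here to be of the same flavour, and of comparable length, to that in the proof of Lemma~\ref{lem:path}.
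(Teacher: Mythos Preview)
Your treatment of Case~1 and of what you call Sub-case~2a is correct and matches the paper exactly; in particular your observation that any $c\in D(z)$ with $(v,c)\notin R(x,z)$ already suffices to copy the Lemma~\ref{lem:moth} argument (one does not need $(x\rightarrow z)\in\pxv$) is fine.

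The divergence is in your Sub-case~2b, and there the plan has a genuine gap. You aim to realise \PatternMHat\ on $(y,z,w)$ with $z$ in the middle and $a_{k+1}$ as the meet point. Four of the six edges are indeed free, but the two you call ``missing'' --- a value of $D(y)$ incompatible with $a_{k+1}$ and a value of $D(w)$ incompatible with $a_{k+1}$ --- are not guaranteed by anything you have written down: nothing prevents $a_{k+1}$ from being compatible with all of $D(y)$, or all of $D(w)$, or both. Your fallback (``use $a'$ and, if needed, one more step of the trace to produce \PatternRSevenMinus'') is not spelled out, and when one tries to make it precise it does not obviously close: for instance, if $a_{k+1}$ is compatible with every value of $D(y)$ then $(a',a_{k+1})$ is positive, and to embed \PatternRSevenMinus\ with $a'$ playing the role of the degree-four point $\beta_2$ you would also need $(a',b_{k+1})$ positive and $(a',\gamma_1)$ negative for some further value of $D(z)$ --- edges about which nothing is known. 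The fundamental difficulty is that you have almost no control over edges incident to $w$; only the single non-triviality witness $(a_{k+1},a_w),(b_{k+1},a_w)$ is available, and that is too thin to build either \PatternMHat\ or \PatternRSevenMinus\ around $w$.

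The paper avoids $w$ entirely in this sub-case and builds \PatternMHat\ on $(x,y,z)$ with $y$ (not $z$) as the middle variable. The key preparatory move you are missing is a further reduction: one first argues that one may also assume $(x\rightarrow y)\in\pxv$ (otherwise the Case~1 argument, applied with $y$ playing the role of $p_{k+1}$ and its trace-predecessor $p_i\neq x$ playing the role of $p_j$, already forces $R(y,z)$ to be trivial, contradicting Case~2). This reduction supplies a value $c_y$ with $(v,c_y)\notin R(x,y)$, which together with the \PatternVTwo\ edges $(v,a_y)\in R(x,y)$, $(a_x,a_y)\notin R(x,y)$ gives three of the six \PatternMHat\ edges on $(x,y,z)$. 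The remaining three come from a short case analysis on $(a_y,a_{k+1})$ versus $(a_y,b_{k+1})$ in $R(y,z)$, applying \PatternRSevenMinus-freeness twice: one of the two edges must be positive, the other is then forced negative, a support $b_y\in\dxv{y}$ of the other $z$-value provides the last two negative edges (against $c_{k+1}$ and against the first $z$-value), and \PatternMHat\ appears with meet point $a_y$. This is where the \PatternVTwo\ structure at $x$ is actually used; in your plan it is unused in Sub-case~2b, which is a sign that the argument is looking at the wrong triple of variables.
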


\begin{proof}
The proof follows the same idea as for Lemma~\ref{lem:moth}. However,
in this case the fact that \PatternMHat\ does not occur is critical
in order to keep the induction going.

Again, every constraint with $p_0 = x$ in its scope is trivial. Let
$k \geq 0$ and suppose that the claim holds for every constraint
whose scope contains a variable in $\{p_i \mid i \leq k\}$. Let $w
\in \mvar \backslash \{p_i \mid i \leq k\}$ be a variable such that
$R(p_{k+1},w) \in \mcons$ is not trivial in $\ixv$ and
$\{p_{k+1},w\} \neq \{y,z\}$. Let $p_j$ be such that $(p_j
\rightarrow p_{k+1}) \in \pxv$. Because $R(p_{k+1},w)$ is not
trivial and $\ixv$ is arc consistent, there exist $a_{k+1},b_{k+1}
\in \dxv{p_{k+1}}$ and $a_w \in \dxv{w}$ such that $(a_{k+1},a_w) \in
R(p_{k+1},w)$ and $(b_{k+1},a_w) \notin R(p_{k+1},w)$.

If $R(p_j,p_{k+1})$ is trivial we can proceed exactly as in the
proof of Lemma~\ref{lem:moth}, so let us focus on the case where
$R(p_j,p_{k+1})$ is not trivial. By induction we must have
$\{p_j,p_{k+1}\} = \{y,z\}$. We assume without loss of generality
that $p_{k+1} = z$. If $(x \rightarrow z) \in \pxv$ then we can use
$x$ instead of $y$ to bring us to the case where $R(p_j,p_{k+1})$ is
trivial, so let us assume $(x \rightarrow z) \notin \pxv$. Then, if
$(x \rightarrow y) \notin \pxv$ there exists $p_i,p_l$ such that $i,l
\leq k$, $(p_i \rightarrow y) \in \pxv$ and $(p_l \rightarrow p_i)
\in \pxv$. However, by induction $R(p_i,y)$ is trivial and thus
$R(y,z)$ should have been trivial as well (otherwise, the argument of Lemma~\ref{lem:moth} produces \PatternRSevenMinus\ on $(p_l,p_i,y,z)$). We can therefore assume
that $(x \rightarrow y) \in \pxv$ to work our way towards a
contradiction. In particular, this means that there exists $c_y$ such
that $(v,c_y) \notin R(x,y)$ ($c_y$ being a value eliminated by arc
consistency). Because \PatternVTwo\ occurs on $(y,x,z)$, there exists
$a_y \in \dxv{y}$ and $a_x$ such that $(v,a_y) \in R(x,y)$ and
$(a_x,a_y) \notin R(x,y)$. The picture below summarises the structure derived from the arguments above. Observe that we can always assume that
either $(a_y,a_{k+1}) \in R(y,z)$ or $(a_y,b_{k+1}) \in R(y,z)$ by
replacing $a_{k+1}$ or $b_{k+1}$ with a support for $a_y$ in
$R(y,z)$.

\thicklines \setlength{\unitlength}{2pt}
\begin{center}
\begin{picture}(110,40)(0,0)

\put(10,20){\makebox(0,0){$\bullet$}}
\put(10,24){\makebox(0,0){$a_x$}}
\put(10,30){\makebox(0,0){$\bullet$}}
\put(10,34){\makebox(0,0){$v$}}

\put(40,10){\makebox(0,0){$\bullet$}}
\put(40,14){\makebox(0,0){$c_y$}}
\put(40,20){\makebox(0,0){$\bullet$}}
\put(40,30){\makebox(0,0){$\bullet$}}
\put(40,34){\makebox(0,0){$a_y$}}

\put(70,10){\makebox(0,0){$\bullet$}}
\put(70,14){\makebox(0,0){$c_{k+1}$}}
\put(70,20){\makebox(0,0){$\bullet$}}
\put(70,24){\makebox(0,0){$b_{k+1}$}}
\put(70,30){\makebox(0,0){$\bullet$}}
\put(70,34){\makebox(0,0){$a_{k+1}$}}

\put(100,30){\makebox(0,0){$\bullet$}}
\put(100,34){\makebox(0,0){$a_w$}}

\put(10,25){\oval(18,28)}
\put(10,7){\makebox(0,0){$x$}}
\put(40,20){\oval(18,38)}
\put(40,-3){\makebox(0,0){$y$}}
\put(70,20){\oval(18,38)}
\put(70,-3){\makebox(0,0){$z = p_{k+1}$}}
\put(100,30){\oval(18,18)}
\put(100,17){\makebox(0,0){$w$}}

\put(10,30){\line(1,0){30}}
\put(70,30){\line(1,0){30}}

\dashline{4}(10,30)(40,10)
\dashline{4}(40,30)(10,20)
\dashline{4}(40,30)(70,10)
\dashline{4}(70,20)(100,30)

\end{picture}
\end{center}

\vspace{3mm}

If $(a_y,a_{k+1}) \in R(y,z)$, then unless \PatternRSevenMinus\
occurs on $(x,y,z,w)$ we must have $(a_y,b_{k+1}) \notin R(y,z)$. By
arc consistency of $\ixv$, there exists $b_y \in \dxv{y}$ such that
$(b_y,b_{k+1}) \in R(y,z)$, $(b_y,c_{k+1}) \notin R(y,z)$ (since
$c_{k+1}$ was eliminated by arc consistency) and because $R(x,y)$ is
trivial we have $(v,b_y) \in R(x,y)$. Again, unless
\PatternRSevenMinus\ occurs on $(x,y,z,w)$ we have $(b_y,a_{k+1})
\notin R(y,z)$. At this point one can observe in the picture below that the pattern
\PatternMHat\ occurs on $(x,y,z)$ with the meet point of the two
solid lines being $a_y$. This contradicts the assumption that $I \in$
\CSP{\PatternMHat}.

\vspace{5mm}
\begin{center}
\begin{picture}(110,40)(0,0)

\put(10,20){\makebox(0,0){$\bullet$}}
\put(10,24){\makebox(0,0){$a_x$}}
\put(10,30){\makebox(0,0){$\bullet$}}
\put(10,34){\makebox(0,0){$v$}}

\put(40,10){\makebox(0,0){$\bullet$}}
\put(40,14){\makebox(0,0){$c_y$}}
\put(40,20){\makebox(0,0){$\bullet$}}
\put(40,24.5){\makebox(0,0){$b_y$}}
\put(40,30){\makebox(0,0){$\bullet$}}
\put(40,34){\makebox(0,0){$a_y$}}

\put(70,10){\makebox(0,0){$\bullet$}}
\put(70,14){\makebox(0,0){$c_{k+1}$}}
\put(70,20){\makebox(0,0){$\bullet$}}
\put(70,24){\makebox(0,0){$b_{k+1}$}}
\put(70,30){\makebox(0,0){$\bullet$}}
\put(70,34){\makebox(0,0){$a_{k+1}$}}

\put(100,30){\makebox(0,0){$\bullet$}}
\put(100,34){\makebox(0,0){$a_w$}}

\put(10,25){\oval(18,28)}
\put(10,7){\makebox(0,0){$x$}}
\put(40,20){\oval(18,38)}
\put(40,-3){\makebox(0,0){$y$}}
\put(70,20){\oval(18,38)}
\put(70,-3){\makebox(0,0){$z = p_{k+1}$}}
\put(100,30){\oval(18,18)}
\put(100,17){\makebox(0,0){$w$}}

\put(10,30){\line(1,0){30}}
\put(70,30){\line(1,0){30}}
\put(40,30){\line(1,0){30}}
\put(40,20){\line(1,0){30}}
\put(10,30){\line(3,-1){30}}

\dashline{4}(10,30)(40,10)
\dashline{4}(40,30)(10,20)
\dashline{4}(40,20)(70,10)
\dashline{4}(40,20)(70,30)
\dashline{4}(40,30)(70,20)
\dashline{4}(40,30)(70,10)
\dashline{4}(70,20)(100,30)

\end{picture}
\end{center}

\vspace{3mm}

The case where $(a_y,b_{k+1}) \in R(y,z)$ is almost symmetric.
Because \PatternRSevenMinus\ does not occur, we must have
$(a_y,a_{k+1}) \notin R(y,z)$. By arc consistency, there exists some
$b_y \in \dxv{y}$ such that $(b_y,a_{k+1}) \in R(y,z)$,
$(b_y,c_{k+1}) \notin R(y,z)$ and because $R(x,y)$ is trivial we
have $(v,b_y) \in R(x,y)$. It follows from the absence of
\PatternRSevenMinus\ that $(b_y,b_{k+1}) \notin R(y,z)$, which create
the pattern \PatternMHat\ on $(x,y,z)$ with its meet point being
$a_y$, as shown in the picture below.

\vspace{5mm}
\begin{center}
\begin{picture}(110,40)(0,0)

\put(10,20){\makebox(0,0){$\bullet$}}
\put(10,24){\makebox(0,0){$a_x$}}
\put(10,30){\makebox(0,0){$\bullet$}}
\put(10,34){\makebox(0,0){$v$}}

\put(40,10){\makebox(0,0){$\bullet$}}
\put(40,14){\makebox(0,0){$c_y$}}
\put(40,20){\makebox(0,0){$\bullet$}}
\put(40,24.5){\makebox(0,0){$b_y$}}
\put(40,30){\makebox(0,0){$\bullet$}}
\put(40,34){\makebox(0,0){$a_y$}}

\put(70,10){\makebox(0,0){$\bullet$}}
\put(70,14){\makebox(0,0){$c_{k+1}$}}
\put(70,20){\makebox(0,0){$\bullet$}}
\put(70,24){\makebox(0,0){$b_{k+1}$}}
\put(70,30){\makebox(0,0){$\bullet$}}
\put(70,34){\makebox(0,0){$a_{k+1}$}}

\put(100,30){\makebox(0,0){$\bullet$}}
\put(100,34){\makebox(0,0){$a_w$}}

\put(10,25){\oval(18,28)}
\put(10,7){\makebox(0,0){$x$}}
\put(40,20){\oval(18,38)}
\put(40,-3){\makebox(0,0){$y$}}
\put(70,20){\oval(18,38)}
\put(70,-3){\makebox(0,0){$z = p_{k+1}$}}
\put(100,30){\oval(18,18)}
\put(100,17){\makebox(0,0){$w$}}

\put(10,30){\line(1,0){30}}
\put(70,30){\line(1,0){30}}
\put(40,30){\line(3,-1){30}}
\put(40,20){\line(3,1){30}}
\put(10,30){\line(3,-1){30}}

\dashline{4}(10,30)(40,10)
\dashline{4}(40,30)(10,20)
\dashline{4}(40,20)(70,10)
\dashline{4}(40,20)(70,20)
\dashline{4}(40,30)(70,30)
\dashline{4}(40,30)(70,10)
\dashline{4}(70,20)(100,30)

\end{picture}
\end{center}

\vspace{3mm}

This final contradiction completes the proof.
\end{proof}

\begin{theorem}
\label{thm:R7sac}
\CSP{\PatternRSevenMinus} is solved by singleton arc consistency.
\end{theorem}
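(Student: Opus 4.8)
The plan is to reuse the template of the proof of Theorem~\ref{thm:R8sac}: induct on $|\mvar|$, and from a singleton arc consistent $I=(\mvar,\mdom,\mcons)\in$ \CSP{\PatternRSevenMinus} with no empty domains, pick a variable--value pair $(x,v)$, form $\ixv$, and peel off the set $\spxv$ of variables that arc consistency touched, without affecting satisfiability. The new feature is that an \emph{arbitrary} $(x,v)$ no longer works (it may leave a subproblem not solved by arc consistency); instead $x$ should be the middle variable of an occurrence of \PatternMHat\ or, failing that, of \PatternVTwo. So the inductive step splits into three cases according to which of \PatternMHat\ and \PatternVTwo\ occurs in $I$, with Lemmas~\ref{lem:moth} and~\ref{lem:weakmoth} describing $\ixv$ in the first two.

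\emph{Case 1: \PatternMHat\ occurs in $I$,} say on $(y,x,z)$ with $x$ the middle variable; let $v$ be the meet point of its two positive edges. By singleton arc consistency $\ixv$ has no empty domain, and by Lemma~\ref{lem:moth} the only constraint touching $\spxv$ that can be non-trivial in $\ixv$ is $R(y,z)$. I would first observe that assigning $x=v$ necessarily prunes a value at both $y$ and $z$ (in \PatternMHat\ the meet point $v$ has a negative edge to a second value of $y$ and of $z$), so $\{x,y,z\}\subseteq\spxv$; hence $R(y,z)$ lies inside $\ixv[\spxv]$ and there is no non-trivial constraint between $\spxv$ and $\mvar\setminus\spxv$. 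Then $\ixv[\spxv]$ is arc consistent with at most one non-trivial constraint, hence satisfiable; and since the variables outside $\spxv$ were untouched ($\dxv{w}=D(w)$ for $w\notin\spxv$), any solution of $I[\mvar\setminus\spxv]$ is a solution of $\ixv[\mvar\setminus\spxv]$, which combines with a solution of $\ixv[\spxv]$ into a solution of $\ixv$, hence of $I$. Thus $I$ is satisfiable iff $I[\mvar\setminus\spxv]$ is; the latter is singleton arc consistent, lies in \CSP{\PatternRSevenMinus}, has no empty domains and strictly fewer variables ($x\in\spxv$), so we are done by induction.

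\emph{Case 2: \PatternMHat\ does not occur in $I$ but \PatternVTwo\ does,} say on $(y,x,z)$ with $x$ the middle variable and $v$ the meet point; now $I\in$ \CSP{\PatternMHat}$\cap$\CSP{\PatternRSevenMinus}, so Lemma~\ref{lem:weakmoth} yields exactly the same structural conclusion about $\ixv$ as Lemma~\ref{lem:moth} did above. Here $v$ need not prune anything at $y$ or $z$, so the peeling-off step needs more care: the only interaction between $\spxv$ and $\mvar\setminus\spxv$ in $\ixv$ is through $R(y,z)$ when exactly one of $y,z$ lies in $\spxv$, and then a solution of $I[\mvar\setminus\spxv]$ (again with $\dxv{w}=D(w)$ off $\spxv$) extends over $\spxv$ by using arc consistency of $\ixv$ to pick a compatible value at the $\spxv$-endpoint of $R(y,z)$ and assigning the remaining, now edge-free, variables of $\spxv$ arbitrarily. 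Again $I$ is satisfiable iff $I[\mvar\setminus\spxv]$ is, and we finish by induction, re-examining the three cases at the next level. \emph{Case 3: neither \PatternMHat\ nor \PatternVTwo\ occurs in $I$.} Then $I\in$ \CSP{\PatternVTwo}; since \PatternVTwo\ is a subpattern of \PatternTFour\ and \PatternTFour\ is a subpattern of \PatternREight, we get \CSP{\PatternVTwo}$\subseteq$\CSP{\PatternTFour}$\subseteq$\CSP{\PatternREight}, which is solved by singleton arc consistency by Theorem~\ref{thm:R8sac}, so the singleton arc consistent instance $I$ with no empty domains has a solution. The base case $\mvar=\emptyset$ is trivial.

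The step I expect to be the main obstacle is the peeling-off argument shared by Cases~1 and~2: one must extract from Lemmas~\ref{lem:moth} and~\ref{lem:weakmoth} that $\spxv$ is isolated from $\mvar\setminus\spxv$ up to the single constraint $R(y,z)$, handling the sub-cases according to whether $y$ and $z$ lie in $\spxv$ and whether $R(y,z)$ is trivial in $\ixv$, and confirm that $\ixv[\spxv]$ (arc consistent, at most one non-trivial constraint) is always satisfiable. Everything else --- the trichotomy, the containments between the pattern classes, and the induction --- is routine given the two structural lemmas; and it is precisely the ``win--win'' between Cases~1 and~2 (the induction of Lemma~\ref{lem:weakmoth} only breaks when \PatternMHat\ occurs, which puts us in Case~1) that justifies demanding a \PatternMHat- or \PatternVTwo-occurrence instead of branching on an arbitrary pair.
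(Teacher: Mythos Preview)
Your proposal is correct and follows essentially the same approach as the paper's proof: branch on the meet point $v$ of an occurrence of \PatternMHat\ (if one exists) or of \PatternVTwo\ (otherwise), invoke Lemma~\ref{lem:moth} or Lemma~\ref{lem:weakmoth} to see that at most the single constraint $R(y,z)$ survives among those touching $\spxv$ in $\ixv$, and conclude that satisfiability of $I$ reduces to satisfiability of $I[\mvar\setminus\spxv]$; in the base case where \PatternVTwo\ is absent, fall back on the SAC-solvability of \PatternTFour\ (via \PatternREight). The only differences are presentational: the paper merges your Cases~1 and~2 into a single argument (noting that $\ixv[\spxv]$ has at most one non-trivial constraint and at most one non-trivial constraint crosses between $\spxv$ and its complement, without caring which of $y,z$ actually lie in $\spxv$), whereas you split them and, in Case~1, add the correct but unnecessary observation that $y,z\in\spxv$.
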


\begin{proof}
Let $I = (\mvar,\mdom,\mcons) \in$ \CSP{\PatternRSevenMinus}, and
suppose that enforcing SAC did not lead to a wipeout of any variable
domain. If \PatternVTwo\ does not occur in $I$ then it has a solution
(recall that absence of \PatternVTwo\ ensures solvability by SAC), so
let us assume that \PatternVTwo\ occurs. Let $x \in \mvar$ and $v \in
D(x)$ be such that $v$ is the meet point of solid edges of
\PatternMHat\ if \PatternMHat\ occurs in $I$, and the meet point of
\PatternVTwo\ otherwise. $I$ is SAC so the instance $\ixv$ has no
empty domains. By Lemma~\ref{lem:moth} and Lemma~\ref{lem:weakmoth},
there is at most one non-trivial constraint in $\ixv[\spxv]$ so by
arc consistency for every $x_1 \in \spxv$ and $v_1 \in \dxv{x_1}$ there
is a solution $\phi$ to $\ixv[\spxv]$ such that $\phi(x_1) = v_1$.
Furthermore, $\ixv[\mvar \backslash \spxv] = I[\mvar \backslash
\spxv]$ and there is at most one non-trivial constraint in $\ixv$ with one
endpoint in $\spxv$ and the other in $\mvar \backslash \spxv$. By
combining the two properties we obtain that $\ixv$ has a solution if
and only if $I[\mvar \backslash \spxv]$ has one. Because $I[\mvar
\backslash \spxv]$ is SAC and \PatternRSevenMinus\ still does not
occur, we can repeat the operation until we have a solution to the
whole instance.
\end{proof}

\section{Tractability of \PatternQThree\ and \PatternRFive}
\label{sec:Q3R5}

For our last two proofs of SAC-decidability, we depart from the trace technique. Our fundamental goal, however, remains the same: find an operation which shrinks the instance without altering satisfiability, introducing the pattern or losing singleton arc consistency. For \PatternQThree\ this operation is BTP-merging~\cite{cooper16:ai} and for \PatternRFive\ it is removing constraints.

Consider the pattern V$^{-}$ shown in Figure~\ref{fig:VMinus}(a).
We say that V$^{-}$ occurs at point $a$ or at variable $x$ if $a \in D(x)$
is the central point of the pattern in the instance. The pattern
V$^{-}$ is known to be tractable since all instances in
\CSP{V$^{-}$}
satisfy the joint-winner property~\cite{cz11:ai}. However, we show a slightly different result,
namely that singleton arc consistency is sufficient to solve
instances in which V$^{-}$ only occurs at degree-2 variables.

\thicklines \setlength{\unitlength}{1.3pt}
\begin{figure}
\centering

\begin{picture}(200,100)(0,0) 
\put(0,0){
\begin{picture}(110,70)(0,0)
\put(0,50){\makebox(0,0){$\bullet$}} 
\put(0,50){\oval(18,22)}
\put(40,70){\makebox(0,0){$\bullet$}} 
\put(40,70){\oval(18,22)}
\put(40,30){\makebox(0,0){$\bullet$}} 
\put(40,30){\oval(18,22)}
\dashline{3}(0,50)(40,70) 
\dashline{3}(0,50)(40,30)
\put(15,5){\makebox(0,0){(a)}}
\end{picture}
}

\ignore{
        \put(80,0){
        \begin{picture}(160,100)(0,0)
        \put(20,90){\makebox(0,0){$\bullet$}}
        \put(20,70){\makebox(0,0){$\bullet$}} \put(20,80){\oval(18,38)}
        \put(60,10){\makebox(0,0){$\bullet$}} \put(60,20){\oval(18,38)}
        \put(100,90){\makebox(0,0){$\bullet$}}
        \put(100,70){\makebox(0,0){$\bullet$}} \put(100,80){\oval(18,38)}        
        \put(140,10){\makebox(0,0){$\bullet$}} \put(140,20){\oval(18,38)}        
        \dashline{3}(20,70)(100,70) \dashline{3}(20,90)(60,10)
        \dashline{3}(100,90)(140,10) \put(20,70){\line(4,1){80}}
        \put(20,90){\line(4,-1){80}} \put(20,70){\line(2,-3){40}}
        \put(100,70){\line(2,-3){40}}
        \put(9,60){\makebox(0,0){$x$}} \put(60,10){\line(2,3){40}}
        \put(60,10){\line(1,0){80}} \put(20,70){\line(2,-1){120}}
        \end{picture}
        }
}

\put(100,0){\begin{picture}(100,90)(10,15)
\put(20,70){\makebox(0,0){$\bullet$}} \put(20,70){\oval(18,22)}
\put(60,30){\makebox(0,0){$\bullet$}}
\put(60,50){\makebox(0,0){$\bullet$}} \put(60,40){\oval(18,38)}
\put(100,70){\makebox(0,0){$\bullet$}} \put(100,70){\oval(18,22)}
\dashline{3}(20,70)(60,50) \dashline{3}(60,30)(100,70)
\put(20,70){\line(1,-1){40}} \put(60,50){\line(2,1){40}}
\put(15,70){\makebox(0,0){$c$}} \put(60,25){\makebox(0,0){$b$}}
\put(60,45){\makebox(0,0){$a$}} \put(105,70){\makebox(0,0){$d$}}
\put(100,55){\makebox(0,0){$z$}} \put(60,63.5){\makebox(0,0){$x$}}
\put(20,54){\makebox(0,0){$y$}} \put(20,70){\line(1,0){80}}
\put(15,20){\makebox(0,0){(b)}}
\end{picture}}

\end{picture}

\caption{(a) The pattern V$^{-}$ and (b) the associated broken-triangle pattern (BTP).}
\label{fig:VMinus}
\label{fig:BTP}
\end{figure}
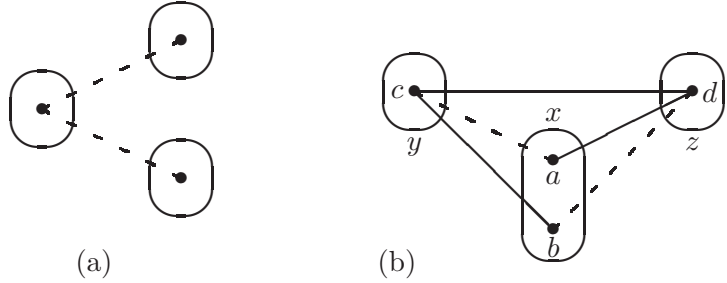

\begin{lemma} 
\label{lem:V-}
Instances in which V$^{-}$ only occurs at degree-2 variables are
solved by singleton arc consistency.
\end{lemma}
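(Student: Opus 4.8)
The plan is to prove, by induction on the number of non-trivial constraints, that every singleton arc consistent instance in which V$^{-}$ occurs only at variables of degree $2$ is satisfiable. If there are no non-trivial constraints the instance is trivially satisfiable (singleton arc consistency keeps every domain non-empty). For the inductive step, first dispose of low-degree variables: a variable of degree $0$ gets an arbitrary value from its domain, and a variable of degree $1$ is removed and assigned a support once the rest of the instance is solved --- so assume every variable has degree $\geq 2$. Now pick any non-trivial constraint $R(x,y)$ and let $I'$ be the instance obtained by deleting it. Deleting a constraint introduces no pattern and can only decrease degrees, so $I'$ is still in the class; it has strictly fewer non-trivial constraints; and since deleting a constraint can only enlarge arc-consistency closures, $I'$ is still singleton arc consistent. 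By the induction hypothesis $I'$ has a solution $\phi$, and $\phi$ satisfies every constraint of $I$ except possibly $R(x,y)$.

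The core of the argument is to \emph{repair} $\phi$ into a solution of $I$. The key structural fact, coming straight from the forbidden pattern, is that in an instance where V$^{-}$ occurs only at degree-$2$ variables, every value of a variable of degree $\geq 3$ is incompatible with values at \emph{at most one} other variable. So, if $(\phi(x),\phi(y))\notin R(x,y)$, I re-assign $x$ to a support $a$ of $\phi(y)$ in $R(x,y)$ (which exists by arc consistency); the only constraints that this can newly violate are those between $x$ and the at most one ``incompatibility neighbour'' of $a$, so at most one constraint $R(x,z)$ is broken --- and none at all if $a$ has no incompatibility neighbour, or has $y$ as its only one. I then re-assign $z$ to a support of the new value of $x$, which again breaks at most one further constraint, and I continue along a chain of single-neighbour reassignments. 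At a degree-$2$ variable, where V$^{-}$ is permitted, it is the bounded degree itself --- only two incident constraints --- that keeps this process from branching uncontrollably.

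The step I expect to be the main obstacle is proving that this repair terminates and never gets stuck; concretely, that the chain of reassignments does not come back to $y$ and destroy the constraint $R(x,y)$ we just fixed, and that whenever it revisits an already-touched variable the constraint encountered there is automatically satisfied. The reason it should go through is that every reassigned value is chosen as a support of its predecessor in the chain, so the constraint ``behind'' it stays satisfied while its unique incompatibility neighbour points ``forward''; making this precise requires choosing carefully the order in which the degree-$2$ variables (the only ones that may genuinely constrain both of their neighbours) are processed, and this is exactly where the hypothesis ``V$^{-}$ only at degree-$2$ variables'' --- rather than the much stronger ``no V$^{-}$ at all'' --- has to be used with care.
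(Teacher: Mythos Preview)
Your plan is structurally sound up to the repair step, but the obstacle you flag is a genuine gap that the reasoning you offer does not close. The heuristic ``the constraint behind stays satisfied while the unique incompatibility neighbour points forward'' is correct locally, yet it does not prevent the walk from looping back and re-breaking an earlier constraint. Concretely, take $3$-colouring on a triangle $x,y,z$ (all variables have degree~$2$, so $V^-$ is allowed everywhere; the instance is SAC). Delete $R(x,y)$; one solution of $I'$ is $\phi(x)=\phi(y)=1$, $\phi(z)=2$. Repair by setting $x\gets 2$ (a support of $\phi(y)$); this breaks $R(x,z)$, so set $z\gets 1$; this breaks $R(z,y)$, so set $y\gets 2$; now $R(x,y)$ is violated again. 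With the rule ``always pick the smallest support'' the process cycles indefinitely. So termination genuinely needs a further idea --- a rule for which support to pick, or a global invariant --- and you have not supplied one.

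The paper avoids this by a different construction. It first disposes of the all-degree-$2$ case as folklore (SAC on a constraint graph of maximum degree~$2$ implies satisfiability). Otherwise it starts from a variable $x_1$ of degree $\geq 3$, picks any $a_1\in D(x_1)$, and greedily grows a chain of assignments $a_1,a_2,\ldots$ on \emph{distinct} variables, where each $a_i$ has a negative edge into $D(x_{i+1})$ and $a_ia_{i+1}$ is positive. The crucial observation --- and this is precisely the ingredient missing from your repair --- is that the chosen values are automatically \emph{pairwise} compatible: if $a_ia_j$ were negative then $a_i$ would have negative edges into two distinct variables $x_{i+1}$ and $x_j$, giving $V^-$ at $x_i$, while $x_i$ has three distinct non-trivial neighbours $x_{i-1},x_{i+1},x_j$, contradicting the hypothesis. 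Hence the chain is an \emph{independent partial solution} (consistent internally and compatible with every value outside), which can be peeled off and the argument repeated on the rest. If you wish to rescue the delete-and-repair route, you would need to import this pairwise-compatibility argument to show your chain never needs to revisit a variable, and you would still have to handle the pure degree-$2$ case separately.
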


\begin{proof}
Singleton arc consistency only eliminates values from domains and
thus cannot increase the degree of a variable nor introduce the
pattern V$^{-}$. Hence, singleton arc consistency cannot lead to the
occurrence of the pattern V$^{-}$ at a variable of degree greater
than two. Therefore it is sufficient to show that any SAC instance
$I$ in which V$^{-}$ only occurs at degree-2 variables is
satisfiable.

We will show that it is always possible to find an independent
partial solution, i.e. an assignment to a non-empty subset of the
variables of $I$ which is compatible with all possible assignments to
the other variables. A solution can be found by repeatedly finding
independent partial solutions. If $I$ has only degree-2 variables,
then it is folklore (and easy to show) that singleton arc consistency implies
satisfiability.
So we only need to consider the case
in which $I$ has at least one variable $x_1$ of degree greater than
or equal to three. Choose an arbitrary value $a_1 \in D(x_1)$. If
this assignment is compatible with all assignments to all other
variables, then this is the required independent partial solution, so
suppose that there is a negative edge $a_1b$ where $b \in D(x_2)$ for
some variable $x_2$. By assumption, since $x_1$ has degree greater
than or equal to three, the pattern V$^{-}$ does not occur at $x_1$
and hence the assignment $(x_1,a_1)$ is compatible with all
assignments to all variables other than $x_1,x_2$.

\thicklines \setlength{\unitlength}{1.5pt}
\begin{figure}
\centering
\begin{picture}(220,70)(0,0)
\put(10,30){\makebox(0,0){$\bullet$}} \put(10,40){\oval(18,38)}
\put(60,30){\makebox(0,0){$\bullet$}}
\put(60,50){\makebox(0,0){$\bullet$}} \put(60,40){\oval(18,38)}
\put(110,30){\makebox(0,0){$\bullet$}}
\put(110,50){\makebox(0,0){$\bullet$}} \put(110,40){\oval(18,38)}
\put(160,30){\makebox(0,0){$\bullet$}}
\put(160,50){\makebox(0,0){$\bullet$}} \put(160,40){\oval(18,38)}
\put(210,30){\makebox(0,0){$\bullet$}}
\put(210,50){\makebox(0,0){$\bullet$}} \put(210,40){\oval(18,38)}
\dashline{3}(10,30)(60,50) \dashline{3}(60,30)(110,50)
\dashline{3}(110,30)(160,50) \dashline{3}(160,30)(210,50)
\put(10,30){\line(1,0){200}} \put(57,26){\makebox(0,0){$a_i$}}
\put(157,25.5){\makebox(0,0){$a_j$}} \put(60,16){\makebox(0,0){$x_i$}}
\put(160,16){\makebox(0,0){$x_j$}}
\end{picture}
\caption{The edge $a_ia_j$ must be positive, otherwise the pattern
V$^{-}$ would occur at $a_i$ and variable $x_i$ would have degree at
least three. In the special case $i=1$, this follows from our choice
of $x_1$ to be a variable of degree at least three.} \label{fig:pfV-}
\end{figure}
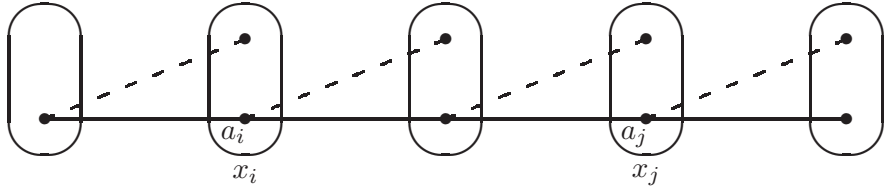

Now suppose that we have a partial assignment
$(x_1,a_1),\ldots,(x_k,a_k)$, as shown in Figure~\ref{fig:pfV-}, such
that
\begin{enumerate}
\item for $i=1,\ldots,k$, $a_i \in D(x_i)$,
\item for $i=1,\ldots,k-1$, $\exists b \in D(x_{i+1})$ such that
  $a_ib$ is a negative edge.
\item for $i=1,\ldots,k-1$, $a_ia_{i+1}$ is a positive edge,
\end{enumerate}
The assignments $(x_i,a_i)$ ($i=1,\ldots,k$) are all compatible with
each other, otherwise the pattern V$^{-}$ would occur at a variable
of degree three or greater, as illustrated in Figure~\ref{fig:pfV-}.
Furthermore, for the same reason, the assignments $(x_i,a_i)$
($i=1,\ldots,k-1$) are all compatible with all possible assignments
to all variables other than $x_1,\ldots,x_k$. It only remains to
consider the compatibility of $a_k$ with the assignments to variables
other than $x_1,\ldots,x_k$.

If the assignment $(x_k,a_k)$ is compatible with all assignments to
all variables other than $x_1,\ldots,x_k$, then we have an
independent partial solution to $x_1,\ldots,x_k$. On the other hand,
if for some $x_{k+1} \notin \{x_1,\ldots,x_k\}$, there is $b \in
D(x_{k+1})$ such that $a_kb$ is a negative edge, then by (singleton)
arc consistency there exists $a_{k+1} \in D(x_{k+1})$ such that
$a_ka_{k+1}$ is a positive edge and we have a larger partial
assignment with the above three properties. Therefore, we can always
add another assignment until the resulting partial assignment is an
independent partial solution (or we have assigned all variables).
\end{proof}

Two values $a,b \in D(x)$ are \emph{BTP-mergeable}~\cite{cooper16:ai} if
there are not two other distinct variables $y,z \neq x$ such that there exist
$c \in D(y)$ and $d \in D(z)$ with $ad,bc,cd$ positive
edges and $ac,bd$ negative edges as shown in Figure~\ref{fig:BTP}(b).
The \emph{BTP-merging} operation consists in merging two BTP-mergeable points $a,b
\in D(x)$: the points $a,b$ are replaced by a
new point $c$ in $D(x)$ such that for all other variables $w \neq x$
and for all $d \in D(w)$, $cd$ is a positive edge if at least one of
$ad,bd$ was a positive edge (a negative edge otherwise). BTP-merging preserves satisfiability~\cite{cooper16:ai}.

\begin{lemma} 
\label{lem:BTP}
Let $P$ be a pattern in which no point occurs in more than one
positive edge. Then the BTP-merging operation cannot introduce the
pattern $P$ in an instance $I \in$ \CSP{P}.
\end{lemma}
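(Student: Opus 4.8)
The plan is to show that any occurrence of $P$ in the merged instance can be converted into an occurrence of $P$ in the original instance; this is the contrapositive of the lemma. Write $I'$ for the instance obtained from $I\in\CSP{P}$ by BTP-merging two BTP-mergeable points $a,b\in D(x)$ into a fresh point $c\in D(x)$. The structural fact I would rely on is that $I$ and $I'$ have the same points except that $a,b$ are replaced by $c$, and the same edges except for those incident to $c$: for every $w\neq x$ and $d\in D(w)$, the edge $cd$ is positive in $I'$ exactly when $ad$ or $bd$ is positive in $I$, and negative in $I'$ exactly when both $ad$ and $bd$ are negative in $I$. Hence if an occurrence $f$ of $P$ in $I'$ never sends a point to $c$, the same map is already an occurrence of $P$ in $I$ --- a contradiction --- so I may assume the set $Q$ of points of $P$ mapped by $f$ to $c$ is non-empty. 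Since $f$ respects variables, all points of $Q$ lie in one variable $x_P$ of $P$, namely the one mapped to $x$.

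Next I would ``un-merge'' the occurrence by building a map $g$. On every point outside $Q$, set $g=f$; such a point is sent either to an unchanged value of $D(x)$ or to a point of some other variable, and in both cases this is a point that exists unchanged in $I$. For $q\in Q$ I would use the hypothesis that $q$ lies in at most one positive edge of $P$. If $q$ has its unique positive edge $qr$, then $r$ lies in a variable of $P$ distinct from $x_P$ (an edge joins distinct variables), so $f(r)=d$ is a point of some variable $w\neq x$; since $cd$ is positive in $I'$, at least one of $ad,bd$ is positive in $I$, and I set $g(q)$ to be such a value in $\{a,b\}$. If $q$ has no positive edge in $P$, I simply set $g(q)=a$.

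Finally I would check that $g$ is an occurrence of $P$ in $I$. It respects variables, since it differs from $f$ only by relocating some points of $Q$ within the domain $D(x)$. For an edge $qr$ of $P$, the points $q,r$ lie in distinct variables, so they are not both in $Q$, and there are two cases. If neither is in $Q$, then $g(q)g(r)=f(q)f(r)$ is an edge between points that are unchanged from $I'$ to $I$, so it keeps the required sign. If $q\in Q$ and $r\notin Q$, then $g(r)=f(r)=d$ lies in a variable $\neq x$; when $qr$ is positive it is the unique positive edge of $q$ and $g(q)$ was chosen so that $g(q)d$ is positive in $I$; when $qr$ is negative, $cd$ is negative in $I'$, which forces both $ad$ and $bd$ negative in $I$, so $g(q)d$ is negative whichever of $a,b$ equals $g(q)$. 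Thus $P$ occurs in $I$, contradicting $I\in\CSP{P}$.

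The delicate point --- and the only place the hypothesis on $P$ is genuinely needed --- is making the choices between $a$ and $b$ consistently across all of $Q$ at once. This is possible precisely because each $q\in Q$ is incident to at most one positive edge, so its target in $\{a,b\}$ is pinned down by that single constraint with no risk of conflict, and because distinct points of $Q$ lie in the same variable of $P$ and hence share no edge, so their choices never interact. Without this condition a point of $Q$ could be pushed toward $a$ by one positive edge and toward $b$ by another, and the construction would fail --- which is exactly the configuration through which BTP-merging can create patterns having a point in two positive edges.
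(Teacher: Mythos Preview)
Your proof is correct and follows essentially the same approach as the paper's: replace the merged point $c$ by whichever of $a,b$ preserves the (at most one) positive edge, noting that negative edges from $c$ lift to negative edges from both $a$ and $b$. You are in fact slightly more careful than the paper, which tacitly assumes a single point of $P$ maps to $c$, whereas you correctly allow the preimage $Q$ to contain several points and choose $a$ or $b$ independently for each.
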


\begin{proof}
Suppose that the pattern $P$ occurs in an instance $I'$ obtained by
BTP-merging of two points $a,b$ in $I$ to create a new point $c$ in
$I'$. From the assumptions about $P$, we know that $c$ belongs to any
number of negative edges $ce_1,\ldots,ce_r$, but at most one positive
edge $cd$ in the occurrence of $P$ in $I'$. By the definition of
merging, in $I$ one of $ad,bd$ must have been a positive edge and all
of $ae_1,\ldots,ae_r$ and $be_1,\ldots,be_r$ must have been negative.
Without loss of generality, suppose that $ad$ was a positive edge.
But then the pattern $P$ occurred in $I$ (on $a$ instead of $c$)
which is a contradiction.
\end{proof}

Since \PatternQThree\ has no point which occurs in more than one positive edge, we
can deduce from Lemma~\ref{lem:BTP} that \PatternQThree\ cannot be introduced by
BTP-merging. We then combine this property with Lemma~\ref{lem:V-} by proving that V$^-$ can only occur at degree-2 variables in any instance of \CSP{\PatternQThree} with no BTP-mergeable values.

\begin{theorem}
\label{thm:Q3sac}
\CSP{\PatternQThree} is solved by singleton arc consistency.
\end{theorem}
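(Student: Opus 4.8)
The plan is to combine the machinery already developed in this section: Lemma~\ref{lem:BTP} tells us that BTP-merging, being a merging operation and \PatternQThree\ having no point in more than one positive edge, can never introduce \PatternQThree\ into an instance of \CSP{\PatternQThree}; and it is known from~\cite{cooper16:ai} that BTP-merging preserves satisfiability. It is also standard that merging cannot destroy singleton arc consistency (merging only identifies values, so every support is preserved). Hence, starting from a singleton arc consistent instance $I\in$ \CSP{\PatternQThree}, we may repeatedly apply BTP-merging until we reach an instance $I^*$ that is still SAC, still in \CSP{\PatternQThree}, still equivalent to $I$ for satisfiability, and in which \emph{no} two values are BTP-mergeable. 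This terminates because each merge strictly decreases the total number of domain values. It then suffices to show that any such $I^*$ is satisfiable, and for this we want to invoke Lemma~\ref{lem:V-}: it is enough to prove that in $I^*$ the pattern V$^-$ can only occur at variables of degree at most two.

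So the heart of the proof is the implication: if $I\in$ \CSP{\PatternQThree} has no BTP-mergeable pair of values, then V$^-$ occurs only at degree-$2$ variables. I would argue the contrapositive, or rather argue directly: suppose V$^-$ occurs at a point $a\in D(x)$ where $x$ has degree at least three. By definition of V$^-$ there are variables $y\neq z$ and values $b\in D(y)$, $c\in D(z)$ with $ab$ and $ac$ negative edges. Since $x$ has degree $\geq 3$ there is a third constrained variable; the goal is to use the non-triviality of a third constraint on $x$ together with the two negative edges of the V$^-$ to build \PatternQThree, unless the configuration is so constrained that two values of $x$ turn out to be BTP-mergeable. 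More precisely, I expect to look at a second point $a'\in D(x)$ (arc consistency / SAC guarantees $|D(x)|\geq 2$ is not automatic, but one can take a support-based argument) and examine, for each other variable $w$ and each $d\in D(w)$, whether $ad,a'd$ agree in sign; if they always agree in the BTP sense then $a,a'$ are BTP-mergeable, a contradiction; if they disagree we get the broken-triangle witnesses which, combined with the negative edges $ab$, $ac$ to the two V$^-$ arms and the non-trivial third constraint, assemble into an occurrence of \PatternQThree\ with $x$ in the role of the central (degree-3) variable of that pattern.

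The main obstacle I anticipate is the bookkeeping of exactly which value of $x$ plays which role and ensuring the third constraint on $x$ really delivers a usable negative/positive edge pair distinct from the V$^-$ arms --- i.e. handling the case where the "third" non-trivial constraint is actually to $y$ or $z$ again (degree is counted in the constraint graph, so a variable can be constrained to $y$ and $z$ and still have degree $2$; degree $\geq 3$ genuinely forces a constraint to some $w\notin\{y,z\}$, which is what we need). One has to be careful that the witnesses obtained from failure of BTP-mergeability between $a$ and the alternative value involve \emph{two distinct} auxiliary variables other than $x$, matching the BTP definition; choosing those variables among $\{y,z,w\}$ and checking they are distinct is the delicate part. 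Once the pattern \PatternQThree\ is exhibited in $I$ we contradict $I\in$ \CSP{\PatternQThree}, completing the proof that V$^-$ occurs only at degree-$2$ variables, and then Lemma~\ref{lem:V-} applied to $I^*$ finishes the argument: $I^*$ is satisfiable, hence so is $I$, so SAC decides \CSP{\PatternQThree}.
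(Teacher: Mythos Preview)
Your high-level plan matches the paper's proof exactly: reduce via BTP-merging (which preserves SAC, satisfiability and membership in \CSP{\PatternQThree} by Lemma~\ref{lem:BTP}) to an instance with no BTP-mergeable pairs, then show that in such an instance V$^{-}$ occurs only at degree-$2$ variables, and conclude by Lemma~\ref{lem:V-}.

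The gap is in the combinatorial core: one application of non-BTP-mergeability is not enough. Concretely, with V$^{-}$ at $a\in D(x)$ giving negative edges to $b\in D(y)$ and $c\in D(z)$, take a second value $a'\in D(x)$; non-BTP-mergeability of $(a,a')$ yields a broken triangle on some pair of variables $(u,v)$. If $\{u,v\}\neq\{y,z\}$ you can indeed assemble \PatternQThree, but nothing prevents $\{u,v\}=\{y,z\}$, and in that case the absence of \PatternQThree\ only tells you that $a$ and $a'$ have no negative edges outside $\{y,z\}$ --- which is perfectly consistent with V$^{-}$ at $a$, so there is no contradiction yet and no way to build \PatternQThree\ from $a,a'$ and the third constraint alone. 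What the paper does (and what your sketch is missing) is observe that in this case the non-trivial constraint to a third variable $w\notin\{y,z\}$ must be witnessed by a \emph{third} value $e\in D(x)\setminus\{a,a'\}$ with a negative edge into $w$, and then apply non-BTP-mergeability a \emph{second} time, now to the pair $(a,e)$. That second broken triangle must involve $w$ (since the argument confines $e$'s negatives to its two broken-triangle variables and $e$ has a negative edge to $w$), while the variable carrying $a$'s negative edge in that triangle must lie in $\{y,z\}$; intersecting the two confinements forces $a$'s negative edges to live over a single variable, contradicting V$^{-}$ at $a$. So the ``bookkeeping obstacle'' you anticipate is real, and its resolution requires this two-step argument with a third domain value, not just a careful choice of $a'$.
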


\begin{proof}
Let $I \in$ \CSP{\PatternQThree}. Since establishing singleton arc
consistency cannot introduce patterns, and hence in particular cannot
introduce \PatternQThree, we can assume that $I$ is SAC. Let $I'$ be the result
of applying BTP-merging operations to $I$ until convergence. By
Lemma~\ref{lem:BTP}, we know that $I' \in$ \CSP{\PatternQThree}.
Furthermore, since BTP-merging only weakens constraints (in the sense
that the new value $c$ is constrained less than either of the values
$a,b$ it replaces), it cannot destroy singleton arc consistency;
hence $I'$ is SAC. By Lemma~\ref{lem:V-}, it suffices to show that
V$^{-}$ cannot occur in $I'$ at variables of degree three or greater.

\thicklines \setlength{\unitlength}{1.5pt}
\begin{figure}
\centering
\begin{picture}(220,90)(0,0)

\put(60,0){\begin{picture}(100,90)(10,0)
\put(20,70){\makebox(0,0){$\bullet$}} \put(20,70){\oval(18,22)}
\put(60,30){\makebox(0,0){$\bullet$}}
\put(60,10){\makebox(0,0){$\bullet$}}
\put(60,50){\makebox(0,0){$\bullet$}} \put(60,30){\oval(18,58)}
\put(100,70){\makebox(0,0){$\bullet$}} \put(100,70){\oval(18,22)}
\put(100,10){\makebox(0,0){$\bullet$}} \put(100,10){\oval(18,22)}
\dashline{3}(20,70)(60,50) \dashline{3}(60,30)(100,70)
\dashline{3}(60,10)(100,10) \put(20,70){\line(1,-1){40}}
\put(60,50){\line(2,1){40}} \put(15,70){\makebox(0,0){$c$}}
\put(60,25.5){\makebox(0,0){$b$}} \put(60,6){\makebox(0,0){$e$}}
\put(60,46){\makebox(0,0){$a$}} \put(105,70){\makebox(0,0){$d$}}
\put(105,10){\makebox(0,0){$f$}} \put(100,24){\makebox(0,0){$w$}}
\put(100,56){\makebox(0,0){$z$}} \put(60,63){\makebox(0,0){$x$}}
\put(20,55){\makebox(0,0){$y$}} \put(20,70){\line(1,0){80}}
\end{picture}}

\end{picture}
\caption{The pattern
V$^{-}$ cannot occur at $a$ if \PatternQThree\ does not occur in the instance.}
\label{fig:Q3pf}
\end{figure}
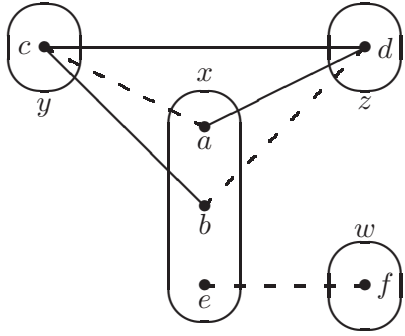

Consider an arbitrary point $a \in D(x)$ at a variable $x$ which is
of degree three or greater. We will show that V$^{-}$ cannot occur at
$a$, which will complete the proof. If $a$ belongs to no negative
edge then clearly V$^{-}$ cannot occur at $a$. The existence of a
negative edge and the (singleton) arc consistency of $I$ implies that
there there is some other value $b \in D(x)$. Since $a,b$ cannot be
BTP-merged, there must be other variables $y,z$ and values $c \in
D(y)$, $d \in D(z)$ with $ad,bc,cd$ positive edges and $ac,bd$
negative edges, as shown in Figure~\ref{fig:Q3pf}. Now since \PatternQThree\
does not occur in $I$, we can deduce that $a$ and $b$ are connected
by positive edges to all points in $D(v)$ for $v \notin \{x,y,z\}$.
Since $x$ is of degree three or greater, there must therefore be
another point $e \in D(x) \setminus \{a,b\}$ and a negative edge $ef$
where $f \in D(w)$ for some $w \notin \{x,y,z\}$ (as shown in
Figure~\ref{fig:Q3pf}). By applying the same argument as above,
knowing that $a,e$ cannot be BTP-merged, we can deduce that $a$ and
$e$ are connected by positive edges to all points in $D(v)$ for $v
\notin \{x,y,w\}$. Hence, $a$ can only be connected by negative edges
to points in $D(y)$. It follows that the pattern V$^{-}$ cannot occur
at $a$, which completes the proof.
\end{proof}

That only leaves \PatternRFive. Removing constraints cannot introduce \PatternRFive\ because it is a monotone pattern, so we can apply repeatedly the following lemma to obtain our last result.

\thicklines \setlength{\unitlength}{1.5pt}
\begin{figure}
\centering
\begin{picture}(160,230)(0,0)

\put(0,120){
\begin{picture}(210,100)(0,0)
\put(20,90){\makebox(0,0){$\bullet$}}
\put(20,70){\makebox(0,0){$\bullet$}} \put(20,80){\oval(18,38)}
\put(60,30){\makebox(0,0){$\bullet$}}
\put(60,10){\makebox(0,0){$\bullet$}} \put(60,20){\oval(18,38)}
\put(100,90){\makebox(0,0){$\bullet$}}
\put(100,70){\makebox(0,0){$\bullet$}} \put(100,80){\oval(18,38)}
\put(140,10){\makebox(0,0){$\bullet$}} \put(140,20){\oval(18,38)}
\dashline{3}(20,70)(100,70) \dashline{3}(20,90)(60,10)
\put(20,70){\line(4,1){80}}
\put(20,90){\line(4,-1){80}} \put(20,70){\line(2,-3){40}}
\put(100,70){\line(2,-3){40}}
\put(15,70){\makebox(0,0){$a$}} \put(15,90){\makebox(0,0){$c$}}
\put(105,70){\makebox(0,0){$b$}} \put(105,90){\makebox(0,0){$d$}}
\put(55,10){\makebox(0,0){$e$}} \put(145,10){\makebox(0,0){$f$}}
\put(55,30){\makebox(0,0){$g$}} \put(20,70){\line(1,-1){40}} \put(60,30){\line(2,3){40}}
\put(9,60){\makebox(0,0){$x$}} \put(111,60){\makebox(0,0){$y$}}
\put(49,0){\makebox(0,0){$w$}} \put(151,0){\makebox(0,0){$z$}}
\put(180,50){\makebox(0,0){\shortstack{$a = s[x]$ \\ $b = s[y]$ \\ $e = s[w]$ \\ $f = s[z]$}}}
\put(10,5){\makebox(0,0){(a)}}
\end{picture}}

\put(0,0){
\begin{picture}(210,100)(0,0)
\put(20,90){\makebox(0,0){$\bullet$}}
\put(20,70){\makebox(0,0){$\bullet$}} \put(20,80){\oval(18,38)}
\put(60,10){\makebox(0,0){$\bullet$}} \put(60,20){\oval(18,38)}
\put(100,90){\makebox(0,0){$\bullet$}}
\put(100,70){\makebox(0,0){$\bullet$}} \put(100,80){\oval(18,38)}
\put(140,10){\makebox(0,0){$\bullet$}} \put(140,20){\oval(18,38)}
\dashline{3}(20,70)(100,70) \dashline{3}(20,90)(60,10)
\put(20,70){\line(4,1){80}}
\put(20,90){\line(4,-1){80}} \put(20,70){\line(2,-3){40}}
\put(100,70){\line(2,-3){40}}
\put(15,70){\makebox(0,0){$d$}} \put(15,90){\makebox(0,0){$b$}}
\put(105,70){\makebox(0,0){$e$}} \put(105,90){\makebox(0,0){$g$}}
\put(55,10){\makebox(0,0){$a$}} \put(145,10){\makebox(0,0){$f$}}
\put(9,60){\makebox(0,0){$y$}} \put(111,60){\makebox(0,0){$w$}}
\put(49,0){\makebox(0,0){$x$}} \put(151,0){\makebox(0,0){$z$}}
\put(180,50){\makebox(0,0){\shortstack{$a = s[x]$ \\ $b = s[y]$ \\ $e = s[w]$ \\ $f = s[z]$}}}
\put(10,5){\makebox(0,0){(b)}}
\end{picture}}

\end{picture}
\caption{Since the pattern \PatternRFive\ does not occur in $I$, we can deduce that
(a) $df$ is a positive edge, and (b) $gf$ is a positive edge.} \label{fig:R5proof}
\end{figure}
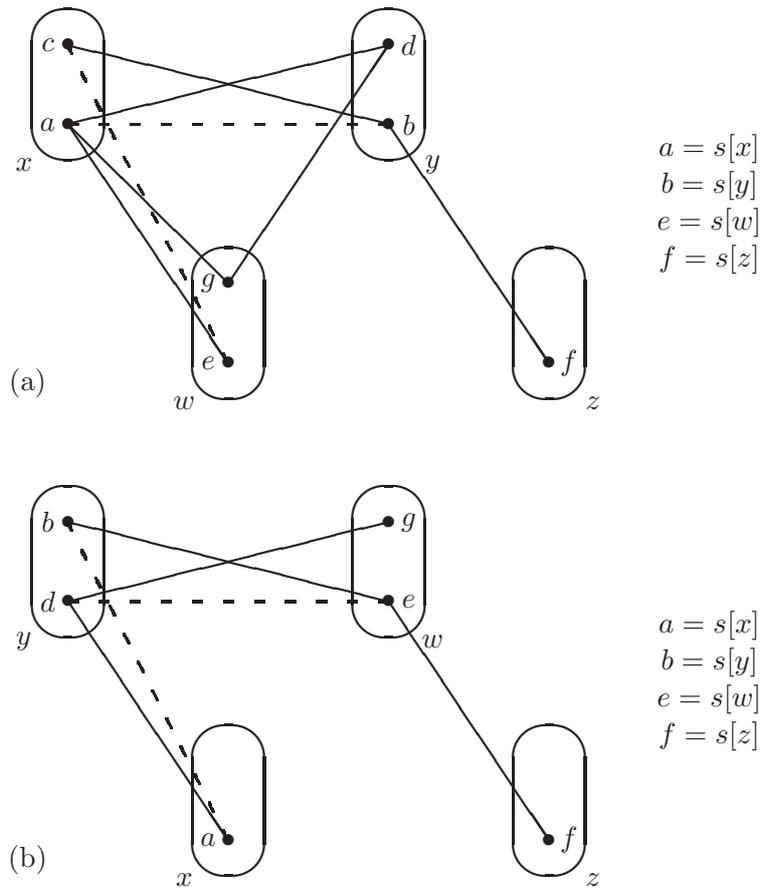

\begin{lemma} 
\label{lem:R5}
If the pattern \PatternRFive\ does not occur in a singleton arc consistent
binary CSP instance $I$, then removing any
constraint leaves the satisfiability of $I$ invariant.
\end{lemma}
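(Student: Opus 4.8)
One of the two implications is immediate: deleting a constraint can never destroy a solution, so if $I$ is satisfiable then so is the instance obtained by removing the constraint. For the converse, write $I'$ for the instance obtained from $I$ by deleting the constraint $R(x,y)$, suppose $I'$ has a solution $s$, and set $a=s(x)$, $b=s(y)$. If $ab$ is a positive edge then $s$ already solves $I$, so assume $ab$ is negative. The plan is to repair $s$ by reassigning $y$ — and, if that is not enough, a short chain of further variables — using the non-occurrence of \PatternRFive\ to show that the repair goes through.

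Since $I$ is (singleton) arc consistent, $a$ has a support $d$ at $y$ and $b$ has a support $c$ at $x$, with $c\neq a$ because $ab$ is negative. I would first try the assignment that agrees with $s$ everywhere except that $y$ is moved to $d$; this satisfies $R(x,y)$ and every constraint not incident to $y$, so the only constraints that can now fail are those $R(y,z)$ with $z\notin\{x,y\}$ for which $df$ is a negative edge, where $f=s(z)$. For such a $z$ we automatically have $bf$ positive, since $s$ solves $I'$ and $R(y,z)$ was not removed. Hence the points $a,c\in D(x)$, $b,d\in D(y)$, $f\in D(z)$, together with the edges $ab$ (negative), $ad$ (positive), $cb$ (positive), $bf$ (positive), $df$ (negative), already realise all of \PatternRFive\ except for its degree-one endpoint.

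Supplying that missing endpoint is the heart of the matter, and I expect it to be the main obstacle. The idea is to pull arc-consistency supports of $a$, $b$, $d$ and $f$ at their neighbours into the picture and to invoke the absence of \PatternRFive\ repeatedly to force compatibilities among them, organised as a small case analysis on where the auxiliary variable of each forced occurrence of \PatternRFive\ can be located (for instance at a support of $b$ at $x$, at a support of $d$, or at a genuinely new variable $w$). In the decisive case one reads off an occurrence of \PatternRFive\ on $(w,x,y,z)$ exactly as in Figure~\ref{fig:R5proof}(a), forcing $df$ to be positive after all, so that the single reassignment of $y$ already solves $I$; in the complementary case one is driven to also move the offending neighbour $w$ to a support $g$ of $d$, and the companion argument of Figure~\ref{fig:R5proof}(b) forces $g$ to be compatible with the assignment of the next variable along. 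A final short argument — that this propagation never revisits a variable, since a revisit would close up an occurrence of \PatternRFive\ — then shows the process terminates at a genuine solution of $I$. The delicate part throughout is keeping the auxiliary variable of every invoked \PatternRFive\ outside the set of variables already fixed; the two configurations drawn in Figure~\ref{fig:R5proof} are precisely the cases on which the argument hinges, and once they are settled the lemma follows, after which Lemma~\ref{lem:V-} has already been established and the section is complete.
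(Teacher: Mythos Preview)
Your outline is in the right spirit and you correctly anticipate that the absence of \PatternRFive\ is invoked twice, exactly as in the two configurations of Figure~\ref{fig:R5proof}. But there is a genuine gap in how you supply the fourth variable for those occurrences, and it stems from using only arc consistency where the paper uses \emph{singleton} arc consistency.

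Concretely: you take $d$ to be an AC support of $a$ at $y$, try $y\to d$, and when this fails at some $z$ you have three of the four variables of \PatternRFive\ but no fourth. You say the missing endpoint will be ``pulled in'' by case analysis, but there is no reason any variable $w$ with a point $e$ satisfying $ea$ positive and $ec$ negative should exist. The paper avoids this by reversing the order: it first tries $x\to c$; the failure of \emph{that} attempt produces a witness $w$ with $ce$ negative (and $ae$ positive since $a=s(x)$), and $w$ is then the fourth variable for every application of Figure~\ref{fig:R5proof}(a). More importantly, the paper then invokes SAC on the assignment $x=a$ to obtain $d\in D(y)$ and $g\in D(w)$ with $ad$, $ag$, $dg$ \emph{all} positive simultaneously. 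Your $g$, chosen merely as an AC support of $d$ at $w$, need not be compatible with $a$, and without $ag$ positive the final reassignment $y\to d$, $w\to g$ does not give a solution. I do not see how to force $ag$ positive from \PatternRFive-freeness alone once you are in that situation.

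Two smaller points. The paper's argument terminates after reassigning at most two variables; your talk of a propagation that ``never revisits a variable'' suggests you anticipate a longer chain, which is unnecessary once SAC is used properly. And your closing reference to Lemma~\ref{lem:V-} is misplaced: that lemma concerns \PatternQThree, not \PatternRFive; here the theorem follows from this lemma simply by repeatedly deleting constraints.
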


\begin{proof}
Suppose that the pattern \PatternRFive\ does not occur in the instance $I$ and
that $I$ is singleton arc consistent. Let $I'$ be the instance which
results when we eliminate the constraint between an arbitrary pair of variables $x$ and
$y$. Suppose that $s$ is a solution to $I'$. It suffices to exhibit a
solution to $I$. We use $s[z]$ to denote the value assigned to
variable $z$ in $s$. Let $a=s[x]$ and $b=s[y]$. If $ab$ is a positive edge in $I$
than $s$ is also a solution to $I$, so we assume that $ab$ is a negative edge.

By (singleton) arc consistency, there exists $c \in D(x)$ such that $bc$ is
a positive edge. Either we can replace $a$ in $s$ by $c$ to produce a
solution to $I$, or there is some variable $w \notin \{x,y\}$ such
that $ce$ is a negative edge where $e = s[w]$. By singleton arc
consistency, there exist $d \in D(y)$ and $g \in D(w)$ such that
$ad$, $ag$ and $dg$ are positive edges. Consider any variable $z
\notin \{x,y,w\}$. We have the situation in $I$ shown in
Figure~\ref{fig:R5proof}(a) where $f = s[z]$. The positive edges $ae$ and $bf$
follow from the fact that $s$ is a solution to $I'$. Now, since \PatternRFive\ does not
occur in $I$, we can deduce that $df$ is a positive edge.
Recall that the variable $z$ was any variable other than $x,y$ or $w$.

Since $d \in D(y)$ is compatible with $a=s[x]$,
we have just shown that $d$ can only be incompatible with $s[z]$ when
$z=w$. Thus, either we can replace $b$ by $d$ to produce a solution
to $I$, or $de$ is a negative edge. In this latter case, consider any
variable $z \notin \{x,y,w\}$ and again denote $s[z]$ by $f$. We have the situation in $I$ shown in
Figure~\ref{fig:R5proof}(b). The positive edges $be$ and $ef$
follow from the fact that $s$ is a solution to $I'$.
Since the pattern \PatternRFive\ does not occur in
$I$, we can deduce that $gf$ is a positive edge.
But then we can replace $b$ by $d$ and $e$ by $g$ in $s$ to produce a
solution to $I$.
\end{proof}

Note that Lemma~\ref{lem:R5} is technically true for all SAC-solvable patterns (not only \PatternRFive); this is simply the only case where we are able to prove it directly. 

\begin{theorem}
\label{thm:R5sac}
\CSP{\PatternRFive} is solved by singleton arc consistency.
\end{theorem}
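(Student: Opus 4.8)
The plan is to derive Theorem~\ref{thm:R5sac} from Lemma~\ref{lem:R5} by an induction on the number of non-trivial constraints, in the same spirit as the arguments already given for \PatternPThree, \PatternREight, \PatternRSevenMinus\ and \PatternQThree, but with the simplest possible shrinking operation: deleting a constraint. First I would take an instance $I \in$ \CSP{\PatternRFive} and enforce singleton arc consistency; since establishing SAC only removes values it cannot introduce any pattern, so the result is still in \CSP{\PatternRFive}. If enforcing SAC empties some domain then $I$ is unsatisfiable and SAC has decided it correctly, so it remains only to show that an instance in \CSP{\PatternRFive} which is SAC and has all domains non-empty is satisfiable.

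To run the induction I need two closure properties. Since \PatternRFive\ is a monotone pattern, deleting a constraint (replacing it by the trivial constraint) cannot create an occurrence of \PatternRFive, so \CSP{\PatternRFive} is closed under constraint deletion. Moreover, deleting a constraint only enlarges the sets of allowed tuples, so no value can lose a support; hence any singleton subinstance $I_{xv}$ of the smaller instance has a componentwise larger arc-consistent closure than that of $I$, and so constraint deletion preserves singleton arc consistency. Consequently, if $I'$ is obtained from a SAC instance $I \in$ \CSP{\PatternRFive} by deleting one constraint, then $I'$ is again SAC and lies in \CSP{\PatternRFive}, and Lemma~\ref{lem:R5} guarantees that $I$ and $I'$ have the same satisfiability.

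Then I would simply iterate: starting from $I$, repeatedly pick any non-trivial constraint and delete it; by the above, each step preserves membership in \CSP{\PatternRFive}, singleton arc consistency, and satisfiability. After finitely many steps every constraint is trivial, and an instance with only trivial constraints and non-empty domains is satisfiable, as one may assign each variable an arbitrary value of its domain. Hence $I$ is satisfiable, which is what we wanted. The step I expect to be the main obstacle is essentially already discharged inside Lemma~\ref{lem:R5}; the only remaining subtlety in the wrap-up is the (routine but necessary) verification that constraint deletion preserves singleton arc consistency, so that Lemma~\ref{lem:R5} may legitimately be reapplied at every step of the induction rather than only once. As noted in the remark following Lemma~\ref{lem:R5}, this constraint-deletion strategy is available for \PatternRFive\ precisely because we were able to establish the deletion invariance directly, whereas for a general SAC-solvable pattern only the conclusion, and not this particular proof route, is known to hold.
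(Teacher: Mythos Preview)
Your proposal is correct and follows essentially the same approach as the paper's own proof: reduce to a SAC instance in \CSP{\PatternRFive} with non-empty domains, then repeatedly delete constraints, using Lemma~\ref{lem:R5} for satisfiability invariance, monotonicity of \PatternRFive\ for closure under deletion, and the observation that deletion cannot destroy SAC, until the instance is constraint-free and trivially satisfiable. If anything, your write-up is slightly more explicit than the paper's in justifying why constraint deletion preserves singleton arc consistency.
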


\begin{proof}
Establishing singleton arc consistency preserves satisfiability and cannot introduce any pattern
and, hence in particular, cannot introduce \PatternRFive.
Consider a SAC instance $I \in$ \CSP{\PatternRFive} which has
non-empty domains.
By Lemma~\ref{lem:R5}, we can eliminate any constraint.
The resulting instance is still SAC. Furthermore, \PatternRFive\ has not been introduced since \PatternRFive\
is monotone. Therefore, we can keep on eliminating constraints until all
constraints have been eliminated. The resulting instance is trivially
satisfiable and hence so was the original instance $I$.
It follows that singleton arc consistency decides all instances in \CSP{\PatternRFive}.
\end{proof}

\section{A Necessary Condition for Solvability by SAC}
\label{sec:CounterExamples}

\thicklines \setlength{\unitlength}{1.5pt}
\begin{figure}
\centering
\begin{picture}(250,195)(0,0)

\put(0,0){
\begin{picture}(150,195)(0,-5)

\put(20,20){\oval(18,38)} \put(20,70){\oval(18,38)}
\put(20,120){\oval(18,38)} \put(20,170){\oval(18,38)}
\put(120,35){\oval(18,48)} \put(120,95){\oval(18,48)}
\put(120,155){\oval(18,48)} \put(20,10){\makebox(0,0){$\bullet$}}
\put(20,20){\makebox(0,0){$\bullet$}}
\put(20,30){\makebox(0,0){$\bullet$}}
\put(20,60){\makebox(0,0){$\bullet$}}
\put(20,70){\makebox(0,0){$\bullet$}}
\put(20,80){\makebox(0,0){$\bullet$}}
\put(20,110){\makebox(0,0){$\bullet$}}
\put(20,120){\makebox(0,0){$\bullet$}}
\put(20,130){\makebox(0,0){$\bullet$}}
\put(20,160){\makebox(0,0){$\bullet$}}
\put(20,170){\makebox(0,0){$\bullet$}}
\put(20,180){\makebox(0,0){$\bullet$}}
\put(120,20){\makebox(0,0){$\bullet$}}
\put(120,30){\makebox(0,0){$\bullet$}}
\put(120,40){\makebox(0,0){$\bullet$}}
\put(120,50){\makebox(0,0){$\bullet$}}
\put(120,80){\makebox(0,0){$\bullet$}}
\put(120,90){\makebox(0,0){$\bullet$}}
\put(120,100){\makebox(0,0){$\bullet$}}
\put(120,110){\makebox(0,0){$\bullet$}}
\put(120,140){\makebox(0,0){$\bullet$}}
\put(120,150){\makebox(0,0){$\bullet$}}
\put(120,160){\makebox(0,0){$\bullet$}}
\put(120,170){\makebox(0,0){$\bullet$}}
\put(0,170){\makebox(0,0){$x_1$}} \put(0,120){\makebox(0,0){$x_2$}}
\put(0,70){\makebox(0,0){$x_3$}} \put(0,20){\makebox(0,0){$x_4$}}
\put(140,155){\makebox(0,0){$y_1$}}
\put(140,95){\makebox(0,0){$y_2$}} \put(140,35){\makebox(0,0){$y_3$}}

\dashline{5}(20,180)(120,160) \dashline{5}(20,180)(120,150)
\dashline{5}(20,180)(120,140) \dashline{5}(20,170)(120,100)
\dashline{5}(20,170)(120,90) \dashline{5}(20,170)(120,80)
\dashline{5}(20,160)(120,40) \dashline{5}(20,160)(120,30)
\dashline{5}(20,160)(120,20)

\dashline{5}(20,130)(120,170) \dashline{5}(20,130)(120,150)
\dashline{5}(20,130)(120,140) \dashline{5}(20,120)(120,110)
\dashline{5}(20,120)(120,90) \dashline{5}(20,120)(120,80)
\dashline{5}(20,110)(120,50) \dashline{5}(20,110)(120,30)
\dashline{5}(20,110)(120,20)

\dashline{5}(20,80)(120,170) \dashline{5}(20,80)(120,160)
\dashline{5}(20,80)(120,140) \dashline{5}(20,70)(120,110)
\dashline{5}(20,70)(120,100) \dashline{5}(20,70)(120,80)
\dashline{5}(20,60)(120,50) \dashline{5}(20,60)(120,40)
\dashline{5}(20,60)(120,20)

\dashline{5}(20,30)(120,170) \dashline{5}(20,30)(120,160)
\dashline{5}(20,30)(120,150) \dashline{5}(20,20)(120,110)
\dashline{5}(20,20)(120,100) \dashline{5}(20,20)(120,90)
\dashline{5}(20,10)(120,50) \dashline{5}(20,10)(120,40)
\dashline{5}(20,10)(120,30)

\put(70,0){\makebox(0,0){(a)}}
\end{picture}}

\put(180,80){
\begin{picture}(70,50)(0,0)
\put(10,25){\oval(18,48)} \put(60,25){\oval(18,48)}
\put(10,55){\makebox(0,0){$x_1$}} \put(60,55){\makebox(0,0){$x_2$}}
\put(10,10){\makebox(0,0){$\bullet$}} \dashline{4}(10,40)(60,30)
\put(10,20){\makebox(0,0){$\bullet$}} \dashline{4}(10,40)(60,20)
\put(10,30){\makebox(0,0){$\bullet$}} \dashline{4}(10,40)(60,10)
\put(10,40){\makebox(0,0){$\bullet$}} \dashline{4}(10,30)(60,40)
\put(60,10){\makebox(0,0){$\bullet$}} \dashline{4}(10,20)(60,40)
\put(60,20){\makebox(0,0){$\bullet$}} \dashline{4}(10,10)(60,40)
\put(60,30){\makebox(0,0){$\bullet$}}
\put(60,40){\makebox(0,0){$\bullet$}} \put(35,0){\makebox(0,0){(b)}}
\end{picture}}

\end{picture}
\caption{(a) The instance $I_{3,4}$ which is SAC but has no solution.
(b) The constraint between variables $x_1$ and $x_2$ in instance
$I_5$.} \label{fig:I34}
\end{figure}
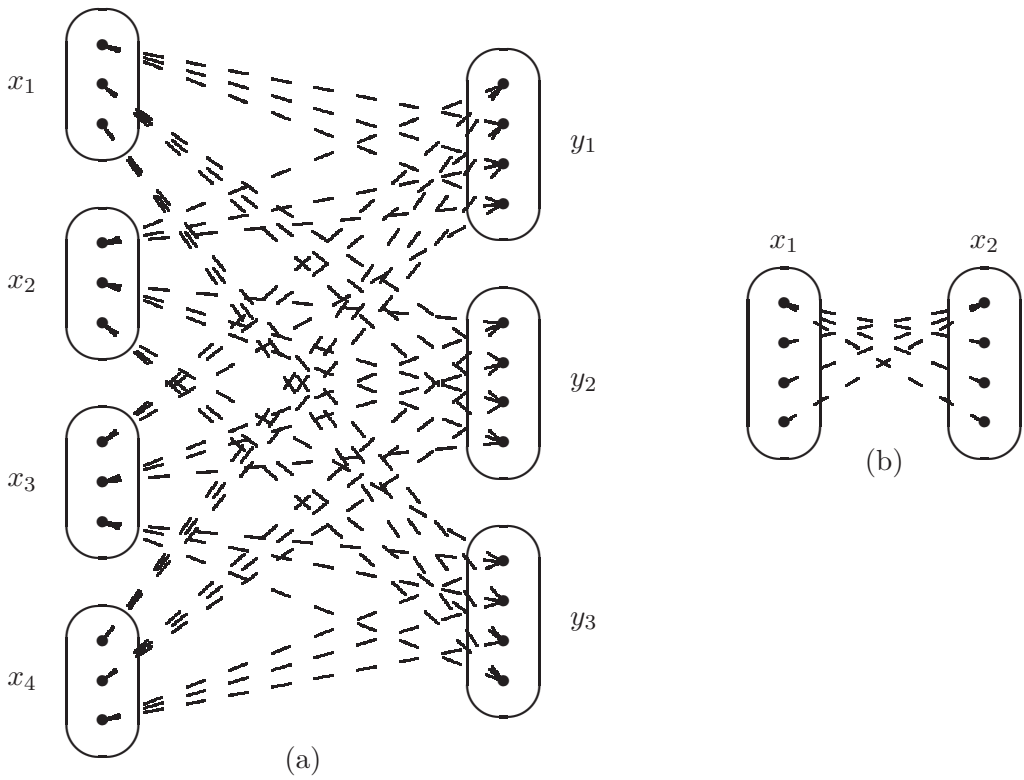

In order to establish some basic properties of patterns solvable by
SAC, we first show that several small patterns are not solvable by
SAC. In order to do this, we consider the following instances:
\begin{itemize}
\item $I_{4}^{3COL}$: \ corresponds to 3-colouring the
complete graph on 4 vertices, i.e. four variables $x_1,\ldots,x_4$
with domains $D(x_i) = \{1,2,3\}$ ($i=1,\ldots,4$) and the six
inequality constraints: $x_i \neq x_j$ ($1 \leq i < j \leq 4$).
\item $I_{3,4}$: \ corresponds to an alternative encoding of
3-colouring the complete graph on 4 vertices: three new variables
$y_1,y_2,y_3$ are introduced such that $y_j=i$ if variable $x_i$ is
assigned colour $j$. There are now seven variables ($x_1$, $x_2$,
$x_3$, $x_4$, $y_1$, $y_2$, $y_3$) with domains $D(x_i) = \{1,2,3\}$
($i=1,2,3,4$), $D(y_i) = \{1,2,3,4\}$ ($i=1,2,3$) and constraints
$(x_i = j) \Rightarrow (y_j = i)$ ($i=1,2,3,4$; $j=1,2,3$). $I_{3,4}$
is shown in Figure~\ref{fig:I34}(a) (in which only negative edges are
shown so as not to clutter up the figure).
\item $I_5$: \ five variables ($x_1,\ldots,x_5$) each with domain $\{1,2,3,4\}$ and
the constraints $(x_i = j-1)$ $\Leftrightarrow$ $(x_j=i)$ for all
$i,j$ such that $1 \leq i < j \leq 5$. One constraint of this
instance is shown in Figure~\ref{fig:I34}(b) (again only negative
edges are shown).
\end{itemize}
It is tedious but easy to verify that each of these instances has no
solution and is singleton arc consistent. Any pattern which is
solvable by SAC must therefore occur in each of these instances.
Consider the patterns shown in Figure~\ref{figNotSACpatterns}.
The patterns \PatternTOne\ and M3 do not occur in $I_{3,4}$. The pattern
Trestle does not occur in $I_5$. It therefore follows that \PatternTOne, M3 and Trestle are not solvable by SAC. Note that while M3 and Trestle are known to be NP-hard~\cite{cccms12:jair, Cooper15:dam}, the pattern T1 is tractable (but not SAC-solvable, by the argument above)~\cite{Cooper15:dam}.

\thicklines \setlength{\unitlength}{1.5pt}
\begin{figure}
\centering
\begin{picture}(240,90)(0,0)

\put(0,10){\begin{picture}(100,70)(0,0)
\put(10,60){\makebox(0,0){$\bullet$}} \dashline{5}(10,60)(50,40)
\dashline{5}(50,40)(90,60) \put(90,60){\makebox(0,0){$\bullet$}}
\put(10,60){\line(4,-3){40}} \dashline{5}(50,20)(80,10)
\put(50,20){\makebox(0,0){$\bullet$}} \put(50,20){\line(1,1){40}}
\put(50,30){\line(3,-2){30}} \put(50,30){\makebox(0,0){$\bullet$}}
\put(50,40){\makebox(0,0){$\bullet$}}
\put(80,10){\makebox(0,0){$\bullet$}} \put(10,60){\oval(18,18)}
\put(50,30){\oval(18,38)} \put(80,10){\oval(18,18)}
\put(90,60){\oval(18,18)}

\put(10,10){\makebox(0,0){M3}}
\end{picture}}

\put(140,60){\begin{picture}(70,30)(-20,0)
\put(10,10){\makebox(0,0){$\bullet$}}
\put(10,20){\makebox(0,0){$\bullet$}}
\put(40,10){\makebox(0,0){$\bullet$}} \put(10,10){\line(1,0){30}}
\put(40,20){\makebox(0,0){$\bullet$}} \put(10,10){\line(3,1){30}}
\put(10,20){\line(3,-1){30}}  \dashline{4}(10,20)(40,20)
\put(10,15){\oval(18,28)} \put(40,15){\oval(18,28)}

\put(-14,12){\makebox(0,0){Trestle}}
\end{picture}}

\put(130,0){\begin{picture}(100,40)(0,0)
\put(10,30){\makebox(0,0){$\bullet$}} \dashline{4}(10,30)(50,20)
\dashline{4}(50,20)(90,30) \put(90,30){\makebox(0,0){$\bullet$}}
\put(10,30){\line(2,-1){40}} \put(50,10){\makebox(0,0){$\bullet$}}
\put(50,10){\line(2,1){40}} \put(50,20){\makebox(0,0){$\bullet$}}
\put(10,30){\oval(18,18)} \put(90,30){\oval(18,18)}
\put(50,15){\oval(18,28)}

\put(10,10){\makebox(0,0){\PatternTOne}}
\end{picture}}

\end{picture}
\caption{Patterns not solved by SAC.} \label{figNotSACpatterns}
\end{figure}
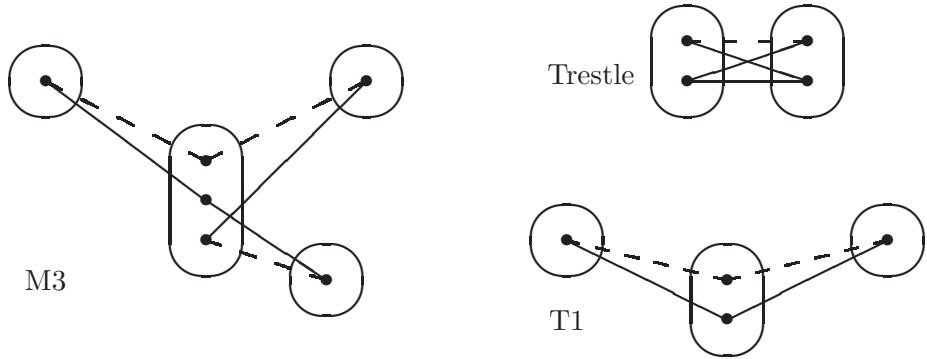

\begin{figure}
\centering

\begin{picture}(80,30)(0,0)
\put(10,15){\makebox(0,0){$\bullet$}}
\put(40,10){\makebox(0,0){$\bullet$}} \put(10,15){\line(6,1){30}}
\put(40,20){\makebox(0,0){$\bullet$}} \put(10,15){\line(6,-1){30}}
\put(10,15){\oval(18,18)} \put(40,15){\oval(18,28)}
\put(45,20){\makebox(0,0){$a$}} \put(45,10){\makebox(0,0){$b$}}
\put(70,15){\makebox(0,0){$a \neq b$}} \put(5,15){\makebox(0,0){$c$}}
\
\end{picture}

\caption{The pattern V.} \label{figV}
\end{figure}
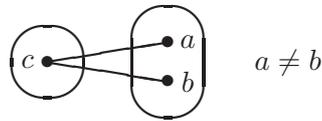

The \emph{constraint graph of a pattern} $P$ with variables $X$ is
the graph $G=(X,E)$ such that $(x,y)\in E$ if $P$ has a negative edge
between variables $x,y \in X$.

\begin{proposition} \label{prop:app}
A monotone irreducible pattern $P$ solvable by SAC satisfies:
\begin{enumerate}
\item None of the patterns \PatternTOne, M3 and Trestle occur in $P$.
\item $P$ has at most four variables. 
\item $P$ has at most one degree-3 variable and at most one non-trivial
constraint in which the pattern V, shown in Figure \ref{figV}, occurs
(with its centre point $c$ at a variable with domain at size at most
two), but does not have both a degree-3 variable and an occurrence of
V. Furthermore, $P$ has an acyclic constraint graph.
\item $P$ has at most one negative edge
per constraint, at most one point at which two negative edges meet (a
negative meet point) and no point at which three negative edges meet.
If $P$ has a negative meet point, then none of its variables has
domain size greater than two.
\end{enumerate}
\end{proposition}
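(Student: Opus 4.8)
The plan is to argue entirely through the following principle, stated just above: if $J$ is a binary CSP instance that is singleton arc consistent and has no solution, then $J\in$ \CSP{$P$} for every pattern $P$ that does not occur in $J$, so such a $P$ cannot be SAC-solvable. Equivalently, a SAC-solvable pattern must occur in \emph{every} SAC-unsatisfiable instance, and in particular cannot contain any pattern already known not to be SAC-solvable. The concrete tools are the three instances $I_{4}^{3COL}$, $I_{3,4}$ and $I_5$ (whose singleton arc consistency and unsatisfiability are routine to check), together with the already-established facts that \PatternTOne, M3 and Trestle (Figure~\ref{figNotSACpatterns}) are not SAC-solvable. Throughout we use that an occurrence of a pattern maps distinct variables to distinct variables, and that we may freely pass to the irreducible form of a pattern (merging points, deleting dangling points), since this does not change which arc consistent instances — in particular which SAC instances — the pattern occurs in, hence does not change SAC-solvability.

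Clause~1 is then immediate: if \PatternTOne, M3 or Trestle occurred in $P$, then $P$ would inherit their non-solvability. For the size bound in Clause~2, $P$ must occur in $I_{4}^{3COL}$, which has four variables, so $P$ has at most four variables. For connectedness, one first observes that monotonicity forces every positive edge of $P$ into a constraint already carrying a negative edge, so $P$ has no edge joining distinct connected components of its constraint graph; thus $P$ is the disjoint union of its components, each again monotone and irreducible, and it suffices to treat the connected case. For acyclicity, one embeds $P$ into $I_{3,4}$: its non-trivial constraints form a bipartite constraint graph in which every value at an $x_i$-variable is negative-incident to a \emph{single} $y_j$-variable. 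A cycle in the constraint graph of $P$ would map to a closed walk of negative edges in $I_{3,4}$ that alternates between the two sides; tracking which value of each variable is used forces, at some $x_i$-variable on the cycle, one point negative-incident to two different $y_j$'s, which is impossible — after first replacing $P$ by its irreducible form so that the relevant variables carry a single cycle point.

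For Clauses~3 and~4 the reasoning is finer, though part of Clause~3 is free: a tree on at most four vertices has at most one vertex of degree three, so $P$ has at most one degree-3 variable. The remaining statements — at most one constraint carrying the pattern V (Figure~\ref{figV}) at a variable of domain size at most two, not both a degree-3 variable and an occurrence of V, at most one negative edge per constraint, at most one negative meet point, no point in three negative edges, and all domains of size at most two whenever a negative meet point is present — are each obtained by isolating the minimal irreducible monotone sub-configuration that realises the forbidden feature and checking it fails to occur in one of $I_{4}^{3COL}$, $I_{3,4}$, $I_5$ (or already contains \PatternTOne, M3 or Trestle). For instance, two negative edges in one constraint of an irreducible monotone pattern yield the $2\times 2$ ``$\neq$-like'' sub-pattern (two parallel negative edges crossed by two positive edges), which does not occur in $I_{3,4}$ because an $x_i$-value meets negative edges to only one $y_j$; a point in three negative edges, or two negative meet points, yields a sub-pattern of M3; and a negative meet point sitting at a variable with three values yields a configuration absent from $I_5$, whose domains have size four.

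The main obstacle is precisely this last case analysis. One must enumerate, up to merging of points and deletion of dangling points, all irreducible monotone patterns on at most four variables that exhibit one of the forbidden features, and for each exhibit an embedding obstruction into one of the three reference instances. The two delicate points are: (i) exhaustiveness of the enumeration — in particular verifying that reduction to irreducible form does not quietly merge away a genuinely different case, and that a general irreducible $P$ carrying extra positive edges and extra points is still pinned down by the arguments above; and (ii) the conditions that couple the negative-edge structure with domain sizes, where the embedding must respect not just which edges are negative but how many values sit at each variable, and where $I_5$ does the work that the size-three-domain instances $I_{4}^{3COL}$ and $I_{3,4}$ cannot. Establishing singleton arc consistency of the three reference instances, asserted to be ``tedious but easy'', underpins the whole argument and should be recorded carefully.
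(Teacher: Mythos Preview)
Your overall principle --- a SAC-solvable pattern must occur in every SAC-unsatisfiable instance --- is the right one, and Clause~1 and the four-variable bound are fine. But your plan to extract \emph{all} of Clauses~2--4 from the three fixed instances $I_4^{3COL}$, $I_{3,4}$, $I_5$ does not go through, and this is precisely where the paper's proof diverges from yours.

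Concretely, your acyclicity argument via $I_{3,4}$ fails. You claim that a cycle in the constraint graph of an irreducible $P$ would force, after merging, a single point at some $x_i$ with negative edges to two different $y_j$'s. This is not so. Take the $4$-cycle pattern on variables $v_1,v_2,v_3,v_4$ with $|D(v_1)|=|D(v_3)|=2$, $|D(v_2)|=|D(v_4)|=1$, one negative and one positive edge in each of the four constraints, arranged so that at $v_1$ (and at $v_3$) the two points are distinguished by a positive/negative conflict and hence are \emph{not} mergeable. This pattern is monotone and irreducible, contains none of \PatternTOne, M3, Trestle, and it \emph{does} occur in $I_{3,4}$: map $v_1,v_3$ to $x$-variables and $v_2,v_4$ to $y$-variables, using two different values at each $x_i$ so that each value sees a single $y_j$ negatively. (It also occurs in $I_4^{3COL}$.) So the three reference instances cannot by themselves rule out cycles. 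Similar issues arise for ``at most one negative edge per constraint'' and the domain-size conditions: your reduction to the $2\times 2$ ``$\neq$-like'' sub-pattern presumes the two negative edges are endpoint-disjoint, and even when they are, extra positive edges in $P$ can make the required sub-pattern embed into the fixed instances.

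What the paper actually does, and what your proposal is missing, is to \emph{construct new} SAC-unsatisfiable instances tailored to each property. For acyclicity, for ``at most one degree-$3$ variable'', and for the V-related statements, it pads $I_4^{3COL}$ with chains of equality constraints so that any fixed-length cycle (resp.\ any short path joining two degree-$3$ variables, or a V to a degree-$3$ variable) cannot occur. For Clause~4 it replaces each inequality $x_i\neq x_j$ by a Boolean gadget (a chain of implications through fresh Boolean variables) producing a SAC-unsatisfiable instance in which every constraint has at most one negative edge, no point meets three negatives, and negative meet points are far from each other and from size-$3$ domains; a further tweak forces V to occur only at size-$\le 2$ centres. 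Your case analysis over the three fixed instances cannot substitute for these constructions.
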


\begin{proof}
The first property follows from the above discussion and the fact
that $Q$ occurs in $P$ implies that \CSP{Q} $\subseteq$ \CSP{P}.

Since a pattern $P$ which is solvable by SAC must occur in
$I_{4}^{3COL}$, $P$ can have at most four variables.
Since $P$ is also a monotone pattern, no constraint of $P$ contains only positive edges.
Since $P$ has at most four variables, either its constraint graph is connected or it is the union of
two one-constraint patterns. In the latter case, by irreducibility and because Trestle  cannot occur in $P$,
$P$ must be simply two negative edges between distinct variables (and hence all conditions of the proposition are trivially
satisfied). So we assume in the rest of the proof that the constraint graph of $P$ is connected.
Since $P$ has at most four variables, all of its variables are at a distance of at most
three in its constraint graph.

We now consider the third property.
By padding out $I_{4}^{3COL}$ with chains of equality constraints, it
is easy to produce a SAC instance which has no solution and in which
the pattern V does not occur in any non-trivial constraint at a
distance of three or less from a degree-3 variable. It follows that
no monotone irreducible pattern $P$ with a degree-3 variable and in
which the pattern V occurs is solvable by SAC. Using this same
padding-with-equality argument, we can also deduce that in a monotone
irreducible pattern $P$ solvable by SAC: there is at most one
degree-3 variable, there is at most one non-trivial constraint in
which the pattern V occurs, and that $P$ has no cycle in its
constraint graph (the latter following from the fact that cycles of
any \emph{fixed} length can be eliminated from an instance by padding out with chains
of equality constraints).

Finally, we consider the fourth property.
Each inequality constraint $x_i \neq x_j$ in $I_{4}^{3COL}$ can be
replaced by equivalent gadgets in which all constraints have at most
one negative edge~\cite{cccms12:jair}. The resulting instance is
still SAC. To be concrete, for each $i \in \{1,2,3,4\}$ and each $a
\in \{1,2,3\}$, we create 21 new Boolean variables $x_{ia}^{r}$
($r=0,1,\ldots,20$) linked to the $x_i$ variables and between
themselves by the following constraints: $(x_i=a) \Rightarrow
x_{ia}^{0}$ and $x_{ia}^{r} \Rightarrow x_{ia}^{r+1}$
($r=0,1,\ldots,19)$. 
If $x_i$ is assigned the value $a$, then all the variables $x_{ia}^{r}$ must
be assigned true. Each constraint $x_i \neq x_j$ ($1 \leq i < j \leq 4$) is then replaced by
the chain of constraints $x_{ia}^{4j} \Rightarrow y_{ija}^{1}$,
$y_{ija}^{1} \Rightarrow y_{ija}^{2}$, $y_{ija}^{2} \Rightarrow y_{ija}^{3}$,
$y_{ija}^{3} \Rightarrow \overline{x_{ja}^{4i}}$, where $y_{ija}^{s}$
($s=1,2,3$; $a=1,2,3$) are new boolean variables.
In the resulting instance $I$ there are no
points at which three negative edges meet, no two negative meet points
at a distance of three or less and no negative meet point at a distance of three
or less from a variable with domain size three.  We do not change
the semantics of $I$ (nor its singleton arc consistency) by replacing
the constraints $(x_i=a) \Rightarrow x_{ia}^{0}$ by $(x_i=a)
\Leftrightarrow x_{ia}^{0}$. In the resulting instance $I'$, no
pattern V (Figure \ref{figV}) occurs with its centre point $c$ at a
variable with domain size greater than two. We can deduce that a
monotone irreducible pattern $P$ solvable by SAC (since it contains no 
4-constraint path)
has at most one negative edge per constraint, at most one negative meet point,
no point at which three negative edges meet and the V
pattern only occurs in $P$ with its centre point $c$ at a variable
with domain of size at most two. Besides, $P$ cannot have both a
negative meet point and a variable with domain size three or more.
\end{proof}

Proposition~\ref{prop:app} allows us to narrow down monotone
irreducible patterns solvable by SAC to a finite number, which we can
summarize succinctly by the following proposition.

\begin{proposition} \label{prop:app2}
If $P$ is a monotone irreducible pattern solvable by SAC, then $P$
must occur in at least one of the patterns
\PatternPThree,\PatternQThree,R1,$\ldots$,R10 (shown in
Figures~\ref{fig:open3} and~\ref{fig:open2}).
\end{proposition}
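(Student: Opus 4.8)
The plan is to prove this by a finite case analysis driven entirely by the structural restrictions collected in Proposition~\ref{prop:app}. First I would invoke part~2 of that proposition: the constraint graph $G$ of $P$ is connected, acyclic and has at most four vertices, so $G$ is one of exactly four trees --- a single edge, the path on three vertices, the path on four vertices, or the star $K_{1,3}$. Since $P$ is monotone, every positive edge lies in a non-trivial constraint, and by part~4 every non-trivial constraint contains exactly one negative edge; hence the edges of $G$ are in bijection with the negative edges of $P$, and there are at most three of them. I would then split on the shape of $G$.

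The recurring sub-argument I would isolate first is that irreducibility forces every leaf of $G$ to be a single-point variable: a point at a degree-one variable lies in only one constraint, so if it bore no negative edge it would be dangling, and the unique negative edge of that constraint touches the leaf variable in just one point. More generally, any point with no incident negative edge lies in at least two constraints and hence carries at least two positive edges; together with the fact that $P$ has no mergeable points, this pins each inner domain down to a small bounded number of points for each of the four skeletons.

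For the star case ($G=K_{1,3}$) the centre is the unique degree-3 variable, so by part~3 the pattern V cannot occur anywhere in $P$; equivalently, no point is compatible with two distinct values of a neighbouring variable. Combining this with the singleton leaves and the ``at most one negative meet point'' clause of part~4, I would enumerate the few possible compatibility tables between the centre's points and its three neighbours and check directly that $P$ occurs in \PatternPThree\ or \PatternQThree. For the path cases ($G$ a path on $k$ vertices with $2\le k\le 4$, so every variable has degree at most two) the two endpoints are singletons and each inner variable has at most a handful of points; I would use part~4 (one negative edge per constraint, at most one negative meet point, no triple meet point, and domains of size at most two whenever a meet point is present), part~3 (at most one constraint in which V occurs), and part~1 (none of \PatternTOne, M3, Trestle occurs) to constrain the placement of the remaining positive edges, enumerate the surviving irreducible monotone patterns, and verify that each one occurs in one of \PatternROne, \ldots, \PatternRTen.

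The hard part will be the path case: it produces the most sub-cases, and for each inner variable one has to track simultaneously its domain size, which of its points are endpoints of negative edges, and exactly which positive edges it sends to each of its (at most two) neighbours, all while respecting irreducibility and every clause of Proposition~\ref{prop:app}. The payoff is that these restrictions are extremely tight, so that each one prunes the enumeration aggressively and, after the bookkeeping, the only patterns that survive are those occurring in one of \PatternPThree, \PatternQThree, \PatternROne, \ldots, \PatternRTen\ --- which is exactly the claim.
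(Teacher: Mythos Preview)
Your overall strategy---invoke Proposition~\ref{prop:app}, read off that the constraint graph is a tree on at most four vertices, split by shape, and enumerate---is the same as the paper's. The paper's degree-2 case is organised slightly differently (first by whether there is a negative meet point, then by maximum domain size, rather than by path length), but this is a matter of bookkeeping, not of method.

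There is, however, a genuine error in your ``recurring sub-argument''. You claim that every leaf of $G$ is a single-point variable, arguing that a point at a leaf with no incident negative edge would be dangling. That inference fails: dangling requires \emph{at most one} positive edge as well as no negative edge, and a leaf point can carry two or more positive edges to distinct values of its unique neighbour (this is precisely the pattern V centred at that point). Part~3 of Proposition~\ref{prop:app} does forbid V when there is a degree-3 variable, so your argument goes through in the star case, but in the path case it does not, and indeed both R2 and R6 have a leaf variable with two points. Your path-case enumeration, as sketched, would therefore miss these patterns. The fix is to allow leaf domains of size up to two in the path cases and to feed this into the enumeration alongside the ``at most one constraint in which V occurs'' clause; once you do that the analysis closes as in the paper.
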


\begin{proof}
We saw in the proof of Proposition~\ref{prop:app} that if $P$ does not have a connected constraint graph,
then $P$ is simply the union of two negative edges: in this case $P$ occurs in all the patterns R1,$\ldots$,R10.
So we assume from now on that $P$ has a connected constraint graph.
From Proposition~\ref{prop:app}, we can deduce that the constraint
graph of $P$ is a either a star or a chain, with at most four
vertices, and $P$ has at most one negative edge per constraint. 
Such patterns must have one of the following four descriptions, which we analyse separately.
\begin{description}
\item[$P$ has a single degree-3 variable. \ ]
The constraint graph of $P$ is necessarily a star. By
Proposition~\ref{prop:app}, the pattern V does not occur in $P$. 
From this and the fact that $P$ contains
no dangling points and no mergeable points, we can deduce that each of the three
degree-1 variables must have domain size 1. If the central degree-3
variable has domain size 3, then the fact that none of the patterns
V, \PatternTOne\ and M3 occur in $P$, and that there are no mergeable
points, implies that $P$ must be the pattern \PatternPThree. If, on
the other hand, the central variable has domain size 2, then since V
and \PatternTOne\ do not occur in $P$ and no three negative edges
meet at a point, we can deduce that $P$ must be \PatternQThree\ (or a subpattern).
\item[$P$ is of degree 2 and has a negative meet point. \ ]
By Proposition~\ref{prop:app}, $P$ has no domain of size greater than
2 and Trestle does not occur. It then follows by
Proposition~\ref{prop:app} and irreducibility of $P$ that the pattern
V cannot occur more than once (even in the same constraint). Since
$P$ has no dangling points, there are only four possible positions
where V could occur. We can only add a limited number of positive
edges without introducing \PatternTOne, Trestle, dangling points or
mergeable points. This gives rise to the four patterns R1,R2,R3,R4
(or subpatterns).
\item[$P$ is of degree 2, has no negative meet point and all domains have
size 1 or 2.] By the same argument as in the previous case, the
pattern V can occur at most once. By the absence of Trestle and
dangling points in $P$, and by symmetry, there are only two possible
positions for the pattern V in $P$, if it occurs at all. Again, we
can only add a limited number of positive edges without introducing
\PatternTOne, Trestle, dangling points or mergeable points. This
gives rise to the three patterns R5,R6,R10 (or subpatterns).
\item[$P$ is of degree 2, has no negative meet point and at least one size-3 domain.  ]
The fact that $P$ has no mergeable points and all variables have degree at most 2 implies that
no domain can be greater than size 3. Indeed, from the fact that $P$
is irreducible and that, by Proposition~\ref{prop:app}, no V can
occur centred at a variable of domain size 3, we can deduce that
there is exactly one variable with domain size 3. Adding positive
edges to ensure that no two points are mergeable at this variable
$v$, necessarily creates a V pattern. No other V can occur either in
a different constraint (by Proposition~\ref{prop:app}) or in the same
constraint otherwise we would have a V centred at $v$ or Trestle
would occur. Adding other positive edges, while satisfying the
properties of Proposition~\ref{prop:app}, produces patterns R7,R8,R9
(or subpatterns).
\end{description}
\end{proof}

\section{Conclusion}
\label{sec:Conclusion}

We have established SAC-solvability of five novel classes of binary CSPs defined
by a forbidden pattern, three of which are generalisations
of 2-SAT. For monotone patterns (defining classes of CSPs closed
under removing constraints), there remains only a relatively small number of  irreducible patterns whose
SAC-solvability is still open. In addition to settling the remaining patterns, a possible line of future work
is to study \emph{sets} of patterns or partially-ordered patterns~\cite{cz17:lmcs} 
that give rise to SAC-solvable (monotone) classes of CSPs.

\newcommand{\noopsort}[1]{}\newcommand{\Zivny}{\noopsort{ZZ}\v{Z}ivn\'y}

\appendix

\section{SAC-solvability of \PatternTThree}
\label{sec:T3}

Recall \PatternTThree\ from Figure~\ref{figT}.
This is the only maximal two-constraints tractable pattern whose
SAC-solvability is not determined by our main theorem,
Proposition~\ref{prop:app} and the results of~\cite{Cooper15:dam}.

\begin{theorem}
\label{thm:T3}
\CSP{\PatternTThree} is solved by singleton arc consistency.
\end{theorem}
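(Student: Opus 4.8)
The plan is to follow the same template as for \PatternQThree\ and \PatternRFive: find a reduction that preserves satisfiability and singleton arc consistency, cannot introduce the forbidden pattern, and makes the instance fall inside the scope of Lemma~\ref{lem:V-}. Here the reduction is \emph{deletion of neighbourhood-substitutable values}. Concretely, starting from $I\in$ \CSP{\PatternTThree}, first establish singleton arc consistency (this preserves satisfiability and, since it only deletes domain values, cannot create \PatternTThree), and then exhaustively remove neighbourhood-substitutable values; removing values preserves satisfiability and singleton arc consistency and, again because it only shrinks the set of points, can never produce a new occurrence of \emph{any} pattern. (Note that BTP-merging would not be safe here, since \PatternTThree\ has a point in two positive edges and so Lemma~\ref{lem:BTP} does not apply.) So we may assume $I$ is singleton arc consistent, lies in \CSP{\PatternTThree}, and has no neighbourhood-substitutable value; by Lemma~\ref{lem:V-} it then suffices to prove that in such an instance V$^{-}$ occurs only at variables of degree at most two, because $I$ is then a singleton arc consistent instance with non-empty domains.

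The heart of the proof is this claim. Suppose, for contradiction, that $x$ has degree at least three and V$^{-}$ occurs at some $c\in D(x)$, i.e.\ $c$ is incompatible with a value $b_y\in D(y)$ and with a value $b_z\in D(z)$ for two distinct neighbours $y\neq z$ of $x$; let $t$ be a third neighbour. By arc consistency $b_y$ has a support $c'\neq c$ in $D(x)$, and since $c$ is not neighbourhood-substitutable by $c'$ there is a point $e$, say $e\in D(w)$, with $ce$ positive and $c'e$ negative. The useful reformulation of \PatternTThree-freeness is: if $u_1,u_2\in D(X)$ and some point outside $X$ separates $u_1$ from $u_2$ (positive with $u_1$, negative with $u_2$), then at every variable $Z$ other than $X$ and the variable of that separating point, either $u_1$ is compatible with all of $D(Z)$ or $u_1$ and $u_2$ have no common support at $Z$. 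Applying this with $e$ as separating point (so $c$ plays the role of $u_1$) constrains the supports of $c$ and $c'$ at $y$ and $z$; applying it with $b_y$ as separating point (now $c'$ plays the role of $u_1$) constrains their behaviour at the remaining neighbours of $x$; and inspecting a support of $c$ at the third neighbour $t$ — split according to whether it is also a support of $c'$ — then yields an occurrence of \PatternTThree\ with $x$ as middle variable, the desired contradiction. Once V$^{-}$ is confined to degree-two variables, Lemma~\ref{lem:V-} produces a solution.

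The main obstacle is the case analysis in the second step: one must keep careful track of which of the variables $y$, $z$, $w$, $t$ may coincide, and in some branches it appears necessary to use singleton (rather than plain) arc consistency, i.e.\ to argue inside the instance obtained by fixing $x=c$ and propagating. The first step is routine, and the final appeal to Lemma~\ref{lem:V-} is immediate.
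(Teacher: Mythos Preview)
Your preprocessing (establish SAC, then exhaustively remove neighbourhood-substitutable values) matches the paper's, but from that point the paper takes a much shorter route. Rather than aiming for Lemma~\ref{lem:V-}, it reduces to the already-established SAC-solvability of \PatternTFour\ (a subpattern of \PatternREight, cf.\ Theorem~\ref{thm:R8sac}) by showing that in a \PatternTThree-free SAC instance with no neighbourhood-substitutable values, \PatternTFour\ cannot occur. The argument is a few lines. If \PatternTFour\ occurs on $(x,y,z)$ with $a_xa_y,a_xb_y,b_ya_z$ positive and $a_xc_y,a_ya_z$ negative, take a support $b_x$ of $c_y$ at $x$; then $b_xb_y$ must be positive (otherwise \PatternTThree\ appears with middle pair $b_y,a_y$ against $a_x,b_x$ at $x$ and $a_z$ at $z$), so $b_y$ dominates $c_y$ in $R(x,y)$. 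Since $c_y$ is not neighbourhood-substitutable there is some $b_w$ with $c_yb_w$ positive and $b_yb_w$ negative, and now \PatternTThree\ occurs on $(x,y,w)$ with middle pair $c_y,b_y$. Hence the instance lies in \CSP{\PatternTFour} and, being SAC, has a solution.

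Your plan of confining V$^{-}$ to degree-two variables and invoking Lemma~\ref{lem:V-} may well be completable, but as written it is not a proof: you yourself flag the case analysis on coincidences among $y,z,w,t$ (and the possible need for genuinely singleton arc consistency) as the ``main obstacle'' and do not carry it out, and nothing in the sketch pins down why splitting on the support of $c$ at the third neighbour $t$ must produce \PatternTThree\ in every branch. The paper's reduction to \PatternTFour\ sidesteps all of this with a two-step domination argument; the price is that it relies on the non-trivial SAC-solvability of \PatternTFour, whereas your intended route would be self-contained modulo Lemma~\ref{lem:V-}.
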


\begin{proof}
Let $I \in$ \CSP{\PatternTThree} be a singleton arc consistent
instance with no neighbourhood substitutable values. If
\PatternTFour\ does not occur then $I$ has a solution, so we examine the
case where \PatternTFour\ occurs on variables $(x,y,z)$ and values
$a_x,a_y,b_y,c_y,a_z$ with $a_xa_y,a_xb_y,b_ya_z$ being positive
edges and $a_xc_y,a_ya_z$ being negative edges. By arc consistency,
$c_y$ has a support $b_x$ at $x$ and because \PatternTThree\ does not
occur $b_xb_y$ is a positive edge. Observe that $b_y$ dominates $c_y$
in $R(x,y)$, and because neighbourhood substitutable values have been
removed there must exist a variable $w$ (possibly equal to $z$) and
$b_w \in D(w)$ such that $b_yb_w$ is a negative edge and $c_yb_w$ is
a positive edge. However, in this case \PatternTThree\ occurs on
$(x,y,w)$, so we obtain a contradiction. It follows that
\PatternTFour\ cannot occur in the instance, and hence $I$ has a
solution.
\end{proof}

\end{document}